\setlist[itemize]{topsep=4pt,itemsep=3pt,parsep=0pt}
\setlist[enumerate]{topsep=4pt,itemsep=3pt,parsep=0pt}
\newenvironment{claimproof}[1][\proofname]{%
  \begin{proof}[#1]%
}{%
  \end{proof}%
}
\newcommand{\funding}{The work of Mi. P. was supported by the project BOBR that is funded from the European Research Council (ERC) under the European Union’s Horizon 2020 research and innovation programme with grant agreement No. 948057.}
\title{Parameterized and approximation algorithms for coverings points with segments in the plane\footnote{\funding}} 
\date{\today}
\author{
  Katarzyna Kowalska \\
  \small{University of Warsaw}\\
  \small{kasia.kowalska36@gmail.com}
  \and
  Michał Pilipczuk \\
  \small{University of Warsaw} \\
  \small{michal.pilipczuk@mimuw.edu.pl}
}
\crefname{claim}{Claim}{Claims}
\crefname{figure}{Figure}{Figures}
\newtheorem{theorem}{Theorem}
\newtheorem{lemma}[theorem]{Lemma}
\newtheorem{claim}{Claim}
\theoremstyle{definition}
\newtheorem{definition}{Definition}
\theoremstyle{plain}
\theoremstyle{definition}
\newtheorem*{example*}{Example}
\newcommand{\points}{\mathcal{U}}
\newcommand{\sets}{\mathcal{F}}
\newcommand{\sol}{\mathcal{S}}
\newcommand{\Oh}{\mathcal{O}}
\renewcommand{\leq}{\leqslant}
\renewcommand{\geq}{\geqslant}
\renewcommand{\le}{\leqslant}
\renewcommand{\ge}{\geqslant}
\renewcommand{\setminus}{-}
\newcommand{\opt}{\mathsf{opt}}
\begin{document}
\maketitle

\begin{textblock}{20}(-1.75, 8.8)
\includegraphics[width=40px]{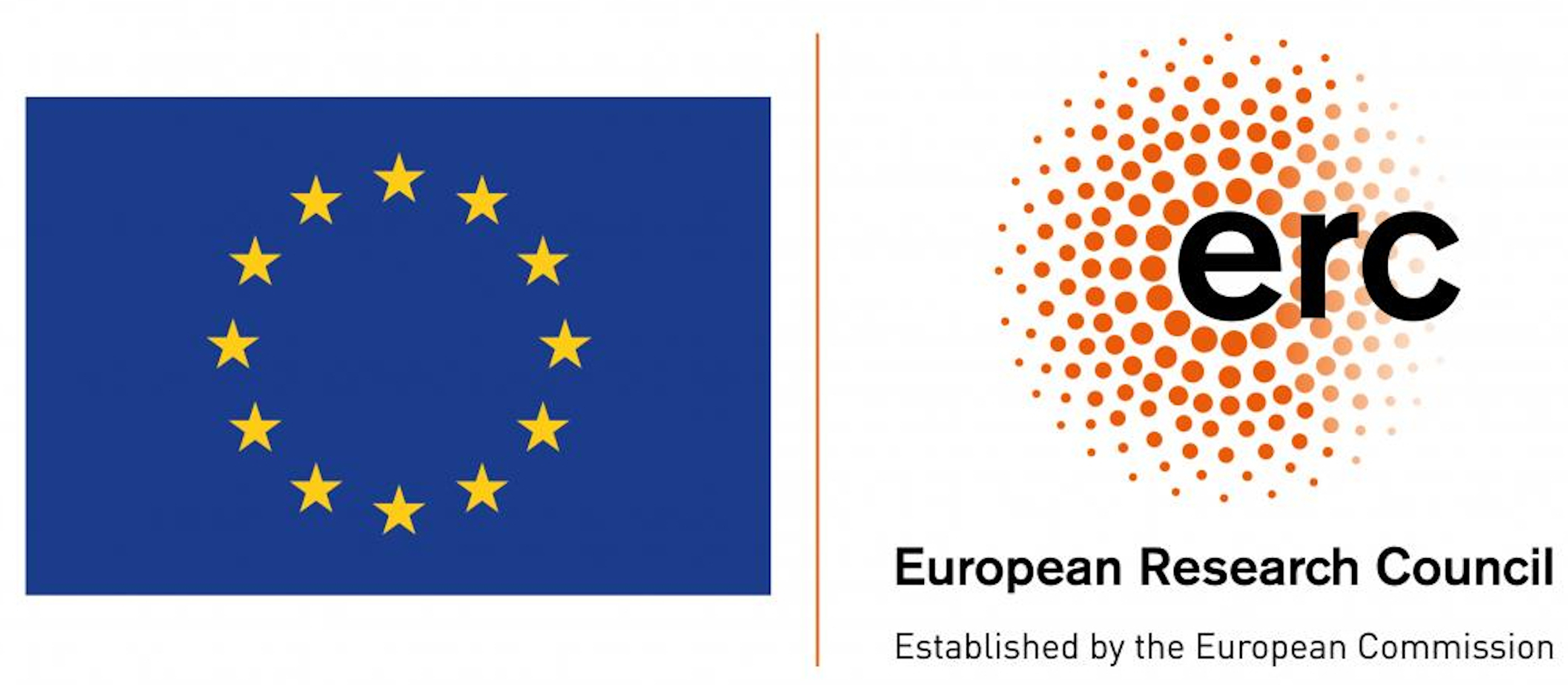}%
\end{textblock}

\newcommand{\SetCover}{\textsc{Set Cover}}
\newcommand{\GeometricSetCover}{\textsc{Geometric Set Cover}}
\newcommand{\SegmentSetCover}{\textsc{Segment Set Cover}}
\newcommand{\WeightedSegmentSetCover}{\textsc{Weighted Segment Set Cover}}
\newcommand{\WeightedGeometricSetCover}{\textsc{Weighted Geometric Set Cover}}
\newcommand{\true}{\texttt{true}}
\newcommand{\false}{\texttt{false}}

\thispagestyle{empty}

\begin{abstract}
	We study parameterized and approximation
	algorithms for a variant of $\SetCover$,
	where the universe of elements to be covered consists of points in the plane
	and the sets with which the points should be covered are segments.
	We call this problem $\SegmentSetCover$.
	We also consider a relaxation of the problem called {\em{$\delta$-extension}},
	where we need to cover the points by segments
	that are extended by a tiny fraction,
	but we compare the solution's quality to the optimum without~extension.

	For the unparameterized variant, we prove that $\SegmentSetCover$ does not admit a PTAS unless $\mathsf{P}=\mathsf{NP}$,
	even if we restrict segments to be axis-parallel
	and allow $\frac{1}{2}$-extension. On the other hand, we show that parameterization helps for the tractability of $\SegmentSetCover$: we give an FPT algorithm for unweighted $\SegmentSetCover$
	parameterized by the solution size $k$, a parameterized approximation scheme for $\WeightedSegmentSetCover$ with $k$ being the parameter, and an FPT algorithm for $\WeightedSegmentSetCover$ with $\delta$-extension parameterized by $k$ and $\delta$. In the last two results, relaxing the problem is probably necessary: we prove that
	\textsc{Weighted} \textsc{Segment} \textsc{Set} \textsc{Cover} without any relaxation
	is $\mathsf{W}[1]$-hard and, assuming ETH, there does not exist an algorithm running
	in time $f(k)\cdot n^{o(k / \log k)}$. This holds even if one restricts attention to axis-parallel segments.
\end{abstract}


\newpage

\section{Introduction}

\setcounter{page}{1}

In the classic \textsf{$\SetCover$} problem, we are given a set of elements (universe)
$\points$ and~a~family of sets $\sets$ that are subsets of $\points$ and sum up to the whole $\points$.
The~task is to find a subfamily $\sol \subseteq \sets$
such that $\bigcup \sol = \points$ and the size of $\sol$ is minimum possible.

In the most general form, $\SetCover$ is $\mathsf{NP}$-complete,
inapproximable within factor $(1-\delta)\ln |\points|$ for any $\delta>0$ assuming $\mathsf{P}\neq \mathsf{NP}$~\cite{set_cover_inapproximation},
and $\mathsf{W}[2]$-complete with the natural parameterization by the solution size~\cite[Theorem 13.21]{platypus_book}.
However, restricting the problem to various specialized settings
can lead to more tractable special cases. Particularly well-studied setting is that of $\GeometricSetCover$, where $\points$ consists of points in some Euclidean space $V$ (most often the plane $\mathbb{R}^2$), while $\sets$ consists of various geometric objects in $V$. In this paper we take a closer look at the $\SegmentSetCover$ problem, where we assume that $\points$ is a finite set of points in the plane and $\sets$ consists of segments in the plane (not necessarily axis-parallel). Each of these problems has also a natural weighted variant, where each set $A\in \sets$ comes with a nonnegative real weight $\mathbf{w}(A)$ and the task is to find a solution with the minimum possible total~weight.

\paragraph*{Approximation.}
Over the years there has been a lot of work related to approximation
algorithms for $\GeometricSetCover$. Notably,
$\GeometricSetCover$ with unweighted unit disks or weighted unit squares admits a PTAS~\cite{unit_disks,ErlebachL10}. When we consider the same problem
with weighted disks or squares (not necessarily unit), the problem admits a QPTAS
\cite{settling_apx_hardness}, see also \cite{voronoi_true}.
On the other hand, Chan and Grant
proved that unweighted $\GeometricSetCover$ with axis-parallel fat rectangles
is APX-hard~\cite{rectangles_apx_hard}. They also showed similar hardness
for $\GeometricSetCover$ with many other standard geometric objects. See the introductory section of~\cite{rectangles_apx_hard} for a wider discussion of approximation algorithms for $\GeometricSetCover$ with various kinds of geometric objects.

\paragraph*{Parameterization.}
We consider $\GeometricSetCover$
parameterized by the size of solution: Given an instance $(\points,\sets)$ and a parameter $k$, the task is to decide whether there is a solution of cardinality at most~$k$. In the weighted setting, we look for a minimum-weight solution among those of cardinality at most $k$, and $k$ remains a parameter.

(Unweighted) $\GeometricSetCover$ where $\sets$ consists of lines in the plane is called {\sc{Point Line Cover}}, and it is a textbook example of a problem that admits a quadratic kernel and a $2^{\Oh(k\log k)}\cdot n^{\Oh(1)}$-time fixed-parameter algorithm (cf.~\cite[Exercise~2.4]{platypus_book}). See also the work of Kratsch et al.~\cite{KratschPR16} for a matching lower bound on the kernel size and a discussion of the relevant literature. The simple branching and kernelization ideas behind the parameterized algorithms for {\sc{Point Line Cover}} were generalized by Langerman and Morin~\cite{LangermanM05} to an abstract variant of $\GeometricSetCover$ where the sets of $\sets$ can be assigned a suitable notion of dimension. This framework in particular applies to the problem of covering points with hyperspaces in $\mathbb{R}^d$.

As proved by Marx,
unweighted $\GeometricSetCover$ with unit squares in the plane is already $\mathsf{W}[1]$-hard~\cite[Theorem~5]{marx05}. Later, Marx and Pilipczuk showed that there is an~algorithm running in time $n^{\mathcal{O}(\sqrt{k})}$
that solves weighted $\GeometricSetCover$ with squares or with disks,
and that this running time is tight under the Exponential-Time Hypothesis (ETH)~\cite{voronoi}. However, they also showed that any small deviations from the setting of squares or disks --- for instance considering thin rectangles or rectangles with sidelengths in the interval $[1,1+\delta]$ for any $\delta>0$ --- lead to problems for which there are lower bounds refuting running times of the form $f(k)\cdot n^{o(k)}$ or $f(k)\cdot n^{o(k/\log k)}$, for any computable $f$. See~\cite{voronoi} for a broader exposition of these results and for more literature pointers.

We are not aware of any previous work that concretely considered the $\SegmentSetCover$ problem. In particular, it seems that the framework of Langerman and Morin~\cite{LangermanM05} does {\em{not}} apply to this problem, since no suitable notion of dimension can be assigned to segments in the plane (more concretely, the fundamental~\cite[Lemma~1]{LangermanM05} fails, which renders further arguments not applicable). In~\cite{Marx06} Marx considered the related {\sc{Dominating Set}} problem in intersection graphs of axis-parallel segments, and proved it to be $\mathsf{W}[1]$-hard. The parameterized complexity of the {\sc{Independent Set}} problem for segments in the plane was studied in the same work of Marx, and independently by K\'ara and Kratochv\'il~\cite{KaraK06}.

%

\paragraph*{$\delta$-extension.}
We also consider the $\delta$-extension relaxation of the $\SegmentSetCover$ problem. Formally, for a center-symmetric object $L\subseteq \mathbb{R}^2$ with
center of symmetry $S = (x_s, y_s)$,
the {\em{$\delta$-extension}} of $L$ is the~set:
$$L^{+\delta} = \{(1 + \epsilon)\cdot(x - x_s, y - y_s) + (x_s, y_s) : (x, y) \in L, 0 \le \epsilon < \delta\}.$$
That is, $L^{+\delta}$ is the image of $L$ under homothety centred
at $S$ with scale $(1+\delta)$ but with the extreme points excluded.
In particular, $\delta$-extension turns a closed segment
into a segment without endpoints
and a rectangle into the interior of a rectangle; this is a technical detail that will be useful in presentation.

In $\GeometricSetCover$ with $\delta$-extension, we assume that in the given instance~$(\points,\sets)$, $\sets$ consists of center-symmetric objects, and we are additionally given the accuracy parameter~$\delta>0$. The task is to find $\sol\subseteq \sets$ such that $\sol^{+\delta}\coloneqq \{L^{+\delta}\colon L\in \sol\}$ covers all points in~$\points$, but the quality of the solution --- be it the cardinality or the weight of $\sol$ --- is compared to the optimum without assuming extension. Thus, requirements on the the output solution are relaxed: the points of $\points$ have to be covered only after expanding every object of $\sol$ a tiny bit. The parameterized variants of $\GeometricSetCover$ with $\delta$-extension are defined naturally: the task is to either find a solution of size at most $k$ that covers all of $\points$ after $\delta$-extension, or conclude that there is no solution of size $k$ that covers $\points$ without extension.

The study of the $\delta$-extension relaxation is motivated by the {\em{$\delta$-shrinking}} relaxation considered in the context of the \textsc{Geometric Independent Set} problem: given a family $\sets$ of objects in the plane, find the maximum size subfamily of pairwise disjoint objects. In the $\delta$-shrinking model, the output solution is required to be disjoint only after shrinking every object by a $1-\delta$ multiplicative factor. \textsc{Geometric Independent Set} remains $\mathsf{W}[1]$-hard for as simple objects as unit disks or unit squares~\cite{Marx06} and admits a QPTAS for polygons~\cite{AdamaszekHW19}, but the existence of a PTAS for the problem is widely open. However, as first observed by Adamaszek et al.~\cite{shrinking_original}, and then confirmed by subsequent works of Wiese~\cite{shrinking2} and of~Pilipczuk et al.~\cite{shrinking1}, adopting the $\delta$-shrinking relaxation leads to a robust set of FPT algorithms and (efficient or parameterized) approximation schemes. The motivation of this work is to explore if the analogous $\delta$-extension relaxation of $\GeometricSetCover$ also leads to more positive results.

In fact, we are not the first to consider the $\delta$-extension relaxation of $\GeometricSetCover$. In~\cite{harpeled12}, Har-Peled and Lee considered the $\WeightedGeometricSetCover$ problem with $\delta$-extension\footnote{We note that Har-Peled and Lee considered a different definition of $\delta$-extension, where every object $L$ is extended by all points at distance at most $\delta\cdot \mathsf{rad}(L)$, where $\mathsf{rad}(L)$ is the radius of the largest circle inscribed in $L$. This definition works well for fat polygons, but not so for segments, hence we adopt the homothetical definition of $\delta$-extension discussed above.} for fat polygons, and proved that the problem admits a PTAS with running time $|\sets|^{\mathcal{O}(\epsilon^{-2}\delta^{-2})}\cdot |\points|$. Given this result, our goal is to understand the complexity in the setting of ultimately non-fat polygons: segments.

\paragraph*{Our contribution.}
First, we show that $\SegmentSetCover$ does not have a polynomial-time approximation scheme (PTAS) assuming $\mathsf{P}\neq \mathsf{NP}$,
even if segments are axis-parallel and we relax the problem with $\frac{1}{2}$-extension. Thus, there is no hope for the analog of the result of Har-Peled and Lee~\cite{harpeled12} in the setting of~segments.

\begin{restatable}{theorem}{segmentCoverApxHard}{
\label{segment_cover_apx_hard}
	There exists a constant $\gamma>0$ such that, unless $\mathsf{P}= \mathsf{NP}$, there is no polynomial-time algorithm that given an instance $(\points,\sets)$ of (unweighted) $\SegmentSetCover$ in which all segments are axis-parallel, returns a set $\sol\subseteq \sets$ such that $\sol^{+\frac{1}{2}}$ covers $\points$ and $|\sol|\leq (1+\gamma)\cdot \opt$, where $\opt$ denotes the minimum size of a subset of $\sets$ that covers $\points$.
}\end{restatable}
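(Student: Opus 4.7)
The plan is to build a gap-preserving polynomial-time reduction from an APX-hard source problem to axis-parallel $\SegmentSetCover$ that remains robust to $\frac{1}{2}$-extension. A convenient source is a bounded-occurrence variant of $\SetCover$ such as minimum cover with sets of size at most $3$ and frequency bounded by a constant, which is APX-hard by the PCP theorem (and, e.g., reduces easily from \textsc{Vertex Cover} on cubic graphs).

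Given an instance $(\points,\sets)$ of this source problem, I would encode each set $S\in \sets$ as an axis-parallel segment and each element $u\in \points$ as a point lying on the segments corresponding to sets containing $u$. The key obstruction is that every point in the plane lies on only one horizontal and one vertical line, so an element natively fits into at most two segments. I would overcome this with a \emph{frequency-multiplexer gadget}: for every element $u$, replace $u$ by a constant-size cluster of points lying on distinct horizontal and vertical lines, one per set containing $u$, plus a constant number of auxiliary segments that glue the cluster together. The gadget is calibrated so that an optimal cover of the cluster uses a fixed overhead of auxiliary segments plus exactly one ``real'' segment among those corresponding to sets containing $u$---an additive $O(1)$ per element.

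Since the overhead is linear in $|\points|$, any constant-factor gap in the source problem is preserved up to a constant in the reduced instance. To guarantee robustness against $\frac{1}{2}$-extension, I would scale the gadgets so that the factor-$\frac{3}{2}$ dilation of any candidate segment around its centre stays within the coarse grid cell of its intended gadget, making it impossible for an extended segment to accidentally cover an element outside its original set.

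The main obstacle is designing the multiplexer gadget so that it simultaneously (i) enforces a bijection between cheap covers of the gadget and set-cover choices at $u$, (ii) adds only constant overhead per element, and (iii) remains immune to ``shortcut'' covers that use $\frac{1}{2}$-extension to reach across gadget boundaries. Handling (iii) is the delicate part: it requires a careful case analysis of all ways in which an extended segment can interact with multiple gadgets, and an argument that any such interaction either contributes to multiple gadgets' covers symmetrically or is dominated by the intended solution.
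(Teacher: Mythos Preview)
Your high-level strategy---a gap-preserving reduction from a bounded-occurrence APX-hard problem, with local gadgets and a spacing argument for $\frac{1}{2}$-extension robustness---is exactly the shape of the paper's proof. But what you have written is a plan, not a proof: the entire technical content lies in the gadget you call the ``frequency-multiplexer,'' and you explicitly leave its construction open, naming it yourself as ``the main obstacle.'' Until that gadget is actually built and its three properties verified, there is nothing to evaluate.

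There is also a structural difficulty with the Set Cover route that you do not address. If each set $S$ is encoded as a single axis-parallel segment, then all elements of $S$ must lie on one axis-parallel line. With sets of size~$3$, this forces three different multiplexer gadgets to each contribute a point to that same line, while those gadgets must simultaneously accommodate the \emph{other} sets containing those elements on \emph{other} lines. Getting all these incidences to be realizable by axis-parallel segments without unintended collinearities (and without $\frac{1}{2}$-extension crossing gadget boundaries) is a genuine combinatorial-geometric constraint you have not touched.

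The paper sidesteps this by reducing from \textsc{Max-(E3,E5)-SAT} instead. Each variable gets a small gadget with exactly two minimum covers of size~$3$ (encoding \texttt{true}/\texttt{false}); each clause gets a gadget built from two explicit OR-gadgets whose minimum cover has size~$11$ if some literal is satisfied and~$12$ otherwise. Short ``transfer'' segments carry the chosen variable value to the clause gadgets, so distant gadgets never need to share a line. Robustness to $\frac{1}{2}$-extension is verified segment-by-segment (each segment and its $\frac{1}{2}$-extension cover the same points of~$\points$), rather than by a global scaling argument. Your multiplexer gadget, if you built it, would essentially have to reinvent the paper's OR-gadget; the SAT framing makes the OR structure explicit and the geometric layout far easier to control.
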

%
%
%

Theorem~\ref{segment_cover_apx_hard} justifies also considering parameterization by the solution size $k$.
For this parameterization, we provide three parameterized algorithms:
\begin{itemize}
 \item an FPT algorithm for (unweighted) $\SegmentSetCover$ with $k$ being the parameter;
 \item a {\em{parameterized approximation scheme}} (PAS) for $\WeightedSegmentSetCover$: a $(1+\epsilon)$-approximation algorithm with running time of the form $f(k,\epsilon)\cdot (|\points||\sets|)^{\Oh(1)}$; and
 \item an FPT algorithm for $\WeightedSegmentSetCover$ with $\delta$-extension, where both $k$ and $\delta>0$ are the parameters.
\end{itemize}
Formal statements of these results follow below.

\begin{restatable}{theorem}{segmentCoverFpt}{
	\label{segment_cover_fpt}
	There is an algorithm that given a family $\sets$ of
	segments in the plane,
	a set $\points$ of points in the plane,
	and a parameter $k$,
	runs in time ${k^{\mathcal{O}(k)}\cdot (|\points||\sets|)^{\mathcal{O}(1)}}$,
	and either outputs a set $\sol \subseteq \sets$
	such that $|\sol| \le k$ and $\sol$ covers all points in~$\points$,
	or determines that such a set $\sol$ does not exist.
}\end{restatable}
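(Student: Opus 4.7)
The plan is to adapt the classical $k^{\Oh(k)}$-time FPT branching algorithm for \textsc{Point Line Cover}. The key structural observation is that every segment $s \in \sets$ lies on a unique line, so any solution $\sol$ of size at most $k$ induces a ``line cover'' $L(\sol) \coloneqq \{\mathsf{line}(s) : s \in \sol\}$ of $\points$ of size at most $k$. Pigeonhole then yields: among any $k+1$ points of $\points$, two must be covered by the same segment of $\sol$ and therefore lie on a common line of $L(\sol)$. The algorithm proceeds in two phases: (i) enumerate candidate line covers of $\points$ of size at most $k$; (ii) for each candidate, decide whether there exists a segment cover of $\points$ of size at most $k$ using only segments lying on the chosen lines.

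For Phase~(i), I would run the classical \textsc{Point Line Cover} branching: given the current uncovered set $\mathcal{U}' \subseteq \points$ and remaining budget $k'$, pick any $k'+1$ points of $\mathcal{U}'$ and, for each of the $\binom{k'+1}{2}$ pairs, branch by adding the line $\ell$ through that pair to the candidate cover, deleting $\mathcal{U}' \cap \ell$, and recursing with budget $k'-1$. The branching factor is $\Oh(k^2)$ and the depth is at most $k$, so this enumerates at most $k^{\Oh(k)}$ candidate line covers, and a routine inductive pigeonhole argument shows that every line cover $L(\sol)$ arising from a valid solution $\sol$ is reached at some leaf.

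For Phase~(ii), fix an enumerated line cover $L$ and let $\sets_L \coloneqq \{s \in \sets : s \subseteq \ell \text{ for some } \ell \in L\}$. On each line $\ell \in L$, segments of $\sets_L$ lying on $\ell$ cover intervals of $\points \cap \ell$ under the linear order of $\ell$, so the per-line subproblem is the classical interval set cover and is solvable greedily in polynomial time. The complication is the flexibility at points lying on multiple lines of $L$: these are pairwise intersections of lines in $L$, hence at most $\binom{|L|}{2} \leq \binom{k}{2}$ in number, and a point on several lines of $L$ can be covered on any of them. I would first canonicalize $\sets_L$ so that on each line $\ell$ every consecutive subset of $\points \cap \ell$ realizable as $P(s)$ for some $s$ is represented by a single segment, reducing $|\sets_L|$ to polynomial in $|\points|$; then I would solve the resulting constrained interval-cover instance by a combinatorial procedure that combines greedy per-line interval covers with a careful accounting of which intersection points are covered on which line.

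The main obstacle is precisely this Phase~(ii) subroutine: the cross-line flexibility at intersection points must be handled without an exponential blow-up, since naively enumerating the $|L|^{\binom{k}{2}}$ assignments of intersection points to covering lines would only give $k^{\Oh(k^2)}$ overall and miss the target bound. The hard part is designing a $(|\points||\sets|)^{\Oh(1)}$-time algorithm for Phase~(ii), for instance via a dynamic program over the coverage status of intersection points or a reduction to a polynomially-solvable combinatorial structure. Once this is in place, combining it with the $k^{\Oh(k)}$ enumeration of Phase~(i) immediately yields the claimed running time $k^{\Oh(k)} \cdot (|\points||\sets|)^{\Oh(1)}$.
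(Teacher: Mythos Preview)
Your plan leaves Phase~(ii) genuinely open, and this is not a mere technicality: once you have fixed a candidate line cover $L$ with $|L|\le k$, deciding whether $\points$ can be covered by at most $k$ segments lying on lines of $L$ is still a set-cover-type problem with nontrivial interaction between the lines at up to $\binom{k}{2}$ intersection points. As you yourself observe, the naive treatment yields only $k^{\Oh(k^2)}$ overall, and neither a polynomial-time nor a $k^{\Oh(k)}$-time algorithm for this subproblem is apparent from what you wrote. There is also a secondary issue with Phase~(i): your branching halts once $|\mathcal{U}'|\le k'$, at which point (even along the ``correct'' branch) the accumulated set $L$ may be a proper subset of $L(\sol)$, so restricting Phase~(ii) to $\sets_L$ can discard exactly the segments that are needed to cover the residual points of~$\mathcal{U}'$.

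The paper sidesteps both difficulties by branching directly on \emph{segments} rather than on lines. The key observation is that, after discarding dominated segments (so that no two segments of $\sets$ cover the same subset of~$\points$), on any line $L$ containing at least $k+1$ points $x_1,\ldots,x_t$ of~$\points$ one can exhibit a set $\mathcal{R}\subseteq\sets$ of at most $k$ segments collinear with $L$ --- namely, for each $i\le k$, the unique segment that covers $x_i$ but not $x_{i-1}$ --- such that every solution of size at most $k$ must contain a member of~$\mathcal{R}$. One then branches into these $\le k$ choices, commits the chosen segment, and recurses with parameter $k-1$. When no line carries more than $k$ points, either $|\points|>k^2$ and there is no solution, or $|\points|\le k^2$ and (again by the domination reduction) $|\sets|\le k^4$, so brute force over $\binom{k^4}{\le k}=k^{\Oh(k)}$ candidate solutions finishes. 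This yields $k^{\Oh(k)}\cdot(|\points||\sets|)^{\Oh(1)}$ directly, with no multi-line interval-cover subproblem to solve; the piece you identify as the ``hard part'' is exactly what segment-level branching renders unnecessary.
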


\begin{restatable}{theorem}{pasWeightedSegment}{
	\label{pas_weighted_segment}
	There is an algorithm that given a family $\sets$ of
	weighted segments in the plane,
	a set $\points$ of points in the plane, and parameters $k$ and $\epsilon > 0$,
	runs in time $(k/\epsilon)^{\Oh(k)}\cdot (|\points||\sets|)^{\Oh(1)}$ and
	outputs a set $\sol$ such that:\begin{itemize}
	\item $\sol \subseteq \sets$, $|\sol| \le k$, and $\sol$ covers all points in $\points$, and
	\item the weight of $\sol$ is not greater than $1+\epsilon$ times the minimum weight
	of a subset of $\sets$ of size at most $k$
	that covers $\points$,
	\end{itemize}
	or determines that there is no set $\sol\subseteq \sets$ with $|\sol| \le k$
	such that $\sol$ covers all points in $\points$.
}\end{restatable}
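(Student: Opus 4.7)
The plan is to reduce \WeightedSegmentSetCover{} to solving many instances of a colored variant of the unweighted problem solved by Theorem~\ref{segment_cover_fpt}, via weight bucketing and profile enumeration. Write $\opt$ for the minimum weight of a cover $\sol\subseteq\sets$ with $|\sol|\le k$; if no such cover exists, this is detected by invoking the algorithm of Theorem~\ref{segment_cover_fpt}. First, guess a value $W^*\in\{\mathbf{w}(s):s\in\sets\}$, intended to be the weight of the heaviest segment in some optimum cover $\sol^*$; there are at most $|\sets|$ such guesses to try. For a fixed guess, discard every segment of weight strictly exceeding $W^*$, put $\delta:=\epsilon W^*/k$, and round each remaining weight down to the nearest multiple of $\delta$, obtaining $\tilde{\mathbf{w}}(s)\in\{0,\delta,2\delta,\ldots,M\delta\}$ with $M=\lceil k/\epsilon\rceil$. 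For any cover $\sol$ of size $\le k$ consisting of retained segments we have $\tilde{\mathbf{w}}(\sol)\le\mathbf{w}(\sol)\le\tilde{\mathbf{w}}(\sol)+k\delta=\tilde{\mathbf{w}}(\sol)+\epsilon W^*$; when the guess is correct, $W^*\le\opt$, so any cover minimizing the rounded weight has true weight at most $(1+\epsilon)\opt$.

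Second, enumerate \emph{weight profiles}: tuples $(m_0,m_1,\ldots,m_M)\in\mathbb{Z}_{\ge 0}^{M+1}$ with $\sum_i m_i\le k$. The number of such profiles is at most $\binom{M+k+1}{k+1}=(k/\epsilon)^{\Oh(k)}$. For each profile, test whether there exists $\sol\subseteq\sets$ containing exactly $m_i$ segments of rounded weight $i\delta$ for every $i$ that together covers $\points$, and return the cover of smallest total weight $\sum_i m_i\cdot i\delta$ among feasible profiles; this is the rounded-weight optimum over the retained segments.

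Third, each profile feasibility check is an instance of a \emph{colored} segment cover problem: treating the rounded-weight classes as colors, we must pick a prescribed number of segments of each color whose union covers $\points$. We handle these checks by revisiting the algorithm underlying Theorem~\ref{segment_cover_fpt} and making its branching steps color-aware, so that at each branching step we additionally account for the color of the chosen segment; this inflates the branching factor by $M+1=\Oh(k/\epsilon)$ per step, yielding $(k/\epsilon)^{\Oh(k)}$ running time over the depth-$\Oh(k)$ search tree. The main obstacle is precisely this refactoring: one must verify that every branching and kernelization step in the proof of Theorem~\ref{segment_cover_fpt} can be carried out in a color-respecting manner without damaging the $k^{\Oh(k)}$ complexity bound, i.e.\ that witness sets on which the algorithm branches can be chosen to interact well with a prescribed color profile. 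Once this colored variant is established, combining it with the $(k/\epsilon)^{\Oh(k)}$ profile enumeration and the $|\sets|+1$ guesses for $W^*$ yields the claimed running time $(k/\epsilon)^{\Oh(k)}\cdot(|\points||\sets|)^{\Oh(1)}$.
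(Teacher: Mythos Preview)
Your proposal is correct and shares the paper's core idea: guess the maximum weight $W^*$ appearing in an optimum solution, round the remaining weights into $\Oh(k/\epsilon)$ classes, and adapt the unweighted branching algorithm to respect these classes. The paper, however, packages this more directly. Rather than enumerating $(k/\epsilon)^{\Oh(k)}$ weight profiles and solving a colored \emph{feasibility} problem for each, the paper proves a standalone intermediate result (Theorem~\ref{thm:few_weights}): $\WeightedSegmentSetCover$ with at most $q$ distinct weight values can be solved \emph{exactly} in time $(qk)^{\Oh(k)}\cdot(|\points||\sets|)^{\Oh(1)}$. This is obtained by observing that in a \emph{reasonable} instance (no segment dominated by another of equal weight), any line with more than $k$ points admits a set $\Rr\subseteq\sets$ of size at most $qk$ hitting every solution of size at most $k$ (Lemma~\ref{fpt_long_lines}), after which the kernel at the leaves contains at most $qk^4$ segments. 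Your acknowledged ``main obstacle''---making the branching color-aware---is precisely what this lemma resolves. The paper also uses \emph{multiplicative} rounding, yielding only $\Oh(\tfrac{1}{\epsilon}\log(k/\epsilon))$ weight classes rather than the $\Oh(k/\epsilon)$ classes your additive rounding produces; this does not change the final $(k/\epsilon)^{\Oh(k)}$ bound, but it is a tidier reduction. In short, your route works, but the profile-enumeration layer is unnecessary once one solves the few-weights optimization problem directly.
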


\begin{restatable}{theorem}{fptWeightedSegment}{
	\label{fpt_weighted_segment}
	There is an algorithm that given a family $\sets$ of
	weighted segments in the plane,
	a set $\points$ of points in the plane, and parameters $k$ and $\delta > 0$,
	runs in time $f(k, \delta) \cdot (|\points||\sets|)^{\mathcal{O}(1)}$ for some computable function $f$ and
	outputs a set $\sol$ such that:\begin{itemize}
	\item $\sol \subseteq \sets$, $|\sol| \le k$, $\sol^{+\delta}$ covers all points in $\points$, and
	\item the weight of $\sol$ is not greater than the minimum weight
	of a subset of $\sets$ that covers $\points$ without $\delta$-extension,
	\end{itemize}
	or determines that there is no set $\sol\subseteq \sets$ with $|\sol| \le k$
	such that $\sol$ covers all points in $\points$.
}\end{restatable}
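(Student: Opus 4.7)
The plan is to combine a geometric discretization enabled by the $\delta$-extension slack with a branching strategy similar to the one underlying Theorem~\ref{segment_cover_fpt}. The fundamental observation is that $s^{+\delta}$ lies on the \emph{same} supporting line as $s$ (only scaled by $(1+\delta)$ around the midpoint), so two segments $s_1, s_2$ can be ``$\delta$-interchangeable'' --- in the sense $s_1 \subseteq s_2^{+\delta}$ --- only if they share their supporting line. On the other hand, within a single supporting line, $\delta$-extension does afford a discretization: segments can be grouped by their endpoint positions at a resolution of $\Theta(\delta)$ relative to the span of $\points$ on that line, and any two segments in the same group are mutually contained in each other's $\delta$-extension.

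First I would normalize the instance by affine transformation so that $\points \subseteq [0,1]^2$; the problem is invariant under such transformations. Next, for each distinct supporting line $\ell$ that carries some segment of $\sets$ and each canonical extent on $\ell$ (an $\Oh(1/\delta)$-resolution discretization of the pair of endpoint positions), I would keep only the cheapest segment of $\sets$ with supporting line $\ell$ and that approximate extent, obtaining a reduced family $\sets'$. The structural claim to establish is that any optimal unextended cover $\sol^* \subseteq \sets$ can be replaced, segment by segment, by a cover drawn from $\sets'$ of no greater total weight whose $\delta$-extension still covers $\points$: each $s^* \in \sol^*$ is replaced by the canonical representative $s$ of its line and extent class, which satisfies $s^* \subseteq s^{+\delta}$ and $\mathbf{w}(s) \le \mathbf{w}(s^*)$.

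Given $\sets'$, the remaining task is to find a minimum-weight size-$\le k$ subcover with $\delta$-extension. I plan to do this via a branching recursion in the style of Theorem~\ref{segment_cover_fpt}: pick an uncovered point $p$, branch over the canonical segment of $\sets'$ that covers $p$, and recurse. To achieve an $f(k,\delta)$ branching factor rather than one linear in $|\points|$, I intend to combine this with a kernelization argument: any supporting line incident to more than $k$ points of $\points$ must be ``used'' by the optimum (since otherwise more than $k$ other segments would be needed to cover those points), so such lines can be committed to in advance; after this reduction, the number of surviving points --- and hence of relevant lines through each of them --- is bounded by a function of~$k$.

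The hard part will be in the bookkeeping between the per-line discretization and the global branching: the discretization resolution on each line depends on the span of $\points$ on that line, and controlling these resolutions simultaneously while ensuring that replacing $s^*$ by its class representative is cover-preserving across all lines in the solution requires careful analysis. I expect this to build on the structural lemmas developed for Theorem~\ref{segment_cover_fpt}, augmenting them to track the minimum weights within each canonical class, and using the $\delta$-extension slack as precisely the margin that allows substituting a segment by its (possibly strictly cheaper) canonical representative without violating the cover.
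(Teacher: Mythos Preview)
Your discretization claim --- that two segments on the same supporting line whose endpoints agree up to resolution $\Theta(\delta)\cdot D$ (where $D$ is the span of $\points$ on that line) are mutually contained in each other's $\delta$-extension --- is false. The $\delta$-extension of a segment $s$ enlarges it by $\frac{\delta}{2}|s|$ on each side, and this scales with the length of $s$, not with $D$. Concretely, take $\delta=\tfrac{1}{100}$, a line with points at integer positions $0,1,\dots,100$ (so $D=100$, resolution $\approx 1$), and two unit-length segments $s^\star=[50,51]$ and $s=[50.5,51.5]$ in the same endpoint class. Then $s^{+\delta}=(50.495,51.505)$, which fails to cover the point $50\in s^\star$. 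Thus the cheapest class representative need not $\delta$-dominate the segment it replaces, and your segment-by-segment substitution argument collapses. Patching this by also bucketing by length does not help: the number of relevant length scales on a single line is unbounded (segments of lengths $1,\tfrac{1}{2},\tfrac{1}{4},\dots$ all require different resolutions), so you do not get $f(k,\delta)$-many classes per line.

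The paper's proof takes the opposite route: rather than thinning $\sets$, it thins $\points$. The key lemma constructs, for any collinear point set $C$, a $(k,\delta)$-\emph{dense} subset $A\subseteq C$ of size at most $(2+\tfrac{4}{\delta})^k$, with the property that \emph{any} size-$\le k$ cover of $A$ becomes, after $\delta$-extension, a cover of all of $C$. One then observes (your ``long line'' intuition, but used differently) that there are at most $k$ lines carrying more than $k$ points of $\points$ and at most $k^2$ points off all such lines; replacing each long line's point set by its dense subset yields a reduced universe $\points'$ of size bounded in $k,\delta$. Now $\sets$ is thinned trivially: for each pair of points in $\points'$ keep only the cheapest segment of $\sets$ covering that pair. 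Both $|\points'|$ and $|\sets'|$ are bounded by a function of $k,\delta$, so brute force over size-$\le k$ subsets of $\sets'$ finishes. The crucial idea you are missing is this dense-subset construction; once points are reduced, segments reduce for free, whereas reducing segments directly runs into the length-scale obstruction above.
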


It is natural to ask whether relying on relaxations --- $(1+\epsilon)$-approximation or $\delta$-extension --- is really necessary for $\WeightedSegmentSetCover$, as Theorem~\ref{segment_cover_fpt} shows that it is not in the unweighted setting. Somewhat surprisingly, we show that this is the case by proving the following result. Recall that here we consider $\WeightedSegmentSetCover$ as a parameterized problem where we seek a solution of the minimum total weight among those of cardinality at most $k$.

%

\begin{restatable}{theorem}{wOneHard}
\label{w1_hard}
	The $\WeightedSegmentSetCover$ problem is $\mathsf{W}[1]$-hard when parameterized by $k$ and
	assuming ETH, there is no algorithm for this
	problem with running time
	$f(k)\cdot(|\points| + |\sets|)^{o(k/\log k)}$
	for any computable function $f$.
	Moreover, this holds even if all segments in $\sets$
	are axis-parallel.
\end{restatable}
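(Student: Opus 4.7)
The plan is a parameterized reduction from \textsc{Multicolored Clique}, which is $\mathsf{W}[1]$-hard parameterized by the clique size $k$ and admits no $f(k) \cdot n^{o(k)}$-time algorithm under ETH. The target will be an instance of $\WeightedSegmentSetCover$ using only axis-parallel segments, with solution-size parameter $k' = \Theta(k \log k)$ and total description size polynomial in $n$. Since $k'/\log k' = \Theta(k)$, a hypothetical $f(k') \cdot N^{o(k'/\log k')}$-time algorithm would then solve the source in time $n^{o(k)}$, contradicting ETH; the parameterized nature of the reduction also yields $\mathsf{W}[1]$-hardness. Crucially, because Theorem~\ref{segment_cover_fpt} places unweighted $\SegmentSetCover$ in $\mathsf{FPT}$, the reduction must use weights in an essential way --- the cardinality bound $k'$ alone is not enough to enforce the combinatorial constraints.

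For each color class $V_i$, I would install a \emph{vertex-encoding} gadget consisting of $b = \lceil \log n \rceil$ pairs of axis-parallel segments representing the bits of the chosen vertex. A dedicated bit-point is placed so that it lies on both segments of each pair and on no other encoding segment, and each selector carries unit weight. For the pairwise compatibility checks --- the fact that the chosen vertices must form a clique --- for each pair of colors $(i, j)$ and each non-edge $\{u,v\} \in V_i \times V_j$, I would introduce one or more \emph{witness points} together with low-weight \emph{certificate segments} dedicated to actual edges of the source graph. The incidence geometry would be arranged so that encoding an actual clique covers every witness via certificates, while encoding a non-edge leaves some witness uncoverable within the size and weight budgets. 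The selector segments are shared across all $\binom{k}{2}$ consistency gadgets, so that the solution-size parameter stays at $\Theta(k \log k)$ and does not inflate with the number of gadgets.

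The main obstacle is three-fold. First, the axis-parallel restriction severely limits incidences: a single witness point lies on at most one horizontal and one vertical line, so the check for each non-edge must be distributed across $\Theta(\log k)$ witness points, one per bit of potential disagreement. Second, a multi-scale weight hierarchy needs to be carefully calibrated --- unit weights on selectors to force the encoding, tiny weights on certificate segments to permit clique-consistent covers without disturbing the budget --- and this is exactly the step where the distinction between the weighted and unweighted variants becomes decisive, since only a fine-grained weighted budget can separate clique from non-clique configurations without resorting to more than $\Theta(k \log k)$ segments. Third, the same selector segments must simultaneously play their role in all consistency gadgets without introducing spurious coverings, which tightly constrains the spatial placement of the bit-points and witnesses. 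Reconciling all three aspects while keeping $k' = \Theta(k \log k)$ is where I expect the proof to consume the bulk of its technical effort.
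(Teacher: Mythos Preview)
Your parameter accounting has a fatal inconsistency. You declare the target parameter to be $k' = \Theta(k \log k)$, but then set $b = \lceil \log n \rceil$ bits per color class, so the selector segments alone already number $kb = k\lceil \log n\rceil$. Since $n$ is the size of the source instance and is not bounded by any function of $k$, this is not $\Theta(k\log k)$; the map you describe is not a parameterized reduction at all, and neither the $\mathsf{W}[1]$-hardness nor the ETH transfer goes through. The same confusion recurs when you speak of ``$\Theta(\log k)$ witness points'' per non-edge: a bit-encoding of vertices needs $\Theta(\log n)$ bits. There is also no mechanism sketched for how the witness points attached to the $\Theta(n^2)$ non-edges all get covered within the cardinality budget when the encoding does correspond to a clique; if certificate segments must enter the solution to cover them, their number is again governed by $n$, not by~$k$.

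The paper avoids bit-encoding entirely. It reduces from \textsc{Partitioned Subgraph Isomorphism} with a $3$-regular pattern graph $H$ on $k$ vertices (which already carries the $n^{o(k/\log k)}$ ETH bound), using one horizontal line per vertex of $H$. On each line a \emph{choice gadget} is placed: a family of segments whose weights equal their lengths, designed so that any cover of the gadget's auxiliary points by $4$ segments of total length exactly $N+1-6/N^2$ must leave uncovered precisely one integer point from each of three prescribed blocks --- thereby encoding the three host-graph edges incident to the chosen image vertex. Vertical segments of weight $1/N^4$, one per host edge, then pick up these leftover points, and the global weight budget is tuned so tightly that any overlap on a line, or any vertical segment covering fewer than two leftovers, already overshoots it. The resulting solution has cardinality $\tfrac{11}{2}k$, linear in $k$, and every combinatorial constraint is enforced purely through the weight budget rather than the cardinality bound --- which is exactly what separates the weighted problem from the unweighted one you know to be in $\mathsf{FPT}$.
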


Thus, the uncovered parameterized complexity of $\SegmentSetCover$ is an interesting one: the problem is FPT when parameterized by the solution size $k$ in the unweighted setting, but this tractability ceases to hold when moving to the weighted setting. However, fixed-parameter tractability in the weighted setting can be restored if one considers any of the following relaxations: $(1+\epsilon)$-approximation or $\delta$-extension.

\paragraph*{Organization.} In Section~\ref{sec:fpt} we prove Theorems~\ref{segment_cover_fpt},~\ref{pas_weighted_segment} and~\ref{fpt_weighted_segment}, in Section~\ref{chapter:w1_hard} we prove Theorem~\ref{w1_hard}, and in Section~\ref{section:apx_hard} we prove Theorem~\ref{segment_cover_apx_hard}.

\section{Algorithms}\label{sec:fpt}

In this section we give our positive results --- Theorems~\ref{segment_cover_fpt},~\ref{pas_weighted_segment}, and~\ref{fpt_weighted_segment}.
We start with a shared definition.
	For a set of collinear points $C$ in the plane,
	{\em{extreme points}} of $C$ are the endpoints
	of the smallest segment that covers all points from set $C$. In particular, if $C$ consists of one point or is empty, then
	there are $1$ or $0$ extreme points, respectively.

\subsection{Unweighted segments and a parameterized approximation scheme}
\label{segments_in_arbitrary_direction}
We first a give an FPT algorithm for $\WeightedSegmentSetCover$ where we additionally consider the number of different weights to be the parameter.

\begin{theorem}\label{thm:few_weights}
 	There is an algorithm that given a family $\sets$ of
	weighted segments in the plane,
	a set $\points$ of points in the plane,
	and a parameter $k$,
	runs in time ${(qk)^{\mathcal{O}(k)}\cdot (|\points||\sets|)^{\mathcal{O}(1)}}$, where $q$ is the number of different weights used by the weight function,
	and either outputs a solution $\sol \subseteq \sets$
	such that $|\sol| \le k$ and $\sol$ covers all points in~$\points$,
	or determines that such a set $\sol$ does not exist.
\end{theorem}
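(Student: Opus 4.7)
My plan is a bounded-depth branching algorithm with branching factor $O(qk)$ and depth at most $k$, giving $(qk)^{O(k)}$ leaves in the recursion tree, each processed in polynomial time.

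First I would canonicalize $\sets$: for each tuple consisting of a line $\ell$, a pair $p_L, p_R \in \points$ on $\ell$, and a weight $w$, keep at most one segment of $\sets$ on $\ell$ of weight $w$ having $p_L, p_R$ as its extreme $\points$-points — namely, the minimum-weight such segment. Separately, precompute for each pair $(p,w)$ the minimum-weight ``singleton'' segment of weight $w$ whose $\points$-coverage is exactly $\{p\}$. These reductions do not change the optimum and bound the relevant part of $\sets$ by a polynomial in $|\points|$ and $q$.

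The heart of the argument is a structural lemma: for each point $p \in \points$, there is a candidate set $\mathcal{C}(p) \subseteq \sets$ of size $O(qk)$, each element containing $p$, such that some minimum-weight solution of size at most $k$ covers $p$ using a segment from $\mathcal{C}(p)$. The construction should include, for each of the $q$ weights, the singleton candidate plus $O(k)$ ``multi-point'' candidates on lines through $p$ with at least two $\points$-points, chosen via a Pareto-extremality criterion on $\points$-coverage. Given the lemma, the algorithm maintains a partial solution $\sol'$ and iterates: if $\sol'$ already covers $\points$, record it; otherwise pick any uncovered $p$ and branch over $s \in \mathcal{C}(p)$, recursing with $\sol' \cup \{s\}$. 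Truncate the recursion at depth $k$ and return the minimum-weight recorded solution, or infeasibility if no branch succeeds.

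The main obstacle is establishing the $O(qk)$ bound in the structural lemma, and in particular the $O(k)$ bound on multi-point candidates per weight. A priori, $p$ can lie on many distinct lines through at least two $\points$-points, each potentially hosting Pareto-optimal weight-$w$ segments covering $p$; bringing this count down to $O(k)$ requires an exchange argument that exploits the $\leq k$ constraint on the number of lines used by any feasible solution. Concretely, if an optimum uses a segment on a line not represented in $\mathcal{C}(p)$, one must show that this segment can be swapped for a canonical representative without raising the total weight or violating the size bound, by reassigning coverage using the one-dimensional interval-cover structure on each line and the fact that other segments on foreign lines meet a given line in at most one point. Identifying the right extremality criterion — together with a clean accounting of the interactions between segments on different lines through $p$ — is, I expect, the principal technical challenge of the proof.
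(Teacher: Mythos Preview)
Your overall branching framework is sound, but the structural lemma you need --- that for every uncovered point $p$ there is a polynomial-time computable set $\mathcal{C}(p)$ of size $O(qk)$ such that some optimum covers $p$ via a member of $\mathcal{C}(p)$ --- is precisely the crux, and you have not established it. You yourself flag this as ``the principal technical challenge,'' and the sketch you offer (an exchange argument swapping the optimum's segment through $p$ for a canonical representative) does not obviously go through: if the optimum covers $p$ with a segment $s$ on line $L$, and your candidate $s'$ lies on a different line $L'$, then replacing $s$ by $s'$ uncovers all the $\points$-points on $L$ that $s$ covered besides $p$, and there is no budget to recover them. The one-dimensional interval structure on $L$ does not help, because after the swap you have no segment on $L$ at all. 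Bounding the number of \emph{lines} through $p$ that need representation by $O(k)$ is exactly what seems hard, and your ``$\le k$ lines used by any feasible solution'' observation is true but does not tell you \emph{which} $k$ lines without solving the problem.

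The paper sidesteps this obstacle by branching not on an uncovered point but on a \emph{long line}: a line $L$ containing at least $k+1$ points of $\points$. Any size-$\le k$ solution must contain a segment collinear with $L$ (by pigeonhole, non-collinear segments cover at most one point of $L$ each). Among segments collinear with $L$, one restricts to those whose leftmost covered point on $L$ is among the first $k$ points $x_1,\ldots,x_k$; after the reasonability reduction, for each $i\le k$ there are at most $q$ such segments (one per weight) that cover $x_i$ but not $x_{i-1}$, giving a branching set of size $\le qk$ that every small solution must intersect. When no long line exists, every segment covers at most $k$ points, so $|\points|\le k^2$ or there is no solution; in the former case $|\sets|\le qk^4$ after canonicalization, and brute force over size-$\le k$ subsets finishes in $(qk)^{O(k)}$ time. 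Thus the paper never needs a per-point candidate bound at all.
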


Clearly, Theorem~\ref{segment_cover_fpt} follows from applying Theorem~\ref{thm:few_weights} for $q=1$. However, later we use Theorem~\ref{thm:few_weights} for larger values of $q$ to obtain our parameterized approximation scheme: Theorem~\ref{pas_weighted_segment}.

We remark that the proof of Theorem~\ref{thm:few_weights} relies on branching and kernelization arguments that are standard in parameterized algorithms. Even though the statement does not formally follow from the work of Langerman and Morin~\cite{LangermanM05}, the basic technique is very similar.

Towards the proof of Theorem~\ref{thm:few_weights}, we may assume that the instance $(\points,\sets,\mathbf{w})$ given on input, where $\mathbf{w}\colon \sets\to \mathbb{R}_{\geq 0}$  denotes the weight function on $\sets$, is {\em{reasonable}} in the following sense: there do not exist distinct $A,B\in \sets$ with the same weight such that $A\cap \points \subseteq B\cap \points$. Indeed, then $A$ could be safely removed from $\sets$, since in any solution, taking $B$ instead of $A$ does not increase the weight and may only result in covering more points in $\points$.
In the next lemma we show that in reasonable instances we can find a small subset of $\sets$ that is guaranteed to intersect every small solution.

\newcommand{\Rr}{\mathcal{R}}

\begin{lemma}
	\label{fpt_long_lines}
	Suppose $(\points,\sets,\mathbf{w})$ is a reasonable instance
	of $\WeightedSegmentSetCover$ where the weight function $\mathbf{w}$ uses at most $q$ different values. Suppose further that there exists a line $L$ in the plane with at least
	$k+1$ points of $\points$ on it. Then there exists a subset $\Rr \subseteq \sets$
	of size at most $qk$
	such that every subset $\sol\subseteq \sets$ with $|\sol| \le k$ that covers $\points$
	satisfies $|\Rr \cap \sol| \ge 1$.
	Moreover, such a subset $\Rr$ can be found in~polynomial time.
\end{lemma}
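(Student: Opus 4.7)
Let $p_1, p_2, \ldots, p_{k+1}$ denote $k+1$ distinct points of $\points$ lying on $L$, ordered as they appear along $L$. My proof will have three parts.

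First, I will show that every subset $\sol \subseteq \sets$ with $|\sol| \le k$ that covers $\points$ must contain at least one segment lying on $L$ that covers two consecutive points $p_i, p_{i+1}$ for some $i \in \{1, \ldots, k\}$. Any segment not contained in $L$ meets $L$ in at most one point and hence covers at most one of the $p_j$'s; so if $\sol_L$ denotes the subset of segments in $\sol$ lying on $L$, then $\sol_L$ must cover at least $(k+1) - (|\sol| - |\sol_L|) \ge |\sol_L| + 1$ of the points $p_1, \ldots, p_{k+1}$. Since each segment of $\sol_L$ covers a contiguous block of $\{p_1, \ldots, p_{k+1}\}$ along $L$, a pigeonhole argument on intervals yields a segment covering two consecutive points.

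Second, for each weight $w$ attained by $\mathbf{w}$ (there are at most $q$ such values) and each gap index $i \in \{1, \ldots, k\}$, let $T_i^w \subseteq \sets$ denote the segments of weight $w$ lying on $L$ and covering both $p_i$ and $p_{i+1}$. Reasonableness of the instance implies that $T_i^w$ is an antichain under inclusion of $\points$-content; because segments lying on $L$ are intervals, sorting $T_i^w$ by increasing left endpoint yields the same ordering as sorting by increasing right endpoint, with both sequences strictly increasing. For each nonempty $T_i^w$ I will select a canonical representative $R_i^w \in T_i^w$---the specific choice will be dictated by the exchange analysis below---and set $\Rr = \{R_i^w : i, w\}$. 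Then $|\Rr| \le qk$ by construction, and $\Rr$ is computable in polynomial time by direct enumeration over pairs $(w, i)$ and scans of $\sets$.

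The main obstacle will be verifying that $\sol \cap \Rr \ne \emptyset$ for every feasible $\sol$. I may restrict without loss of generality to inclusion-minimal $\sol$, as any solution contains a minimal subsolution and the hitting property propagates upward. The first part guarantees some $S \in \sol \cap T_{i_0}^{w_0}$; if $S = R_{i_0}^{w_0}$ we are done, and otherwise the antichain structure of $T_{i_0}^{w_0}$ forces $R_{i_0}^{w_0}$ to extend further than $S$ along $L$ on one side of $\{p_{i_0}, p_{i_0+1}\}$ while retracting on the other. I plan to use an exchange argument: by tracing how $\sol$ must compensate for the points covered by $R_{i_0}^{w_0}$ but not $S$, I aim either to produce a strictly smaller feasible solution (contradicting minimality of $\sol$) or to exhibit another representative $R_{i'}^{w'}$ already present in $\sol$. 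The case analysis of these exchanges, which must preserve coverage of $\points$ and relies essentially on the antichain property guaranteed by reasonableness together with the one-dimensional interval structure on $L$, is the crux of the proof.
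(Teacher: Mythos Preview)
Your Parts 1 and 2 are fine, but Part 3 --- which you yourself call ``the crux of the proof'' --- is not a proof: it is a plan for an exchange argument whose key step is never carried out. You neither specify which element of $T_i^w$ you pick as $R_i^w$, nor do you show that the exchange (replacing $S$ by $R_{i_0}^{w_0}$, or tracing coverage of the points $R_{i_0}^{w_0}$ covers that $S$ does not) actually terminates in one of the two promised outcomes. Concretely, when $S\in T_{i_0}^{w_0}$ differs from $R_{i_0}^{w_0}$, the antichain property tells you $R_{i_0}^{w_0}$ extends further on one side and retracts on the other; but the points $S$ covers and $R_{i_0}^{w_0}$ does not may be covered in $\sol$ \emph{only} by $S$, so swapping breaks feasibility, while the points $R_{i_0}^{w_0}$ covers and $S$ does not may be covered in $\sol$ by a segment that is not in any $T_{i'}^{w'}$ at all (it need not even lie on $L$). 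Nothing you have written forces either a size drop or the appearance of some $R_{i'}^{w'}$ in $\sol$, and a naive choice of representatives can fail outright: with $q=1$, $k=2$, points $x_1,\dots,x_5$ on $L$, $p_j=x_j$, segments $B$ with content $\{x_1,x_2,x_3\}$ and $C$ with content $\{x_2,x_3,x_4,x_5\}$, and one off-$L$ segment $D$ through $x_1$, the choice $R_1=R_2=B$ gives $\Rr=\{B\}$, yet $\sol=\{C,D\}$ is a minimal solution with $\sol\cap\Rr=\emptyset$.

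The paper avoids this difficulty entirely by choosing $\Rr$ differently. It enumerates \emph{all} points of $\points$ on $L$ as $x_1,\dots,x_t$ and, for each $i\le k$, takes $\Rr_i$ to be the set of segments collinear with $L$ whose leftmost covered point among $x_1,\dots,x_t$ is exactly $x_i$. Any two such segments have nested $\points$-contents $\{x_i,\dots,x_j\}$, so reasonability gives $|\Rr_i|\le q$ immediately --- no representative selection is needed. For the hitting property, one observes that $\sol\setminus\sol_L$ has size at most $k-1$ and hence misses at least one of $x_1,\dots,x_k$; taking the least such index $i$, the segment of $\sol_L$ covering $x_i$ cannot cover $x_{i-1}$ and therefore lies in $\Rr_i$. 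No exchange argument is required.
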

\begin{proof}
Let us enumerate the points of $\points$ that lie on $L$ as $x_1, x_2, \ldots, x_t$
in the order in which they appear on $L$. By reasonability of $(\points,\sets)$, for every $i\in \{1,\ldots,k\}$ there exist at most $q$ different segments in $\sets$ that are collinear with $L$ and cover $x_i$, but do not cover $x_{i-1}$ (or just cover $x_1$, in case $i=1$). Indeed, if $A\in \sets$ is collinear with $L$, covers $x_i$ and does not cover $x_{i-1}$, then $A\cap \points = \{x_i,\ldots,x_j\}$ for some $j\geq i$; so if there was another $B\in \sets$ with the same property and the same weight as $A$, then the reasonability of $(\points,\sets)$ would imply that $A=B$.
Let $\Rr_i$ be the set of segments with the property discussed above; then $|\Rr_i|\leq q$.
Our proposed set is defined as:
$$\Rr\coloneqq  \bigcup_{i=1}^k \Rr_i.$$
Clearly, $\Rr$ can be found in polynomial time and $|\Rr|\leq qk$.
It remains to prove that $\Rr$ has the desired property. Consider any set $\sol\subseteq \sets$ of size at most $k$ that covers $\points$.

Let $\sol_L$ be the set of segments from $\sol$ that are collinear with $L$.
Every segment that is not collinear with $L$ can cover at~most one of
the points that lie on~this line.
Hence, if $\sol_L$ was empty, then
$\sol$ would cover at most $k$ points on line $L$,
but $L$ had at least $k+1$ different points from $\points$ on it.

Therefore, we know that $\sol_L$ is not empty
and hence $|\sol - \sol_L| \le k-1$.
Segments from $\sol - \sol_L$ can cover at most $k-1$
points among $\{x_1, x_2, \ldots, x_k\}$, therefore at least
one of these points must be covered by segments from $\sol_L$.
Let $i\in \{1,\ldots,k\}$ be the smallest index such that $x_i$ is covered by a segment in $\sol_L$. Then, by minimality, this segment cannot cover $x_{i-1}$ (if existent), so it must belong to $\Rr_i$. We conclude that $\Rr \cap \sol$ is nonempty, as desired.
\end{proof}

With
Lemma~\ref{fpt_long_lines} in hand, we prove Theorem~\ref{thm:few_weights} using a straightforward branching strategy.

\begin{proof}[Proof of Theorem~\ref{thm:few_weights}.]
Let $(\points,\sets,\mathbf{w})$ be the given instance and $k$ be the given parameter: the target size of a solution. We present a recursive algorithm that proceeds as follows:
\begin{enumerate}[label={(\arabic*)}]
\item As long as there are distinct sets $A,B\in \sets$ with $A\cap \points\subseteq B\cap \points$ and $\mathbf{w}(A)=\mathbf{w}(B)$, remove $A$ from $\sets$. Once this step is applied exhaustively, the instance $(\points,\sets,\mathbf{w})$ is~reasonable.
\item \label{step2} If there is a line with at least $k+1$ points from $\points$,
we branch over the choice of adding to~the~solution
one of~the~at~most $qk$ possible segments from the set $\Rr$
provided by Lemma~\ref{fpt_long_lines}. That is, for every $s\in \Rr$, we recurse on the instance $(\points - s, \sets - \{s\},\mathbf{w})$,
and parameter $k-1$. If any such recursive call returned a solution $\sol'$, then return the lightest among solutions $\sol'\cup \{s\}$ obtained in this way.  Otherwise, return that there is no~solution.
\item \label{step3} If every line has at most $k$ points on it and $|\points| > k^2$,
then return that there is no solution.
\item If $|\points| \le k^2$, solve the problem by brute force:
check all subsets of $\sets$ of size at most $k$.
\end{enumerate}

That the algorithm is correct is clear: the correctness of step \ref{step2} follows from Lemma~\ref{fpt_long_lines}, and to see the correctness of step \ref{step3} note that if no line contains more than $k$ points, than no segment of $\sets$ can cover more than $k$ points in $\points$, hence having more than $k^2$ points in $\points$ implies that there is no solution of size at most $k$.

For the time complexity, observe that in the leaves of the recursion we have $|\points| \le k^2$, so $|\sets| \le qk^4$,
because every segment can be uniquely identified by its weight and the two extreme points it covers
(this follows from reasonability). Therefore, there are $\binom{qk^4}{\leq k}\leq (qk)^{\Oh(k)}$
possible solutions to check, each can be checked in polynomial time.
Thus, step (4) takes time $(qk)^{\Oh(k)}$ whenever applied in the leaf of the~recursion.

During the recursion, the parameter $k$ is decreased with every
recursive call, so the recursion tree has depth at most $k$ and at each node we branch over at most $qk$ possibilities. Thus, there are at most $\Oh((qk)^k)$ nodes in the recursion tree, and local computation in each of them can be done in time $(|\points||\sets|)^{\Oh(1)}\cdot (qk)^{\Oh(k)}$ (the second factor is due to possibly applying step (4) in the leaves). Thus, the time complexity of the algorithm is $(qk)^{\Oh(k)}\cdot (|\points||\sets|)^{\Oh(1)}$.
\end{proof}

Finally, we use Theorem~\ref{thm:few_weights} to prove Theorem~\ref{pas_weighted_segment}, recalled below for convenience. The idea is to multiplicatively round the weights so that we obtain an instance with only few different weight values, on which the algorithm of Theorem~\ref{thm:few_weights} can be employed.

\pasWeightedSegment*

\begin{proof}
Let $\sol^\star$ be an optimum solution: a minimum-weight set at most $k$ segments in $\sets$ that covers $\points$. The algorithm does not know $\sol^\star$, but by branching into at most $|\sets|$ choices we may assume that it knows the weight of the heaviest segment in $\sol^\star$; call this weight $W$. Thus, we have $W\leq \mathbf{w}(\sol^\star)\leq kW$. We may dispose of all segments in $\sets$ whose weight is larger than $W$, as they will for sure not participate in the~solution.

 We define a new weight function $\mathbf{w}'\colon \sets\to \mathbb{R}_{\geq 0}$ as follows. Consider any segment $A\in \sets$. Provided $\mathbf{w}(A)\leq \frac{\epsilon}{2k}\cdot W$, set $\mathbf{w}'(A)\coloneqq \frac{\epsilon}{2k}\cdot W$. Otherwise, set $\mathbf{w}'(A)\coloneqq \frac{W}{(1+\epsilon/2)^i}$, where $i$ is the unique integer such that
 $$\frac{W}{(1+\epsilon/2)^{i+1}}<\mathbf{w}(A)\leq \frac{W}{(1+\epsilon/2)^i}.$$
 Note that the assumption $\mathbf{w}(A)>\frac{\epsilon}{2k}\cdot W$ implies that we always have $i\leq \log_{1+\epsilon/2} (2k/\epsilon)=\Oh(1/\epsilon \log (k/\epsilon))$. As we also have $i\geq 0$ due to removing segments of weight larger than~$W$, we conclude that the weight function $\mathbf{w}'$ uses at most $\Oh(1/\epsilon \log (k/\epsilon))$ different weight values.

 Next, observe that for every segment $A\in \sets$, we have
 $$\mathbf{w}'(A)\leq (1+\epsilon/2)\cdot \mathbf{w}(A)+\frac{\epsilon}{2k}\cdot W.$$
 Summing this inequality through all segments of $\sol^\star$ yields
 $$\mathbf{w}'(\sol^\star) \leq (1+\epsilon/2)\cdot \mathbf{w}(\sol^\star) + k\cdot \frac{\epsilon}{2k}\cdot W\leq (1+\epsilon/2)\cdot \mathbf{w}(\sol^\star) + \epsilon/2\cdot \mathbf{w}(\sol^\star) = (1+\epsilon)\cdot \mathbf{w}(\sol^\star).$$
 As $\sol^\star$ is an optimum solution, we conclude that the optimum solution in the instance $(\points,\sets,\mathbf{w}')$ for parameter $k$ is at most $(1+\epsilon)$ times heavier than the optimum solution in the instance $(\points,\sets,\mathbf{w})$ for parameter $k$. Hence, it suffices to apply the algorithm of Theorem~\ref{thm:few_weights} to the instance $(\points,\sets,\mathbf{w}')$ and parameter $k$ and return the obtained solution. The running time is $(1/\epsilon \cdot k\log (k/\epsilon))^{\Oh(k)}\cdot (|\points||\sets|)^{\Oh(1)}=(k/\epsilon)^{\Oh(k)}\cdot (|\points||\sets|)^{\Oh(1)}$, as promised.
\end{proof}

\subsection{Weighted segments with $\delta$-extension}
\label{section:fpt_weighted}

In this section we prove Theorem~\ref{fpt_weighted_segment}, restated below for convenience.

\fptWeightedSegment*


Roughly speaking, our approach to prove Theorem~\ref{fpt_weighted_segment} is to find a small kernel for the problem; but we need to be careful with the definition of kernelization, because we work in the $\delta$-extension model. The key technical tool will be the notion of a {\em{dense subset}}.

\paragraph*{Dense subsets.}
Intuitively speaking, for a set of collinear points $C$, a subset $A\subseteq C$ is dense if any small cover of $A$ becomes a cover of $C$ after a tiny extension. This is formalized in the following definition.

\begin{definition}
	For a set of collinear points $C$,
	a subset $A \subseteq C$ is {\em{$(k,\delta)$-dense}} in $C$
	if for any set of segments $\Rr$ that covers $A$ and
	such that $|\Rr| \le k$, it holds that $\Rr^{+\delta}$ covers $C$.
\end{definition}

The key combinatorial observation in our approach is expressed in the following Lemma~\ref{dense_set_exists}: in every collinear set $C$ one can always find a $(k,\delta)$-dense subset of size bounded by a function of $k$ and $\delta$.
Later, this lemma will allow us to find a kernel
for our original problem.

\begin{lemma}
	\label{dense_set_exists}
	For every set $C$ of collinear points in the plane, $\delta > 0$ and $k \ge 1$,
	there exists a $(k,\delta)$-dense set $A \subseteq C$ of size
	at most $(2+\frac{4}{\delta})^k$.
	Moreover, such a $(k,\delta)$-dense set
	can be computed in time $\Oh(|C| \cdot (2+\frac{4}{\delta})^k)$.
\end{lemma}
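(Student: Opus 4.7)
The plan is to prove Lemma~\ref{dense_set_exists} by induction on $k$ with a recursive construction. For the base case $k = 1$, take $A$ to be the (at most two) extreme points of $C$ along its line; any single segment covering both extremes must be collinear with that line and contain the interval between them, so it already covers all of $C$, and trivially $C \subseteq s \subseteq s^{+\delta}$.

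For the inductive step $k \ge 2$, set $M \coloneqq \lceil 2 + 4/\delta \rceil$ and partition the range $[c_1, c_n]$ (where $c_1, c_n$ are the leftmost and rightmost points of $C$) into $M$ equal-width sub-intervals, yielding chunks $C_1, \dots, C_M$ of width at most $w \coloneqq (c_n - c_1)/M$ each. Apply the inductive hypothesis with parameter $k-1$ to each non-empty chunk to obtain $A_j \subseteq C_j$ that is $(k-1, \delta)$-dense in $C_j$ with $|A_j| \le (2+4/\delta)^{k-1}$, and set $A \coloneqq \bigcup_j A_j$. The size bound $|A| \le M \cdot (2+4/\delta)^{k-1} \le (2+4/\delta)^k$ is then immediate, and the construction runs in the claimed time $\Oh(|C| \cdot (2+4/\delta)^k)$ by a straightforward top-down recursion on the sub-chunks.

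For correctness, consider an arbitrary $\mathcal{R}$ of at most $k$ segments covering $A$, and for each chunk $j$ let $\mathcal{R}_j \coloneqq \{s \in \mathcal{R} : s \cap A_j \neq \emptyset\}$, which covers $A_j$. The easy case is when $|\mathcal{R}_j| \le k-1$ for every $j$: the inductive hypothesis gives $\mathcal{R}_j^{+\delta} \supseteq C_j$ for each $j$, hence $\mathcal{R}^{+\delta} \supseteq C$. The main obstacle is the \emph{hot chunk} case when some $|\mathcal{R}_{j^*}| = k$, so that every segment of $\mathcal{R}$ touches $A_{j^*}$. Here the key geometric calculation is that an on-line segment $s$ whose range spans from a point in $C_1$ to a point in $C_M$ has length at least $(M-2)w$, so $s^{+\delta}$ extends past $s$ on each side by at least $\delta(M-2)w/2$; since $M \ge 2 + 4/\delta > 2 + 2/\delta$, this quantity exceeds $w$, and therefore $s^{+\delta} \supseteq [c_1, c_n] \supseteq C$. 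When no such spanning segment exists, $\mathcal{R}_1$ and $\mathcal{R}_M$ are disjoint subsets of $\mathcal{R}$ whose sizes sum to at most $k$, forcing both to be $\le k-1$ and bringing the extreme chunks into the inductive case. The remaining subtlety is interior hot chunks, which I expect to resolve by iterating this spanning-or-split dichotomy inward on the sub-range containing them, or alternatively by identifying a redundant segment within $\mathcal{R}_{j^*}$ whose removal converts the $k$-cover of $A_{j^*}$ into a $(k-1)$-cover, allowing an appeal to the $(k-1,\delta)$-density of $A_{j^*}$ in $C_{j^*}$. Carefully verifying that one of these routes always succeeds for interior hot chunks will be the technical heart of the argument.
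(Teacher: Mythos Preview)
Your induction-and-chunking skeleton is exactly the paper's, but the interior hot-chunk case is a genuine gap, and neither of your two proposed closures works. Approach~B can fail outright: nothing prevents each of the $k$ segments from being the \emph{unique} cover of some point of $A_{j^*}$, so no segment is redundant. Approach~A, as sketched, runs into arithmetic trouble: once you move inward to sub-ranges, the ``spanning'' segments you find are shorter (length $\ge (j^*-2)w$ on the left side, $\ge (M-j^*-1)w$ on the right), and working through the inequalities you need $M$ on the order of $3+4/\delta$ for one of them to always extend over the whole hot chunk --- too large for the size bound $M\le 2+4/\delta$.

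The actual fix is both simpler and requires one ingredient you omitted: strengthen the inductive hypothesis to also assert that $A$ contains the two extreme points $c_1,c_n$ of $C$. (Your base case already gives this; you just need to carry it through the recursion, which is immediate since $c_1\in C_1$ and $c_n\in C_M$.) Now treat each chunk $j$ separately, not just one distinguished hot chunk. If some segment of $\Rr$ misses $A_j$, remove it and apply induction. Otherwise every segment of $\Rr$ meets $A_j$; in particular the segments $y,z\in\Rr$ that cover $c_1$ and $c_n$ both meet $A_j$, forcing them to be collinear with the line and to reach into $t_j$ (the minimal interval over $C_j$). Hence $|y|+|z|\ge |s|-|t_j|$ where $|s|=c_n-c_1$, so $\max(|y|,|z|)\ge (|s|-|t_j|)/2$, and a two-line computation with $|t_j|\le |s|/M$ and $M\ge 1+4/\delta$ shows the $\delta$-extension of the longer segment grows by at least $|t_j|$ on each side and therefore swallows $C_j$. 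This handles every hot chunk uniformly and makes your spanning-segment case-split unnecessary.

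One further correction: your choice $M=\lceil 2+4/\delta\rceil$ can strictly exceed $2+4/\delta$ (e.g.\ $\delta=3$ gives $M=4>10/3$), which breaks the size bound $M\cdot(2+4/\delta)^{k-1}\le(2+4/\delta)^k$. Take $M=\lceil 1+4/\delta\rceil$ instead; this is always at most $2+4/\delta$ and is exactly what the argument above needs.
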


\begin{proof}
We give a proof of the existence of such a dense subset $A$, and at the end we will argue that the proof naturally gives rise to an algorithm with the promised complexity. We fix $\delta$ and proceed by induction on $k$. Formally, we shall prove the following stronger statement:
For any set of collinear points $C$, there exists a subset $A\subseteq C$ such that:
\begin{itemize}
\item $A$ is $(k, \delta)$-dense in $C$,
\item $|A| \le (2+\frac{4}{\delta})^k$, and
\item the extreme points of $C$ are in $A$.
\end{itemize}

Consider first the base case when $k=1$.
Then it is sufficient to just take $A$ that consists of the (at most~$2$) extreme points of $C$.
Indeed, if the extreme points of $C$ are covered with one segment, then this segment must cover the whole set $C$ (even without extension).
Thus, the set $A$ has size at most $2 < (2+\frac{4}{\delta})^1$, as required.

We now proceed to the inductive step.
Assuming inductive hypothesis for any set of collinear points $C$
and for parameter $k$, we will prove it for $k+1$.

Let $s$ be the minimal segment that includes all points from $C$.
That is, $s$ is the segment whose endpoints are the extreme points of $C$.
Split $s$ into $M \coloneqq \lceil1+\frac{4}{\delta}\rceil$ subsegments of equal length.
We name these segments $v_1,v_2,\ldots,v_M$ in order, and we consider them closed. Note that
$|v_i| = \frac{|s|}{M}$ for each $1 \le i \le M$, where $|\cdot|$ denotes the length of a segment.
Let $C_i$ be the subset of $C$ consisting of points belonging to~$v_i$.
Further, let $t_i$ be the segment with endpoints being the extreme points of $C_i$.
Note that $t_i$ might be a degenerate single-point segment if $C_i$ consists of one point,
or even $t_i$ might be empty if $C_i$ is empty.
Figure~\ref{fig:fpt_v_f_def} presents an example of the~construction.

\begin{figure}[h]
\begin{center}
\def\svgwidth{\textwidth}
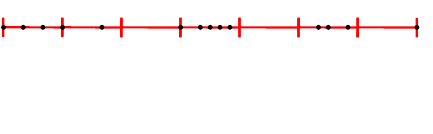
\caption{Example of the construction in the proof of Lemma~\ref{dense_set_exists} for $M = 7$ and some set of points $C$ (marked with black circles).
The top panel shows segments $v_i$. The middle panel shows segments~$t_i$. Note that $t_5$ is an empty segment,
because there are no points in $C$ that belong to $v_5$, while
each of the segments $t_3$ and $t_7$ is degenerated to a single point:
$c$ and $d$, respectively.
Segments $t_1$ and $t_2$ share one point $b$.
The bottom panel shows an example of the second case in the correctness proof: a solution $\Rr$ of size $4$ whose all segments intersect $t_4$. Then one of $y$ and $z$ will cover the whole of $C_4$ after extension.
}
\label{fig:fpt_v_f_def}
\end{center}
\end{figure}

We use the inductive hypothesis to choose a $(k, \delta)$-dense subset $A_i$ of $C_i$, for each $i\in \{1,\ldots,M\}$. Note that if $|C_i| \le 1$, then $A_i = C_i$, so $A_i$ is $(k, \delta)$-dense set for $C_i$. Also, by assumption, $A_i$ contains the extreme points of $C_i$.

Next, we define $A \coloneqq \bigcup_{i=1}^{M} A_i$.
Thus $A$ includes the extreme points of $C$,
because they are included in the sets $A_1$ and $A_M$.
By induction, the size of each $A_i$ is at most $(2+\frac{4}{\delta})^{k}$. Therefore,
$$|A|\leq M\left(2+\frac{4}{\delta}\right)^{k} =
\left\lceil1+\frac{4}{\delta}\right\rceil\cdot\left(2+\frac{4}{\delta}\right)^{k}
\le \left(2+\frac{4}{\delta}\right)^{k+1}.$$
We are left with verifying that $A$ is $(k+1,\delta)$-dense in $C$.
For this, consider any cover of $A$ with $k+1$ segments and call it $\Rr$.

Consider any segment $t_i$. If there exists a segment $x\in \Rr$
that is disjoint with $t_i$,
then $\Rr- \{x\}$ constitutes a cover of $A_i$ with at most $k$
segments.
Since $A_i$ is $(k, \delta)$-dense in $C_i$,
$(\Rr - \{x\})^{+\delta}$ covers $C_i$.
So $\Rr^{+\delta}$ covers $C_i$ as well.

On the other hand, if for any fixed $t_i$ a segment $x\in \Rr$ as above
does not exist, then all the $k+1$ segments of $\Rr$ intersect $t_i$.
An example of such a situation is depicted in the bottom panel of Figure~\ref{fig:fpt_v_f_def}.
Let us consider any such $t_i$.
By induction, the endpoints of $s$ are
in $A_1$ and $A_M$ respectively, so $\Rr$ must cover them.
So for each endpoint of $s$, there exists
a segment in $\Rr$ that contains this endpoint and intersects $t_i$.
Let us call these two segments $y$ and~$z$. It follows that
$|y| + |z| + |t_i| \ge |s|$.
Since $|t_i| \le |v_i| = \frac{|s|}{M} \le \frac{|s|}{1+\frac{4}{\delta}} = \frac{\delta|s|}{\delta+4}$,
we have $$\max(|y|, |z|) \ge |s|\left(1-\frac{\delta}{\delta+4}\right)/2 = \frac{2|s|}{\delta+4}.$$
%
After $\delta$-extension, the longer of the segments $y$ and $z$ will
expand at both ends by at least:
$$\delta/2\cdot \max(|y|, |z|) \ge \frac{\delta|s|}{\delta+4} =
\frac{|s|}{1+\frac{4}{\delta}} \ge \frac{|s|}{M} = |v_i| \ge |t_i|.$$
Therefore, the longer of segments $y$ and $z$ will cover the whole segment $t_i$
after $\delta$-extension. We conclude that $\Rr^{+\delta}$ covers $C_i$ as well.

Since $C = \bigcup_{i=1}^M C_i$, we conclude that $\Rr^{+\delta}$ covers $C$. So indeed, $A$ is $(k+1,\delta)$-dense in~$C$. This concludes the proof of the existence of such a dense set $A$. To compute $A$ in time $\Oh\left(|C|\cdot\left(2+\frac{4}{\delta}\right)^k\right)$ observe that the inductive proof explained above can be easily turned into a recursive procedure that for a given $C$ and $k$, outputs a $(k,\delta)$-dense subset of $C$. The recursion tree of this procedure has size $\Oh\left(\left(2+\frac{4}{\delta}\right)^k\right)$ in total, while every recursive calls uses $\Oh(|C|)$ time for internal computation, so the total running time is $\Oh\left(|C|\cdot\left(2+\frac{4}{\delta}\right)^k\right)$.
\end{proof}

\paragraph*{Long lines.} We need a few additional observations in the spirit of the algorithm of Theorem~\ref{thm:few_weights}. For a finite set of points $\points$ in the plane, call a line $L$ {\em{$k$-long}} with respect to $\points$ if $L$ contains more than $k$ points from $\points$. We have the following observations.

\begin{lemma}
\label{few_long_lines}
Let $\points$ be a finite set of points in the plane such that there are more than $k$ lines that are $k$-long with respect to $\points$. Then
$\points$ cannot be covered with $k$ segments.
\end{lemma}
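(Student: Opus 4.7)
The plan is to argue by contradiction: suppose $\sol \subseteq \sets$ with $|\sol| \leq k$ covers $\points$, and let $\mathcal{L}$ denote the family of $k$-long lines with respect to $\points$, assumed to have $|\mathcal{L}| > k$. I will construct an injection $\mathcal{L} \hookrightarrow \sol$, which immediately yields the desired contradiction.

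The key observation is the standard fact that for any line $L$ and any segment $s$, either $s$ is collinear with $L$, or $s \cap L$ consists of at most one point. I would apply this to every $L \in \mathcal{L}$ as follows. Since $L$ contains more than $k$ points of $\points$ and $\sol$ has at most $k$ segments, by pigeonhole at least one segment of $\sol$ must cover at least two points of $\points \cap L$, and by the observation such a segment must be collinear with $L$. Hence for each $L \in \mathcal{L}$ we can pick some $s_L \in \sol$ collinear with $L$.

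The final step is to note that a (nondegenerate) segment is collinear with exactly one line, so the map $L \mapsto s_L$ is injective, giving $|\mathcal{L}| \leq |\sol| \leq k$ and contradicting $|\mathcal{L}| > k$. A small subtlety to verify along the way is the case of degenerate (single-point) segments, which are technically collinear with infinitely many lines; but such a segment covers only one point, so it cannot be the witness $s_L$ anyway, since $s_L$ was chosen precisely because it covers at least two points of $\points \cap L$. There is no real obstacle here — the argument is a two-line pigeonhole combined with the elementary incidence fact — so I expect the write-up to be short and essentially immediate from the definition of ``$k$-long.''
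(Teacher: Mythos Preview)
Your proposal is correct and follows essentially the same approach as the paper's proof: both argue by contradiction, use the fact that a segment not collinear with a line meets it in at most one point, apply pigeonhole to find for each $k$-long line a collinear segment in the cover, and conclude that distinct long lines yield distinct segments, contradicting $|\sol|\le k$. Your explicit framing as an injection and the remark on degenerate segments are minor expository additions, not a different argument.
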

\begin{proof}
We proceed by contradiction.
Assume there are at least $k+1$ different
$k$-long lines and there is a set of segments $\Rr$ of size at most $k$
covering all points in $\points$.

Consider any $k$-long line $L$.
Note that every segment $\Rr$ which is not collinear with~$L$,
covers at most one point that lies on $L$.
Since $L$ is long, there are at least $k+1$ points from $\points$ that lie on $L$.
This implies that there must be a segment in $\Rr$ that is
collinear with~$L$.

Since we have at least $k+1$ different long lines,
there are at least $k+1$
segments in $\Rr$ collinear with different lines.
This contradicts the assumption that $|\Rr| \le k$.
\end{proof}

\begin{lemma}
\label{few_points}
Let $\points$ be a finite set of points in the plane such that there are more than $k^2$ points from $\points$
that do not lie on any line that is $k$-long with respect to $\points$. Then $\points$ cannot be covered with $k$ segments.
\end{lemma}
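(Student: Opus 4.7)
The plan is to argue by contradiction using a simple pigeonhole over supporting lines of segments in a hypothetical solution. Suppose toward contradiction that there exists $\sol \subseteq \sets$ with $|\sol|\le k$ that covers all of $\points$. Let $P^\star \subseteq \points$ denote the subset of points that do not lie on any line that is $k$-long with respect to $\points$; by assumption $|P^\star|>k^2$.

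The central observation to establish is that each segment $s\in \sol$ covers at most $k$ points of $P^\star$. To see this, I would argue as follows. Each segment $s$ is contained in its supporting line $L_s$. If $s$ covers no point of $P^\star$, the bound of $k$ is trivial. Otherwise, fix any $p \in P^\star$ covered by $s$; then $p \in L_s$, so by the definition of $P^\star$, the line $L_s$ cannot be $k$-long. Hence $L_s$ contains at most $k$ points of $\points$, and in particular $s$ covers at most $k$ points of $P^\star$.

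Summing this per-segment bound over all $s\in \sol$ yields that $\sol$ covers at most $k\cdot k = k^2$ points of $P^\star$. Since $|P^\star|>k^2$, some point of $P^\star \subseteq \points$ is not covered, contradicting the assumption that $\sol$ covers $\points$. I do not expect any real obstacle here: the proof is parallel in style to Lemma~\ref{few_long_lines}, and once the subset $P^\star$ of ``private'' points is isolated, a direct pigeonhole on the supporting lines of the segments closes the argument.
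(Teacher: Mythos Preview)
Your proof is correct and follows essentially the same argument as the paper: isolate the set of points off all $k$-long lines, observe that any segment (being contained in a line that is then not $k$-long) covers at most $k$ such points, and conclude by pigeonhole that $k$ segments cover at most $k^2$ of them. The only cosmetic slip is writing $\sol\subseteq\sets$; the lemma speaks of arbitrary segments, not a fixed family $\sets$, but this does not affect the argument.
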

\begin{proof}
Assume  we have more than $k^2$ points
in $\points$ that do not lie on any $k$-long line. Call this set $A$.
Suppose, for contradiction, that there is a set of segments $\Rr$ of size at most $k$
that covers all points in~$\points$.

Since any line in the plane can cover only at most $k$ points in $A$, the same is also true for every segment in $\Rr$. Therefore, the segments from $\Rr$ can cover at most $k^2$ points in $A$ in total. As $|A|>k^2$, $\Rr$ cannot cover the whole $A$, which is a subset of $\points$; a contradiction.
\end{proof}

We are now ready to give a proof of Theorem \ref{fpt_weighted_segment}.

\newcommand{\instance}{(\points,\sets,\mathbf{w})}
\newcommand{\instancePrim}{(\points',\sets',\mathbf{w})}

\begin{proof}[Proof of Theorem \ref{fpt_weighted_segment}]
Let $\instance$ be the instance of $\WeightedSegmentSetCover$ given on input, where $\mathbf{w}\colon \sets\to \mathbb{R}_{\geq 0}$ is the weight function. Further, let $k$ and $\delta>0$ be the input parameters.
Our goal is to either conclude that $\instance$ has no solution of cardinality at most $k$, or to find an instance
$\instancePrim$ of size bounded by $f(k,\delta)$ for some computable function $f$ and satisfying $\points'\subseteq \points$ and $\sets'\subseteq \sets$, such that the following two properties hold:
\begin{itemize}
\item \textit{(Property 1)} For every set $\sol\subseteq \sets$ such that $|\sol|\leq k$ and $\sol$ covers $\points$,
there is a set $\sol_1 \subseteq \sets'$ such that
$|\sol_1| \le k$, the weight of $\sol_1$ is not greater than the weight of $\sol$,
and $\sol_1$ covers $\points'$.
\item \textit{(Property 2)}
For every set $\sol \subseteq \sets'$ such that $|\sol| \le k$
and $\sol$ covers all points in $\points'$, $\sol^{+\delta}$
covers all points in the original set $\points$.
\end{itemize}
Suppose we constructed such an instance $\instancePrim$.
Then using Property 1 we know that an optimum solution
of size at most $k$ to $\instancePrim$
has no greater weight than an optimum solution
of size at most $k$ to $\instance$.
On the other hand, using \textit{Property 2} we know that
any solution to $\instancePrim$
after $\delta$-extension covers $\points$. So it will remain to find an optimum solution to the instance $\instancePrim$. This can be done by brute-force in time $|\sets'|^{k+\Oh(1)}\cdot |\points'|^{\Oh(1)}$, which is bounded by a computable function of $k$ and $\delta$.


It remains to construct the instance $\instancePrim$.
Let $\ell$ be the number of different lines that are $k$-long with respect to $\points$.
By Lemmas~\ref{few_long_lines} and~\ref{few_points},
if we had more than $k$ different $k$-long lines
or more than $k^2$ points from $\points$
that do not lie on any $k$-long line, then we can safely conclude that $(\points,\sets,\mathbf{w})$ has no solution of cardinality at most $k$, and terminate the algorithm. So assume otherwise, in particular $\ell\leq k$.

Next, we cover $\points$ with at most $k+1$ sets:
\begin{itemize}
\item $D$ consists of all points in $\points$ that do not lie on any $k$-long line. Then we have $|D| \le k^2$.
\item For $1 \le i \le \ell$, $C_i$ consists of all points in $\points$ that lie on the $i$-th long line. Then $|C_i| > k$.
\end{itemize}
Note that sets $C_i$ do not need to be disjoint.

For every set $C_i$, we apply Lemma~\ref{dense_set_exists}
to obtain a subset $A_i\subseteq C_i$ that is $(k,\delta)$-dense and satisfies $|A_i| \le (2+\frac{4}{\delta})^k$.
We define $\points'\coloneqq D \cup \bigcup_{i=1}^\ell A_i$. Thus, $\points'$ has size at most
$k^2 + k(2+\frac{4}{\delta})^k$.
Further, we define $\sets'$ as follows: for every pair of points in $\points'$, if there are segments in $\sets$ that cover this pair of points,
we choose one such segment with the lowest weight and include it in $\sets'$.
Thus $\sets'$ has size at most~$|\points'|^2$,
which means that both $\sets'$ and $\points'$ have sizes bounded
by~$\Oh\left((k^2 + k(2+\frac{4}{\delta})^k)^2\right)$. We are left with verifying Properties 1 and~2.

For Property~2, consider any set $\sol \subseteq \sets'$ such that $|\sol| \le k$
and $\sol$ covers all points in $\points'$. Then in particular, for every $i\in \{1,\ldots,\ell\}$, $\sol$ in covers all points in $A_i$. As $A_i$ is $(k,\delta)$-dense in $C_i$, we conclude that $\sol^{+\delta}$ covers $C_i$. Hence $\sol^{+\delta}$ covers $D\cup \bigcup_{i=1}^\ell C_i=\points$, as required.

For Property~1, consider any solution
$\sol$ to $\instance$ of size at most $k$.
For every segment $s\in \sol$, let $B_s$ be the set of points in $\points'$ that are covered by $s$.
$B_s$ is of course a set of collinear points, hence
$B_s$ can be covered by any segment that covers the extreme points of~$B_s$.
Therefore, we can replace $s$ with a segment $s'\in \sets$
that has the lowest weight among the segments that cover the extreme points of~$B_s$.
Such a segment belongs to $\sets'$ by construction, and $s'$ has weight no greater than the weight of $s$, because $s$ also covers the extreme points of $B_s$. Therefore, if $\sol_1\subseteq \sets'$ is the set obtained by performing such replacement for every $s\in \sol$, then $\sol_1$ has both size and weight not greater than $\sol$, and $\sol_1$ covers $\points'$.
\end{proof}

\newcommand{\PartSubIso}{{\sc{Partitioned Subgraph Isomorphism}}}

\section{$\mathsf{W}[1]$-hardness of $\WeightedSegmentSetCover$}
\label{chapter:w1_hard}

In this section we prove Theorem~\ref{w1_hard}, recalled below for convenience.

\wOneHard*

To prove Theorem~\ref{w1_hard},
we give a reduction from a $\mathsf{W}[1]$-hard problem:
\PartSubIso, defined as follows. An instance of \PartSubIso{} consists of a {\em{pattern graph}} $H$, a {\em{host graph}} $G$, and a function $\lambda\colon V(G)\to V(H)$ that colors every vertex of $G$ with a vertex of $H$. The task is to decide whether there exists a subgraph embedding $\phi\colon V(H)\to V(G)$ that respects the coloring~$\lambda$. That is, the following conditions have to be satisfied.
\begin{itemize}
 \item $\lambda(\phi(a))=a$ for each $a\in V(H)$, and
 \item $\phi(a)$ and $\phi(b)$ are adjacent in $G$ for every edge $ab\in E(H)$.
\end{itemize}
The following complexity lower bound for \PartSubIso{} was proved by Marx in~\cite{Marx10beat}.

\begin{theorem}[\cite{Marx10beat}]
\label{thm:subisoHardness}
Consider the \PartSubIso{} problem where the pattern graph $H$ is assumed to be $3$-regular. Then this problem is $\mathsf{W}[1]$-hard when parameterized by $k$, the number of vertices of $H$, and assuming the ETH there is no algorithm solving this problem in time $f(k)\cdot |V(G)|^{o(k/\log k)}$, where $f$ is any computable function.
\end{theorem}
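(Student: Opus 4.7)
The plan is to reduce from $3$-SAT to obtain the ETH lower bound, and from $k$-\textsc{Clique} for the $\mathsf{W}[1]$-hardness part. Both reductions share the same combinatorial device: an explicit family $\{H_k\}_{k\ge 1}$ of $3$-regular graphs where $H_k$ has $k$ vertices and treewidth $\Omega(k)$, such as $3$-regular expanders obtained from Ramanujan or Margulis--Gabber--Galil constructions. It is precisely the expansion/high-treewidth of $H_k$ that forces the ``$k/\log k$'' factor in the lower-bound exponent; a smaller factor could be obtained only if one could embed $H_k$ into a graph of treewidth $o(k/\log k)$, which is impossible for good expanders.

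For the ETH bound, take a $3$-SAT instance $\phi$ on $N$ variables with $O(N)$ clauses, set $k=\Theta(N/\log N)$, and partition the variables of $\phi$ into $k$ blocks $V_1,\ldots,V_k$ of size $\Theta(\log N)$ each. The pattern is $H\coloneqq H_k$. To build the host graph $G$ with coloring $\lambda\colon V(G)\to V(H)$, first pre-process $\phi$ so that every clause $C$ is assigned to an edge $ab\in E(H)$ with $\mathrm{vars}(C)\subseteq V_a\cup V_b$. Then, for each $a\in V(H)$, create a color class $\lambda^{-1}(a)$ containing one vertex per Boolean assignment to $V_a$, and for each edge $ab\in E(H)$ insert an edge between $u\in\lambda^{-1}(a)$ and $v\in\lambda^{-1}(b)$ iff the corresponding partial assignments jointly satisfy every clause routed to $ab$. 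A color-preserving subgraph embedding $H\hookrightarrow G$ then encodes a satisfying assignment of $\phi$, and conversely.

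Since $|V(G)|=k\cdot 2^{\Theta(\log N)}=N^{O(1)}$, a hypothetical $f(k)\cdot |V(G)|^{o(k/\log k)}$-time algorithm would solve $3$-SAT in time $f(N/\log N)\cdot N^{o(N/\log^2 N)}=2^{o(N)}$, contradicting ETH (by picking $k$ small enough that $f(k)$ is absorbed). The $\mathsf{W}[1]$-hardness part is obtained analogously by reducing from $k$-\textsc{Clique}: each color class consists of copies of vertices of the input graph $G'$, and edges of $G$ encode adjacency in $G'$, so that color-preserving embeddings of $H_k$ select $k$ vertices of $G'$ spanning $H_k$ as a subgraph; a standard post-processing step turns this into detection of an actual $k$-clique.

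The main obstacle --- and the technical heart of the argument --- is the routing pre-processing: $H_k$ has only $3k/2$ edges while $\phi$ has $\Theta(k\log k)$ clauses, so each edge must absorb $\Theta(\log k)$ clauses on average, and the allocation must respect the constraint that every clause's variables lie in the union of two blocks adjacent in $H_k$. Proving that such an allocation always exists when $H_k$ is an expander relies on the strong flow/embedding properties of expanders and is the most delicate step; it is also precisely where $3$-regularity and expansion interact to produce the $k/\log k$ exponent (rather than the stronger $k$ bound), since any attempt to collapse several variables-per-vertex forces a logarithmic slack when embedding the constraint hypergraph of $\phi$ into $H_k$.
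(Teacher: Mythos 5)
This theorem is not proved in the paper at all --- it is imported verbatim from Marx's ``Can you beat treewidth?''~\cite{Marx10beat}, so there is no in-paper argument to compare against. Judged on its own merits, your sketch reproduces the correct high-level silhouette of Marx's proof (pattern $=$ a $3$-regular expander, color classes $=$ partial assignments to groups of $\Theta(\log N)$ variables, edges of the host graph $=$ compatibility of partial assignments, and the final $2^{o(N)}$ arithmetic), but the step you yourself flag as ``the technical heart'' is not only unproven --- in the form you state it, it is false, and repairing it is essentially the entire content of Marx's paper.

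Concretely: with a \emph{disjoint} partition of the variables into blocks $V_1,\dots,V_k$ and each clause routed to a single edge $ab$ of $H_k$ with $\mathrm{vars}(C)\subseteq V_a\cup V_b$, the quotient of the clause structure under the partition must be a subgraph of a $3$-regular graph; in particular every variable may co-occur in clauses only with variables from its own block and three other blocks, i.e.\ with $O(\log N)$ other variables in a prescribed pattern. Generic sparse (E3)-SAT instances have no such partition, and no amount of expansion of $H_k$ helps, because expansion of the \emph{target} says nothing about the clause hypergraph of the \emph{source}. Marx's actual argument does not partition the variables: his embedding lemma maps each vertex of the (suitably binarized) constraint graph of $\varphi$ to a \emph{connected subset} of $H$, with each vertex of $H$ receiving $O\!\left(\frac{m\log k}{k}\right)$ preimages; the $O(\log k)$ multiplicity --- which comes from the flow--cut gap, i.e.\ from routing an arbitrary demand graph in a treewidth-$k$ graph with congestion $O(\log k)$ times the optimum --- is precisely what forces the exponent to be $k/\log k$ rather than $k$. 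Your attribution of the $\log k$ to ``embedding $H_k$ into a graph of small treewidth'' has the direction of the embedding reversed. Separately, the $\mathsf{W}[1]$-hardness half is not a reduction as written: a color-preserving embedding of $H_k$ into your host graph certifies an $H_k$-subgraph of $G'$, not a $k$-clique, and there is no ``standard post-processing'' converting one into the other; the correct route is again via the embedding lemma (embed $K_k$ into a blow-up of a $3$-regular expander) or simply via the general $\mathsf{W}[1]$-hardness of \PartSubIso{} for any pattern class of unbounded treewidth.
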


\newcommand{\instanceSetCover}{(\points, \sets, \mathbf{w})}

In the remainder of this section we prove Theorem~\ref{w1_hard}
by providing a parameterized reduction from \PartSubIso{} to $\WeightedSegmentSetCover$. The technical statement of the reduction is encapsulated in the following lemma.

\begin{lemma}
\label{w1_construction}
Given an instance $(H,G,\lambda)$ of \PartSubIso{} where $H$ is $3$-regular and has $k$ vertices,
one can in polynomial time construct an instance $\instanceSetCover$ of $\WeightedSegmentSetCover$ and a positive real $W$
such that:
\begin{enumerate}[label=(\arabic*)]
\item all segments in $\sets$ are axis-parallel;
\item \label{part1} if the instance $(H,G,\lambda)$ has a solution, then there exists a solution
to $\instanceSetCover$ of cardinality $\frac{11}{2}k$ and weight at most $W$; and
\item \label{part2} if there exists a solution to $\instanceSetCover$ of weight at most $W$,
then the instance $(H,G,\lambda)$ has a solution.
\end{enumerate}
\end{lemma}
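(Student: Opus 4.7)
The plan is a gadget-based parameterized reduction that places \emph{vertex gadgets} for each $a\in V(H)$ and \emph{edge gadgets} for each $ab\in E(H)$ in disjoint regions of the plane, and uses the weight function to enforce the consistency of the selections. Since $H$ is $3$-regular with $k$ vertices and $3k/2$ edges, the target cardinality $\frac{11}{2}k$ decomposes naturally as $k + 3\cdot \frac{3k}{2}$: one selector segment per vertex of $H$ plus three segments per edge of $H$.

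For each $a \in V(H)$, the vertex gadget consists of a small cluster of points arranged so that any cover of this cluster must include exactly one segment from a catalog of candidates $\{s_v : v \in \lambda^{-1}(a)\}$, where each $s_v$ carries a weight encoding the identity of $v$. For each edge $ab \in E(H)$, the edge gadget contains auxiliary points that can be covered cheaply only by three designated segments chosen from a catalog indexed by pairs $(v,u)$ with $v\in \lambda^{-1}(a)$, $u\in\lambda^{-1}(b)$, and $uv\in E(G)$. The weights are designed so that the canonical cover---choosing $s_{\phi(a)}$ in each vertex gadget and the three segments corresponding to the pair $(\phi(a),\phi(b))$ in each edge gadget---has total weight exactly $W$, which I define as the sum of the canonical contributions. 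For part~\ref{part1}, any valid $\phi$ yields such a canonical cover of cardinality $\frac{11}{2}k$ and weight $W$. For part~\ref{part2}, a sufficiently spread-out weight encoding (e.g., using coefficients that are powers of a large base $N$) ensures that any cover of weight at most $W$ must be canonical: any deviation, such as picking two selectors in a vertex gadget or edge-gadget segments whose catalog entries are inconsistent with the chosen selectors at their endpoints, would cause the total weight to strictly exceed $W$. From a canonical cover one can then read off a valid embedding $\phi$ by inverting the encoding carried by the selected $s_v$.

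The main obstacle is realizing this scheme with \emph{axis-parallel} segments in a way that forbids \emph{cheating covers}, in which a long horizontal or vertical segment cuts across several gadgets and covers points in several of them at once, thereby bypassing the intended weight accounting. I plan to prevent this by isolating gadgets on well-separated rows and columns of a coarse grid, so that the horizontal lines through points of one gadget do not meet points of any other gadget and likewise for vertical lines; any segment covering more than one gadget's point would then have to be collinear with specific rare configurations of points, which can be ruled out by placement. A secondary delicate point is that the edge gadget must force the three chosen edge-segments to come from a single catalog entry $(v,u)$ with $uv\in E(G)$, rather than a mix of segments from different entries; this will be enforced by arranging three specific required points so that only the three segments of a single catalog entry cover them simultaneously, while any combination drawn from different entries misses at least one point and hence fails to cover.
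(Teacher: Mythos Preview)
Your plan differs substantially from the paper's, and as written it has a genuine gap at the crux of the argument: you propose to place vertex gadgets and edge gadgets in \emph{disjoint} regions and enforce consistency between them purely through the weight function. But with geometrically disjoint gadgets, the only coupling between the selection made at vertex $a$ and the selections made in the three edge gadgets incident to $a$ is additive in the weights, and additive encodings do not naturally force agreement. Concretely, suppose you give $s_v$ a weight that records $\mathrm{id}(v)$ in some digit associated with $a$, and give the edge-triple for $(v',u)$ at $ab$ a complementary weight recording $M-\mathrm{id}(v')$ in that same digit. Then the digit-sum is $M$ when $v=v'$, but it is $M+(\mathrm{id}(v)-\mathrm{id}(v'))$ otherwise, which can be \emph{smaller} than $M$. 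Hence an inconsistent choice can undercut $W$ rather than exceed it, and in general positive and negative deviations across the $3k$ checks can cancel. You have not explained any mechanism that avoids this, and standard power-of-$N$ tricks do not fix it without some geometric or combinatorial link between the gadgets.

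The paper solves exactly this issue by making the gadgets \emph{interact geometrically} rather than keeping them disjoint. Each vertex $a$ gets a one-dimensional ``choice gadget'' placed on the horizontal line $y=a$; a cover of that line by four segments of minimum total length necessarily leaves three specific integer points uncovered, one for each edge of $H$ incident to $a$, and the positions of these holes encode the chosen $\phi(a)\in\lambda^{-1}(a)$. The only way to cover the holes is by vertical segments $s_{uv}$ (one per edge $uv$ of $G$, of tiny weight $\delta$), and $s_{uv}$ covers at most two holes, doing so only when $u=\phi(\lambda(u))$ and $v=\phi(\lambda(v))$. A counting argument then forces every vertical segment in a weight-$\le W$ solution to cover two holes, yielding a consistent $\phi$. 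The cardinality decomposes as $4k+\tfrac{3}{2}k$ (four horizontal segments per vertex plus one vertical per edge of $H$), not your $k+3\cdot\tfrac{3}{2}k$. If you want to salvage your outline, you will need some analogous structural link---e.g., having the vertex selector segment physically pass through the incident edge gadgets---rather than relying on weights alone.
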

Note that in \ref{part2} we in fact do not require any bound on the cardinality of the solution, just on its weight.

It is easy to see that Lemma~\ref{w1_construction} implies Theorem~\ref{thm:subisoHardness}. First, Lemma~\ref{w1_construction} gives a parameterized reduction from the $\mathsf{W}[1]$-hard \PartSubIso{} problem with $3$-regular pattern graphs to $\WeightedSegmentSetCover$ parameterized by the cardinality of the sought solution, which shows that the latter problem is $\mathsf{W}[1]$-hard as well. Second, combining the reduction with an algorithm for $\WeightedSegmentSetCover$ with running time as postulated in Theorem~\ref{w1_hard} would give an algorithm for \PartSubIso{} with running time $f(k)\cdot |V(G)|^{o(k/\log k)}$ for a computable function $f$, which would contradict ETH by Theorem~\ref{thm:subisoHardness}. So we are left with giving a proof of Lemma~\ref{w1_construction}, which spans the remainder of this section.

The key element of the proof will be a construction of a {\em{choice gadget}} that works within a single line; this construction is presented in the lemma below. Here, a {\em{chain}} is a sequence $(A_1,A_2,\ldots,A_\ell)$ of subsets of $\mathbb{R}$ such that for each $i\in \{1,\ldots,\ell-1\}$, all numbers in $A_i$ are strictly smaller than all numbers in $A_{i+1}$.

\begin{lemma}\label{lem:choice-gadget}
 Suppose we are given an integer $N>100$ and $p$ chains $\{(A_{j,1},\ldots,A_{j,\ell})\colon j\in \{1,\ldots,p\}\}$ of length $\ell$ each such that the sets $\{A_{j,t}\colon j\in \{1,\ldots,p\}, t\in \{1,\ldots,\ell\}\}$ are all pairwise disjoint and contained in $\{1,\ldots,N\}$. Then one can in polynomial time construct a set of points $\points\subseteq \mathbb{R}$, $\points\supseteq \{1,\ldots,N\}$, as well as a set of segments $\sets$ contained in $\mathbb{R}$ such that the following holds:
 \begin{itemize}
  \item For every $j\in \{1,\ldots,p\}$ and every set $B$ that contains exactly one point from each element of the chain $(A_{j,1},\ldots,A_{j,\ell})$, there exists $\Rr_B\subseteq \sets$ such that $|\Rr_B|=\ell+1$, $\Rr_B$ covers all points of $\points$ except for $B$, and the total length of the segments in $\Rr_B$ is equal to $N+1-2\ell/N^2$.
  \item For every subset of segments $\Rr\subseteq \sets$, if $\Rr$ covers all points in $\points\setminus \{1,\ldots,N\}$, then the total length of segments in $\Rr$ is at least $N+1-2/N$.
  \item For every subset of segments $\Rr\subseteq \sets$, if the total length of segments of $\Rr$ is not larger than $N+\frac{3}{2}$ and $\Rr$ covers all points in $\points\setminus \{1,\ldots,N\}$, then the total length of segments of $\Rr$ is equal to $N+1-2\ell/N^2$ and there exists $j\in \{1,\ldots,p\}$ such that for every $t\in \{1,\ldots,\ell\}$, $\Rr$ does not cover the whole set $A_{j,t}$.
 \end{itemize}
\end{lemma}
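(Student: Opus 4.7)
Set $\epsilon \coloneqq 1/N^2$ (so $2\epsilon < 1$ thanks to $N > 100$). Take $\points \coloneqq \{0, N+1\} \cup \{1, \ldots, N\} \cup \{x - \epsilon, x + \epsilon : x \in \bigcup_{j,t} A_{j,t}\}$, and let $\sets$ consist, for every chain index $j$, of three kinds of segments: \emph{heads} $[0, x - \epsilon]$ for $x \in A_{j,1}$; \emph{links} $[x + \epsilon, y - \epsilon]$ for $t \in \{1,\ldots,\ell-1\}$, $x \in A_{j,t}$, $y \in A_{j,t+1}$; and \emph{tails} $[y + \epsilon, N+1]$ for $y \in A_{j,\ell}$. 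For the first bullet, given $j$ and a selection $B = \{b_1, \ldots, b_\ell\}$ with $b_t \in A_{j,t}$, I take $\mathcal{R}_B$ to be the unique head--link--$\cdots$--link--tail chain following $B$: $[0, b_1 - \epsilon], [b_1 + \epsilon, b_2 - \epsilon], \ldots, [b_\ell + \epsilon, N+1]$. This has $\ell + 1$ segments and, by telescoping, total length $N + 1 - 2\ell/N^2$; the only pieces of $[0, N+1]$ missed by its union are the open intervals $(b_t - \epsilon, b_t + \epsilon)$ of width $<1$, which among all points of $\points$ contain exactly the integers $b_t$.

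The second and third bullets rest on a gap analysis. For any $\mathcal{R} \subseteq \sets$ covering $\points \setminus \{1, \ldots, N\}$, set $U \coloneqq \bigcup \mathcal{R}$. Every segment in $\sets$ has left endpoint in $\{0\} \cup \{y + \epsilon\}$ and right endpoint in $\{N+1\} \cup \{z - \epsilon\}$, and since $0, N+1 \in U$, every maximal gap of $U$ inside $[0, N+1]$ is of the form $(z - \epsilon, y + \epsilon)$ for some $z, y \in \bigcup A_{j,t}$. If $z < y$, the buffer point $z + \epsilon \in \points$ lies strictly inside the gap, contradicting coverage; so $z = y$, and each gap has length exactly $2\epsilon$. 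Distinct gaps pin to distinct integers in $\bigcup A_{j,t} \subseteq \{1, \ldots, N\}$, so at most $N$ of them exist, giving $\sum_{s \in \mathcal{R}} |s| \ge |U| \ge N + 1 - 2/N$, which is the second bullet.

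For the third bullet, assume additionally that $\sum_s |s| \le N + \tfrac{3}{2}$. Any two segments of $\sets$ whose supports intersect in positive measure overlap in length at least $1 - 2\epsilon$: if one contains the other, the inner segment itself is of length $\ge 1 - 2\epsilon$; otherwise the cross-overlap $b_1 - a_2$ is between a right endpoint ($N+1$ or $z - \epsilon$) and a left endpoint ($0$ or $y + \epsilon$), and each combination yields at least $1 - 2\epsilon$ because the integer parts differ by at least $1$. A single such overlap therefore contributes at least $1 - 2\epsilon$ to $\sum_s |s| - |U|$, but the second bullet gives $\sum_s |s| - |U| \le \tfrac{1}{2} + 2/N < 1 - 2/N^2 = 1 - 2\epsilon$ for $N > 100$, so $\mathcal{R}$ is pairwise disjoint and $\sum_s |s| = |U| = N + 1 - 2\epsilon(p-1)$, where $p = |\mathcal{R}|$. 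Reading the cover left-to-right, $0 \in U$ forces a head from some chain $j_1$; at every gap at an integer $z_i \in \bigcup A_{j,t}$, the next segment starts at $z_i + \epsilon$, and since the pairwise-disjointness hypothesis makes the pair $(j, t)$ with $z_i \in A_{j,t}$ unique, that next segment must be a link or a tail of chain $j_1$ at the corresponding level. Induction along the levels of $j_1$ forces exactly $\ell + 1$ segments traversing $1 \to 2 \to \cdots \to \ell$, total length $N + 1 - 2\ell/N^2$, and one skipped integer $z_t \in A_{j_1, t}$ per level, so $\mathcal{R}$ fails to cover any $A_{j_1, t}$ entirely.

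I expect the delicate step to be the non-overlap argument: the entire rigidity of the gadget is driven by the arithmetic inequality $1 - 2/N^2 > \tfrac{1}{2} + 2/N$, which is what makes the length budget $N + \tfrac{3}{2}$ prohibit even a single overlapping pair and which dictates the choice $\epsilon = 1/N^2$. Once pairwise disjointness is secured, the chain-propagation step is a clean bookkeeping argument that rides on the pairwise disjointness of the $A_{j,t}$'s.
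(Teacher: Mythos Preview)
Your proof is correct and follows essentially the same approach as the paper: surround the designated integers by $\pm\epsilon$ buffer points, build head/link/tail segments between buffers, and use the length budget to force pairwise disjointness, after which a left-to-right scan pins the solution to a single chain. Two cosmetic differences worth flagging: the paper places $i\pm\epsilon$ around \emph{every} $i\in\{1,\ldots,N\}$ (not just those in $\bigcup A_{j,t}$) and omits $N{+}1$ from $\points$, which lets the second bullet go through by a direct ``covering $i^+$ forces covering $[i^+,(i{+}1)^-]$'' argument rather than your gap analysis; both routes give the same bound. Also, you reuse the symbol $p$ for $|\mathcal{R}|$ in the third bullet while the statement already uses $p$ for the number of chains---rename it to avoid confusion.
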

\begin{proof}
 Denote $I\coloneqq \{1,\ldots,N\}$ and $\epsilon\coloneqq 1/N^2$ for convenience. For every $i\in I$, let
 $$i^-\coloneqq i-\epsilon\qquad\textrm{and}\qquad i^+\coloneqq i+\epsilon.$$
 Define $I^-\coloneqq \{i^-\colon i\in I\}$, $I^+\coloneqq \{i^+\colon i\in I\}$, and
 $$\points\coloneqq \{0\}\cup I^-\cup I\cup I^+.$$
 Next, for every $j\in \{1,\ldots,p\}$, define the following set of segments:
 $$\Rr_j\coloneqq \{[0,a^-]\colon a\in A_{j,1}\}\cup \bigcup_{t=1}^{\ell-1} \{[a^+,b^-]\colon (a,b)\in A_{j,t}\times A_{j,t+1}\} \cup \{[a^+,N+1]\colon a\in A_{j,\ell}\}.$$
 We set
 $$\sets\coloneqq \bigcup_{j=1}^p \Rr_j.$$
 See Figure~\ref{fig:w1_choice} for a visualization of the construction.
 We are left with verifying the three postulated properties of $\points$ and $\sets$.

\definecolor{environment}{RGB}{100,100,100}
\definecolor{move_variable1}{RGB}{137, 218, 109}
\definecolor{move_variable2}{RGB}{255, 207, 60}
\definecolor{choose_true1}{RGB}{56, 209, 241}
\definecolor{choose_true2}{RGB}{35, 140, 210}
\definecolor{choose_false}{RGB}{255, 9, 9}

\begin{figure}[h]
\centering
\begin{tikzpicture}[scale=1.2]

   \tikzstyle{point}=[circle,fill=black,minimum size=0.15cm,inner sep=0pt]
   \tikzstyle{guard}=[rectangle,draw=black!40,fill=black!20,minimum size=0.12cm,inner sep=0pt]
   \tikzstyle{cover}=[circle,draw=black!40,fill=black!40,minimum size=0.15cm,inner sep=0pt]

   \draw[thick] (-1,0) -- (10,0);
   \draw[line width=3, blue] (0,0) -- (3-0.2,0);
   \draw[line width=3, blue] (3+0.2,0) -- (7-0.2,0);
   \draw[line width=3, blue] (7+0.2,0) -- (9,0);
   
   \foreach \i in {1,2,3,4,5,6,7,8} {
     \node[cover] (p\i) at (\i,0) {};
     \draw[below] (p\i) node {{\scriptsize{$\i$}}};
     \node[guard] at (\i-0.2,0) {};
     \node[guard] at (\i+0.2,0) {};
   }
   
   \node[cover] (q) at (0,0) {};
   \draw[below] (q) node {{\scriptsize{$0$}}};
   \draw[below] (9,0) node {{\scriptsize{$9$}}};
   
  \node[point] at (p3) {};
  \node[point] at (p7) {};

\end{tikzpicture}
\caption{Construction of Lemma~\ref{lem:choice-gadget} for $N=8$. Elements of $I\cup \{0\}$ are depicted with circles and elements of $I^+\cup I^-$ are depicted with squares. Blue segments represent the set $\Rr_B$ for $B=\{3,7\}$.
}
\label{fig:w1_choice}
\end{figure}
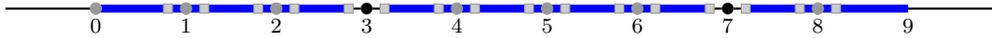

 For the first property, let $b_t$ be the unique element of $B\cap A_{j,t}$, for $t\in \{1,\ldots,\ell\}$, and let
 $$\Rr_B\coloneqq \{[0,b_1^-],[b_1^+,b_2^-],\ldots,[b_{\ell-1}^+,b_\ell^-],[b_\ell^+,N+1]\}.$$
 It is straightforward to see that $\Rr_B$ covers all the points of $\points$ except for $B$, and that the total sum of lengths of segments in $\Rr_B$ is $N+1-2\ell\epsilon=N+1-2\ell/N^2$.

 For the second postulated property, observe that each segment of $\sets$ that covers any point $i^+\in I^+$, in fact covers the whole interval $[i^+,(i+1)^-]$ (where $(N+1)^-=N+1$). Similarly, each segment of $\sets$ that covers any point $i^-\in I^-$, in fact covers the whole interval $[(i-1)^+,i^-]$ (where $0^+=0$). Hence, if $\Rr\subseteq \sets$ covers all points of $\points \setminus I$, in particular $\Rr$ covers all points in $I^+\cup I^-$, hence also all intervals of the form $[i^+,(i+1)^-]$ for $i\in \{0,1,\ldots,N\}$. The sum of the lengths of those intevals is equal to $N+1-2\epsilon N=N+1-2/N$. Hence, the sum of length of intervals in $\Rr$ must be at least $N+1-2/N$.

 For the third postulated property, observe that if two segments of $\sets$ intersect, then their intersection is a segment of length at least $1-2\epsilon$. Since $\Rr$ covers all points of $\points\setminus I$, by the second property the sum of lengths of the segments in $\Rr$ is at least $N+1-2/N$. Now if any of those segments intersected, then the total sum of lengths of the segments in $\Rr$ would be at least $N+1-2/N+(1-2\epsilon)$, which is larger than $N+\frac{3}{2}$. We conclude that the segments of $\Rr$ are pairwise disjoint.

 Since $0\in \points\setminus I$, there is a segment $s_1\in \Rr$ that covers $0$. By construction, there exists $j\in \{1,\ldots,p\}$ such that $s_1=[0,b_{j,1}^-]$ for some $b_{j,1}\in A_{j,1}$. As the segments of $\Rr$ are pairwise disjoint and cover all points in $I^+$, the next (in the natural order on $\mathbb{R}$) segment in $\Rr$ must start at $b_{j,1}^+$, and in particular $b_{j,1}$ is not covered by $\Rr$. Since all sets in all chains on input are pairwise disjoint, the segment in $\Rr$ starting at $b_{j,1}^+$ must be of the form $s_2=[b_{j,1}^+,b_{j,2}^-]$ for some $b_{j,2}\in A_{j,2}$. Continuing this reasoning, we find that in fact $\Rr=\Rr_B$ for some set $B=\{b_{j,1},b_{j,2},\ldots,b_{j,\ell}\}$ such that $b_{j,t}\in A_{j,t}$ for each $t\in \{1,\ldots,\ell\}$. In particular, the total length of segments in $\Rr$ is equal to $N+1-2\epsilon\ell$ and $\Rr$ does not cover any point in $B$; the latter implies that for each $t\in \{1,\ldots,\ell\}$, $\Rr$ does not cover $A_{j,t}$ entirely.
\end{proof}

 \newcommand{\Dd}{\mathcal{D}}

With Lemma~\ref{lem:choice-gadget} established, we proceed to the proof of Lemma~\ref{w1_construction}.

 Let $(H,G,\lambda)$ be the given instance of \PartSubIso{} where $H$ is a $3$-regular graph. Let $k\coloneqq |V(H)|$ and $\ell\coloneqq |E(H)|$; note that $\ell=\frac{3}{2}k$. We may assume that $V(H)=\{1,\ldots,k\}$, and that whenever $uv$ is an edge in $G$, we have that $\lambda(u)\lambda(v)$ is an edge of $H$ (other edges in $G$ play no role in the problem and can be discarded).
 We construct an instance $\instanceSetCover$ of $\WeightedSegmentSetCover$ as follows; see Figure~\ref{fig_w1_whole} for a visualization.

\definecolor{environment}{RGB}{100,100,100}
\definecolor{move_variable1}{RGB}{137, 218, 109}
\definecolor{move_variable2}{RGB}{255, 207, 60}
\definecolor{choose_true1}{RGB}{56, 209, 241}
\definecolor{choose_true2}{RGB}{35, 140, 210}
\definecolor{choose_false}{RGB}{255, 9, 9}

\begin{figure}[h]
\centering
\begin{tikzpicture}[scale=1.2]

   \tikzstyle{point}=[circle,fill=black,minimum size=0.2cm,inner sep=0pt]
   \tikzstyle{guard}=[rectangle,draw=black!40,fill=black!20,minimum size=0.12cm,inner sep=0pt]
   \tikzstyle{cover}=[circle,draw=black!40,fill=black!40,minimum size=0.15cm,inner sep=0pt]

   \foreach \i in {0,1,2,3} {
   \draw[thick,black!50] (-1,\i*0.5) -- (8,\i*0.5);
     
   }
   
   \foreach \m/\i/\j/\k in {0/1/2/3,0.5/1/4/5,1/2/4/6,1.5/3/5/6} {
        \draw[line width=3, blue] (0,\m) -- (\i-0.2,\m);
        \draw[line width=3, blue] (\i+0.2,\m) -- (\j-0.2,\m);
        \draw[line width=3, blue] (\j+0.2,\m) -- (\k-0.2,\m);
        \draw[line width=3, blue] (\k+0.2,\m) -- (7,\m);
   }
   
   \foreach \i/\j/\k in {1/0/1,2/0/2,3/0/3,4/1/2,5/1/3,6/2/3} {
        \draw[line width=3,orange] (\i,\j*0.5) -- (\i,\k*0.5);
        \node[point] at (\i,\j*0.5) {};
        \node[point] at (\i,\k*0.5) {};
   }

\end{tikzpicture}
\caption{Example solution in the instance $(\points,\sets)$ constructed in the proof of Lemma~\ref{w1_construction} for $H=K_4$. Blue segments belong to the sets $\sol_i$ for $i\in \{1,2,3,4\}$ and orange segments belong to $\Dd$.
}
\label{fig_w1_whole}
\end{figure}

 For each edge $ab\in E(H)$, let $E_{ab}$ be the subset of those edges $uv$ of $G$ for which $\lambda(u)=a$ and $\lambda(v)=b$. Thus, $\{E_{ab}\colon ab\in E(H)\}$ is a partition of $E(G)$.
 Let $N\coloneqq |E(G)|$ and $\xi\colon E(G)\to \{1,\ldots,N\}$ be any bijection such that for each $ab\in E(H)$, $\xi(E_{ab})$ is a contiguous interval of integers. By copying some vertices of $G$ if necessary, we may assume that $N>100k$.

 Consider any $a\in \{1,\ldots,k\}$ and let $b_1,b_2,b_3$ be the three neighbors of $a$ in $H$, ordered so that $(\xi(E_{ab_1}),\xi(E_{ab_2}),\xi(E_{ab_3}))$ is a chain. For each $u\in \lambda^{-1}(a)$, let $E_u$ be the set of edges of $G$ incident to $u$, and let us construct the chain
 $$C_u\coloneqq (\xi(E_u\cap E_{ab_1}),\xi(E_u\cap E_{ab_2}),\xi(E_u\cap E_{ab_3})).$$
 Note that all sets featured in all the chains $C_u$, for $u\in \lambda^{-1}(a)$, are pairwise disjoint.
 We now apply Lemma~\ref{lem:choice-gadget} for the integer $N$ and the chains $\{C_u\colon u\in \lambda^{-1}(a)\}$. This way, we construct a suitable point set $\points_a\subseteq \mathbb{R}$ and a set of segments $\sets_a$ contained in $\mathbb{R}$. We put all those points and segments on the line $\{(x,a)\colon x\in \mathbb{R}\}$; that is, every point $x\in \points_a$ is replaced with the point $(x,a)$, and similarly for the segments of $\sets_a$. By somehow abusing the notation, we let $\points_a$ and $\sets_a$ be the point set and the segment set after the~replacement.

 Next, for every edge $uv$ of $G$, we define $s_{uv}$ to be the segment with endpoints $(\xi(uv),a)$ and $(\xi(uv),b)$, where $a=\lambda(u)$ and $b=\lambda(v)$.

 We set
 $$\points\coloneqq \bigcup_{a=1}^k \points_a\qquad \textrm{and}\qquad \sets\coloneqq \{s_{uv}\colon uv\in E(G)\}\cup \bigcup_{a=1}^k \sets_a.$$
 Note that all segments in sets $\sets_a$ are horizontal and each segment $s_{uv}$ is vertical, thus $\sets$ consists of axis-parallel segments.
 Each segment $s\in \bigcup_{a=1}^k \sets_a$ is assigned weight $\mathbf{w}(s)$ equal to the length of $s$, and each segment $s_{uv}$ for $uv\in E(G)$ is assigned weight $\mathbf{w}(s_{uv})=\delta$, where $\delta\coloneqq 1/N^4$. Finally, we set
 $$W\coloneqq k\cdot (N+1-6/N^2)+\delta\ell.$$

 This concludes the construction of the instance $\instanceSetCover$. We are left with verifying the correctness of the reduction, which is done in the following two claims.

 \begin{claim}\label{cl:there}
  Suppose the input instance $(H,G,\lambda)$ of \PartSubIso{} has a solution. Then the output instance $\instanceSetCover$ of $\WeightedSegmentSetCover$ has a solution of cardinality $4k+\ell=\frac{11}{2}k$ and weight at most $W$.
 \end{claim}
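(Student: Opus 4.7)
Given a solution $\phi\colon V(H)\to V(G)$ witnessing that $(H,G,\lambda)$ is a yes-instance, the plan is to assemble $\sol$ from two kinds of pieces. First, for each $a\in V(H)$, I apply the first property of Lemma~\ref{lem:choice-gadget} on the horizontal line $y=a$ to cover all of $\points_a$ except for three ``holes'' that correspond to the three edges of $H$ incident to $a$. Second, for each edge $ab\in E(H)$, I add the vertical segment $s_{\phi(a)\phi(b)}$, which simultaneously plugs the hole on line $y=a$ and the hole on line $y=b$ corresponding to $ab$.

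More concretely, fix $a\in V(H)$ and let $b_1,b_2,b_3$ be its neighbors, ordered so that $(\xi(E_{ab_1}),\xi(E_{ab_2}),\xi(E_{ab_3}))$ is a chain as in the construction. Since $\phi$ respects $\lambda$ and preserves edges, one has $\phi(a)\in\lambda^{-1}(a)$ and $\phi(a)\phi(b_i)\in E_{\phi(a)}\cap E_{ab_i}$ for each $i\in\{1,2,3\}$. Therefore the set
$$B_a\coloneqq\{\xi(\phi(a)\phi(b_i))\colon i\in\{1,2,3\}\}$$
picks exactly one element from each set of the chain $C_{\phi(a)}$ that was fed into Lemma~\ref{lem:choice-gadget} during the construction. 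The lemma then yields a subset $\Rr_{B_a}\subseteq\sets_a$ of exactly $4$ horizontal segments that covers $\points_a\setminus B_a$ and whose total weight equals the total length $N+1-6/N^2$.

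Define
$$\sol\coloneqq\bigcup_{a=1}^k \Rr_{B_a}\ \cup\ \{s_{\phi(a)\phi(b)}\colon ab\in E(H)\}.$$
The remaining verifications are routine bookkeeping. The sets $\Rr_{B_a}$ live on pairwise distinct horizontal lines, so they are automatically disjoint and contribute $4k$ segments of total weight $k(N+1-6/N^2)$; the $\ell=\tfrac{3}{2}k$ vertical segments are pairwise distinct and each contribute weight $\delta$. This yields cardinality $4k+\ell=\tfrac{11}{2}k$ and total weight $k(N+1-6/N^2)+\delta\ell=W$. Coverage is immediate: each $\Rr_{B_a}$ covers $\points_a\setminus B_a$ by construction, and every point $(\xi(\phi(a)\phi(b_i)),a)\in B_a$ is an endpoint of the vertical segment $s_{\phi(a)\phi(b_i)}\in\sol$.

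I do not anticipate a genuine obstacle here: this is the ``easy direction'' of the reduction, designed precisely so that the embedding $\phi$ translates directly into a gadget-solution via one application of Lemma~\ref{lem:choice-gadget} per vertex of $H$ plus one vertical segment per edge of $H$. All the subtlety of Lemma~\ref{w1_construction} will sit in the converse implication~\ref{part2}, where the third property of Lemma~\ref{lem:choice-gadget} has to be leveraged to force any budget-respecting solution to encode a valid partitioned embedding.
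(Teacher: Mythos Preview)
Your proposal is correct and follows essentially the same approach as the paper: for each vertex $a\in V(H)$ apply the first property of Lemma~\ref{lem:choice-gadget} with the chain $C_{\phi(a)}$ and the transversal $B_a=\{\xi(\phi(a)\phi(b_i))\colon i=1,2,3\}$ to get four horizontal segments of total length $N+1-6/N^2$, then plug the three holes with the vertical segments $s_{\phi(a)\phi(b)}$. The only (harmless) slip is notational: $B_a$ as you define it is a set of integers, so the uncovered points on line $y=a$ are $\{(b,a)\colon b\in B_a\}$ rather than elements of $B_a$ itself.
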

 \begin{claimproof}
  Let $\phi$ be the supposed solution to $(H,G,\lambda)$. By the first property of Lemma~\ref{lem:choice-gadget}, for every $a\in \{1,\ldots,k\}$ there is a set $\Rr_{\phi,a}$ of size $4$ and total weight $N+1-6/N^2$ that covers all points from $\points_a$ except for the points $$(\xi(\phi(a)\phi(b_1)),a),(\xi(\phi(a)\phi(b_2)),a),(\xi(\phi(a)\phi(b_3)),a),$$
  where $b_1,b_2,b_3$ are the neighbors of $a$ in $H$.
  Define
  $$\sol\coloneqq \{s_{\phi(a)\phi(b)}\colon ab\in E(H)\}\cup \bigcup_{a=1}^k \Rr_{\phi,a}.$$
  Thus, for each $a\in \{1,\ldots,k\}$, the aforementioned points of $\points_a$ not covered by $\Rr_{\phi,a}$ are actually covered by the segments $s_{\phi(a)\phi(b_1)},s_{\phi(a)\phi(b_2)},s_{\phi(a)\phi(b_3)}$. We conclude that $\sol$ covers all the points in $\points$ and has cardinality $4k+\ell=\frac{11}{2}k$ and total weight $W$, as promised.
 \end{claimproof}

 \begin{claim}\label{cl:back}
  Suppose the output instance $\instanceSetCover$ of $\WeightedSegmentSetCover$ has a solution of weight at most $W$. Then the input instance $(H,G,\lambda)$ of \PartSubIso{} has a solution.
 \end{claim}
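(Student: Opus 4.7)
The plan is to peel apart the candidate solution $\sol$ into its horizontal and vertical components, use the two length-based properties of Lemma~\ref{lem:choice-gadget} to force the horizontal part on each line into the canonical form given by the lemma, and then exploit a tight counting argument to extract the embedding $\phi$.

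First I would let $\Rr_a \subseteq \sol$ be the horizontal segments lying on the line $y=a$ for each $a\in\{1,\ldots,k\}$, and let $\Dd \subseteq \sol$ be the vertical segments. Since any vertical segment $s_{uv}\in \Dd$ meets the line $y=a$ in at most the single point $(\xi(uv),a)\in I_a$, all the ``guard'' points of $\points_a\setminus I_a$ must be covered by $\Rr_a$. Lemma~\ref{lem:choice-gadget}'s second property then gives $\sum_{s\in\Rr_a}|s|\ge N+1-2/N$ for every $a$.

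Next I would argue that no single line can have horizontal length as large as $N+\tfrac{3}{2}$: indeed, if one line exceeded $N+\tfrac{3}{2}$, the $k-1$ other lines still contribute at least $N+1-2/N$ each, and a direct comparison with $W=k(N+1-6/N^2)+\delta\ell$ using the hypothesis $N>100k$ yields a contradiction. Consequently Lemma~\ref{lem:choice-gadget}'s third property applies on each line, forcing $\sum_{s\in\Rr_a}|s|=N+1-6/N^2$ and producing a distinguished vertex $u_a\in\lambda^{-1}(a)$ such that the only points of $I_a$ left uncovered by $\Rr_a$ are exactly three points $(\xi(e_{a,1}),a),(\xi(e_{a,2}),a),(\xi(e_{a,3}),a)$, one per chain element $\xi(E_{u_a}\cap E_{ab_t})$.

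With the horizontal weight pinned to $k(N+1-6/N^2)$, the vertical budget is at most $\delta\ell$, so $|\Dd|\le\ell=3k/2$. But the $k$ lines contain $3k=2\ell$ uncovered points in total, each of which must be the endpoint of some $s_{uv}\in\Dd$, and each vertical segment $s_{uv}$ contributes at most two such endpoints (one per line it touches). Therefore $|\Dd|=\ell$ and, crucially, every segment of $\Dd$ must hit an uncovered point on both of its endpoint lines.

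Finally I would define $\phi(a)\coloneqq u_a$, which automatically satisfies $\lambda(\phi(a))=a$. For an edge $ab\in E(H)$, the uncovered point on line $y=a$ associated with neighbor $b$ is $(\xi(u_a v),a)$ for some $v\in\lambda^{-1}(b)$, and the vertical segment covering it is $s_{u_a v}$; its other endpoint $(\xi(u_a v),b)$ must also be uncovered, which by the characterization on line $y=b$ forces $u_a v$ to be incident to $u_b$, and hence $v=u_b$. Thus $\phi(a)\phi(b)=u_a u_b\in E(G)$, giving a valid homomorphism. The main obstacle I anticipate is executing the tight weight accounting cleanly—specifically, using $N>100k$ to rule out any line of horizontal length exceeding $N+\tfrac{3}{2}$ so that the rigid third property of Lemma~\ref{lem:choice-gadget} kicks in, after which the uncovered-point counting forces the embedding.
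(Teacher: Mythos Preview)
Your proposal is correct and follows essentially the same route as the paper: decompose $\sol$ into horizontal parts $\sol_a$ and vertical part $\Dd$, use the second property of Lemma~\ref{lem:choice-gadget} to lower-bound each $\sol_a$, use the total weight to force every $\sol_a$ below $N+\tfrac32$, apply the third property to extract $\phi(a)$ and three uncovered points per line, and then do the tight $2\ell$-versus-$\ell$ counting on $\Dd$ to conclude that $\phi$ is a subgraph embedding.

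One small remark: your statement that ``the only points of $I_a$ left uncovered by $\Rr_a$ are exactly three points'' is slightly stronger than what the third bullet of Lemma~\ref{lem:choice-gadget} literally asserts (it only guarantees at least one uncovered point per chain element). The paper sidesteps this by simply \emph{choosing} three specific uncovered points $X_a$ per line and running the counting on $X=\bigcup_a X_a$; the ``at most two points of $X$ per vertical segment'' claim then follows just from the incidence condition $e_{a,t}\ni\phi(a)$, without needing to know that there are no further uncovered points. Your version is also fine, since the proof of the lemma in fact shows $\Rr=\Rr_B$ and hence exactly $\ell$ uncovered points, but you should flag that you are using this stronger conclusion.
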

 \begin{claimproof}
  Let $\sol$ be the supposed solution to $\instanceSetCover$. Denote
  $$\Dd\coloneqq \sol\cap \{s_{uv}\colon uv\in E(G)\}$$
  and
  $$\sol_a\coloneqq \sol\cap \sets_a\qquad\textrm{for }a\in \{1,\ldots,k\}.$$
  Fix $a\in \{1,\ldots,k\}$ for a moment. Observe that the segments from $\Dd$ can only cover points with positive integer coordinates within the set $\points_a$, hence the whole point set $\points_a\setminus (\{1,\ldots,N\}\times \{a\})$ has to be covered by $\sol_a$. By the second property of Lemma~\ref{lem:choice-gadget} we infer that the total weight of $\sol_a$ must be at least $N+1-2/N$.

  Observe now that
  $$W-k\cdot (N+1-2/N)=\delta \ell+2k/N-6k/N^2<\frac{1}{2}.$$
  It follows that the total weight of each set $\sol_a$ must be smaller than $N+\frac{3}{2}$, for otherwise the sum of weights of sets $\sol_a$ would be larger than $W$. By the third property of Lemma~\ref{lem:choice-gadget}, we infer that for every $a\in \{1,\ldots,k\}$, the total weight of $\sol_a$ is equal to $N+1-6/N^2$ and there exists $\phi(a)\in \lambda^{-1}(a)$ such that $\sol_a$ does not entirely cover any of the sets $\xi(E_{\phi(a)}\cap E_{ab_1}),\xi(E_{\phi(a)}\cap E_{ab_2}),\xi(E_{\phi(a)}\cap E_{ab_3})$, where $b_1,b_2,b_3$ are the three neighbors of $a$ in $H$. In particular, there are edges $e_{a,b_1}\in E_{ab_1},e_{a,b_2}\in E_{ab_2},e_{a,b_3}\in E_{ab_3}$, all sharing the endpoint~$\phi(a)$, such that $\sol_a$ does not cover the points $(\xi(e_{a,b_1}),a),(\xi(e_{a,b_2}),a),(\xi(e_{a,b_3}),a)$. Call these points $X_a$ and let $X\coloneqq \bigcup_{a=1}^k X_a$. Note that
  $$|X|=3k=2\ell$$
  and that $X$ must be entirely covered by $\Dd$.

  Since the weight of $\sol_a$ is equal to $N+1-6/N^2$ for each $a\in \{1,\ldots,k\}$, the weight of $\Dd$ is upper bounded by
  $$W-k\cdot (N+1-6/N^2)=\delta \ell.$$
  As every member of $\Dd$ has weight $\delta$, we conclude that $|\Dd|\leq \ell$. Now, one can readily verify that every segment $s_{uv}\in \Dd$ can cover at most two points in $X$, as $X$ cannot contain more than two points with the same horizontal coordinate (recall that this coordinate is the index of an edge of $G$). Moreover, $s_{uv}$ can cover two points in $X$ only if $u=\phi(a)$ and $v=\phi(b)$, where $a=\lambda^{-1}(u)$ and $b=\lambda^{-1}(v)$. As $|X|=2\ell$ and $|\Dd|\leq \ell$, this must be the case for every segment in $\Dd$. In particular, $\phi(a)\phi(b)$ must be an edge in $G$ for every edge $ab\in E(H)$, so $\phi$ is a solution to the instance $(H,G,\lambda)$ of \PartSubIso.
 \end{claimproof}

 Claims~\ref{cl:there} and~\ref{cl:back} finish the proof of Lemma~\ref{w1_construction}. So the proof of Theorem~\ref{w1_hard} is also done.

 \section{Hardness of approximating $\SegmentSetCover$}\label{section:apx_hard}

 \newcommand{\setCoverInstance}{(\points, \sets)}
\newcommand{\approximate}{\mathsf{approx}^{*}}
\newcommand{\OurSAT}{\textsf{Max-(E3,E5)-SAT}\xspace}

In this section we prove our hardness of approximation result for $\SegmentSetCover$ --- Theorem~\ref{segment_cover_apx_hard} --- restated below for convenience.

\segmentCoverApxHard*

The proof of Theorem~\ref{segment_cover_apx_hard} relies on a gap-preserving reduction from the \OurSAT problem, for which a hardness of approximation statement has been established by Arora et al.~\cite{hastad}. We first need to recall these results.

\subsection{\OurSAT}

A CNF formula $\varphi$ is called an {\em{(E3,E5)-formula}} if every clause of $\varphi$ contains exactly $3$ literals and every variable appears in $\varphi$ exactly $5$ times. Note that by double-counting variable-clause incidences, an (E3,E5)-formula with $n$ variables has exactly $\frac{5}{3}n$ clauses, and in particular $n$ is divisible by $3$.
Then \OurSAT is the optimization problem defined as follows. On input, we are given an (E3,E5)-formula $\varphi$, and the task is to find a boolean assignment to the variables of $\varphi$ that satisfies the maximum possible number of clauses.

In~\cite{AroraLMSS98}, Arora et al. proved the following statement, from which we will reduce. Here, a CNF formula $\varphi$ is called {\em{$\alpha$-satisfiable}}, for $\alpha\in [0,1]$, if there exists a boolean assignment to the variables of $\varphi$ that satisfies at least an $\alpha$ fraction of all the clauses, and $\varphi$ is {\em{satisfiable}} if it is $1$-satisfiable.

\begin{theorem}[\cite{AroraLMSS98}, see also Theorem~2.24 in~\cite{hastad}]
	\label{hastadtheorem}
	There exists $\alpha<1$ such that it is $\mathsf{NP}$-hard to distinguish satisfiable
	(E3,E5)-formulas from (E3,E5)-formulas that are
	less than~$\alpha$-satisfiable.
\end{theorem}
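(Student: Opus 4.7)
\medskip

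\noindent\textbf{Proof proposal.}
The plan is to derive Theorem~\ref{hastadtheorem} from the classical PCP theorem in its Max-3SAT gap form, via the standard bounded-occurrence reduction. First, I would invoke the PCP theorem in the following (by now textbook) formulation: there exists a constant $\beta<1$ such that it is $\mathsf{NP}$-hard to distinguish a satisfiable 3CNF formula from one in which no assignment satisfies more than a $\beta$ fraction of the clauses. (A very general form asserts this for $O(1)$-literal CNF; a simple padding makes every clause contain exactly $3$ literals by repeating a literal inside a short clause, which preserves satisfiability and the gap up to a constant factor, so one may freely assume the \emph{E3} condition.)

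Next, I would perform the \emph{expander replacement} that is due to Papadimitriou and Yannakakis. For every variable $x$ appearing $m$ times in the E3 formula, introduce fresh copies $x^{(1)},\ldots,x^{(m)}$, replace the $i$-th occurrence of $x$ by $x^{(i)}$, and then enforce the copies to take equal values by adding a set of equality constraints indexed by the edges of a constant-degree expander $G_m$ on vertex set $\{1,\ldots,m\}$. Each equality $x^{(i)} = x^{(j)}$ is expressed as the two 3-clauses $(x^{(i)}\lor \neg x^{(j)}\lor \neg x^{(j)})$ and $(\neg x^{(i)}\lor x^{(j)}\lor x^{(j)})$, so the formula remains an E3-formula and every variable now appears a bounded number of times (at most $d+1$ where $d$ is the expander degree). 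The spectral gap of $G_m$ guarantees that if a large fraction of clauses is satisfied in the new formula, then in each cluster of copies of $x$, rounding to the majority value violates only a tiny fraction of the expander edges; hence every assignment can be pulled back to an assignment for the original formula losing at most a constant factor in the gap. This is the heart of the reduction and also the main technical obstacle: one must argue carefully that the small gap $1-\beta$ survives after being diluted by the (linearly many) newly introduced equality clauses, which is where the expansion property is crucially used.

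Having reduced to a formula in which every clause has exactly $3$ literals and every variable appears at most some constant $d_0$ times, the remaining task is to make the number of occurrences exactly $5$. This is a syntactic regularization. First, I would normalize all variables to occur exactly $d_0$ times: if a variable $x$ currently appears $t<d_0$ times, adjoin $d_0-t$ trivially-satisfied dummy $3$-clauses containing $x$ together with two fresh padding literals whose assignment is forced (say via a small fixed sub-formula). Then I would replace each variable $x$, now appearing exactly $d_0$ times, by a chain of $d_0$ copies linked again by a constant-size expander gadget scaled so that the total per-variable occurrence count equals $5$; concretely, one can split occurrences across copies so that each copy of each variable appears either once in the original clauses plus a bounded number of times in the equality clauses, arranging the arithmetic so that the sum is exactly $5$. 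The divisibility constraint $3\mid n$ (noted just before the theorem) is easy to ensure by adding a fixed number of disjoint gadgets of variables and satisfied clauses.

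Throughout the second and third steps, the relative loss in the satisfied fraction is bounded by a universal constant depending only on $d_0$ and the expander parameters, so there exists $\alpha<1$, determined by $\beta$ from the PCP theorem and by these constants, for which the original gap translates to a $(1,\alpha)$-gap in the resulting (E3,E5)-formula. This gives the claimed $\mathsf{NP}$-hardness of distinguishing satisfiable (E3,E5)-formulas from those that are less than $\alpha$-satisfiable, as desired.
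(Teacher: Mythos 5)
First, note that the paper does not prove Theorem~\ref{hastadtheorem} at all: it is imported as a black box from Arora et al.\ (see also Theorem~2.24 in H{\aa}stad's survey), so there is no in-paper argument to compare against. Your sketch reconstructs the standard literature proof --- PCP theorem in its gap-Max-3SAT form, followed by the Papadimitriou--Yannakakis expander replacement to bound occurrences --- and those first two steps are sound in outline; in particular you correctly identify that perfect completeness is preserved (set all copies equal) and that expansion is what prevents the gap from being diluted by the equality clauses.

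The genuine gap is in the final regularization to \emph{exactly} five occurrences per variable, which you dismiss as ``syntactic'' and ``arranging the arithmetic.'' With the equality gadget you propose, the edge $\{i,j\}$ is encoded as $(x^{(i)}\lor \neg x^{(j)}\lor \neg x^{(j)})$ and $(\neg x^{(i)}\lor x^{(j)}\lor x^{(j)})$, so each expander edge contributes $2$ occurrences to one endpoint and $4$ to the other (or, if a clause is read as a set of literals, it has only two literals and violates E3). A copy that occurs once in an original clause and has degree $d$ in the expander therefore accumulates $1+2a+4b$ occurrences with $a+b=d$, and forcing this to equal $5$ forces $d\le 2$ --- i.e.\ a matching or a cycle, which is not an expander, so the soundness argument collapses exactly where you need it. Likewise, ``fresh padding literals whose assignment is forced via a small sub-formula'' is not something you can do with satisfiable clauses while also controlling those padding variables' occurrence counts to be exactly $5$. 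The known constructions (e.g.\ Feige's 3SAT-5 reduction) resolve this with a more careful gadget and occurrence bookkeeping; as written, your last step does not go through, even though the overall strategy is the right one.
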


\subsection{Statement of the reduction}

As mentioned, our proof of Theorem~\ref{segment_cover_apx_hard} relies on a gap-preserving reduction from \OurSAT to $\SegmentSetCover$. The statement of this reduction is encapsulated in the following lemma.

\begin{lemma}
	\label{apxconstruction}
	Given an instance $\varphi$ of \OurSAT
	with $n$ variables and $m$ clauses,
	one can construct an instance $\setCoverInstance$ of $\SegmentSetCover$ with
	axis-parallel segments satisfying the following properties:
	\begin{enumerate}[label={(\arabic*)},nosep]
	\item \label{item:apxconstruction_correctness}
	If $\varphi$ has a solution that satisfies $m-k$ clauses,
	then there exists a solution to the instance $\setCoverInstance$ of cardinality at most $\frac{64}{3}n+k$.

	\item \label{item:apxconstruction_completness}
	If the instance $\setCoverInstance$ has a solution of cardinality $\frac{64}{3}n+k$, then
	there exists a solution to $\varphi$ that satisfies at least  $m-5k$ clauses.

	\item \label{lemma:apxconstruction:enumerate:extension}
	For any $\sol \subseteq \sets$, if $\sol^{+\frac{1}{2}}$
	covers all the points in $\points$, then $\sol$ also covers all the points in~$\points$.
\end{enumerate}
\end{lemma}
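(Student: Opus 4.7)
The plan is to give a gap-preserving reduction using two types of gadgets on an integer grid, with axis-parallel segments. The coefficient $\tfrac{64}{3}$ arises by fixing per-variable and per-clause segment budgets $c_V$ and $c_C$ with $3 c_V + 5 c_C = 64$ (for concreteness, $c_V = c_C = 8$); then for a satisfiable $\varphi$ the canonical cost will be $c_V \cdot n + c_C \cdot \tfrac{5n}{3} = \tfrac{64}{3}n$, and each unsatisfied clause will force one extra segment.

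For each variable $x_i$ we would build a \emph{variable gadget}: a set of points and horizontal segments on a dedicated horizontal strip, designed so that any cover uses at least $c_V$ segments and every cover of size exactly $c_V$ falls into one of two \emph{canonical patterns} corresponding to $x_i = \texttt{true}$ or $x_i = \texttt{false}$. The canonical pattern for a given truth value uses segments that extend into the columns reserved for the five occurrences of the corresponding literal, covering a shared \emph{literal point} at each such intersection. For each clause $C_j$ we would build a \emph{clause gadget}: points and vertical segments on a dedicated column, including the three literal points of $C_j$. The clause gadget admits a cover of size $c_C$ whenever at least one of its three literal points is already covered externally by some variable gadget, and otherwise requires $c_C + 1$ segments.

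Part~\ref{item:apxconstruction_correctness} then follows by combining the canonical variable covers matching the assumed SAT assignment with the canonical clause covers for the satisfied clauses and one extra segment per unsatisfied clause, producing $c_V n + c_C m + k = \tfrac{64}{3}n + k$ segments in total. For part~\ref{item:apxconstruction_completness}, given any cover $\sol$ of cardinality $\tfrac{64}{3}n + k$, we partition $\sol$ into per-gadget subsets $\sol_{V_i}$ and $\sol_{C_j}$ according to which row or column each segment occupies. Since $|\sol_{V_i}| \ge c_V$ and $|\sol_{C_j}| \ge c_C$, the total excess $E \coloneqq \sum_i (|\sol_{V_i}| - c_V) + \sum_j (|\sol_{C_j}| - c_C)$ is at most $k$. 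We extract an assignment $\phi$ by reading off the canonical pattern of every variable gadget that uses exactly $c_V$ segments, and choosing $x_i$ arbitrarily whenever $|\sol_{V_i}| > c_V$. Each clause gadget of size $c_C$ is aided externally by a satisfying literal; if that literal comes from a canonical variable, the clause is satisfied in $\phi$. Each inconsistent variable can spoil at most $5$ clauses (one per occurrence of $x_i$), and each clause gadget of size exceeding $c_C$ can be directly unsatisfied, so the total number of unsatisfied clauses is at most $5 E \le 5k$.

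For part~\ref{lemma:apxconstruction:enumerate:extension} we would select the coordinates so that on every horizontal or vertical line, the gap between the endpoint of any segment $s \in \sets$ and the nearest point of $\points$ on the same line but outside $s$ is at least $|s|/4$. Since the $\tfrac{1}{2}$-extension grows a segment by exactly a quarter of its length on each side with open endpoints, this forces $s^{+\frac{1}{2}} \cap \points = s \cap \points$, yielding the property. The main obstacle will be the detailed design of the variable gadget so that its only size-$c_V$ covers are the two canonical patterns and every deviation costs at least one extra segment; this is what drives the factor $5k$ in the soundness direction. A secondary challenge is calibrating the geometric spacing so that it simultaneously satisfies the quarter-length separation required for part~\ref{lemma:apxconstruction:enumerate:extension}.
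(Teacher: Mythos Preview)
Your high-level plan matches the paper's approach almost exactly: the paper also builds per-variable and per-clause gadgets with fixed budgets, uses the excess/``overpaid'' counting argument for part~(2), and relies on geometric spacing to ensure that $\tfrac{1}{2}$-extension picks up no new points. The paper's concrete budgets are $c_V = 3$ and $c_C = 11$ (so $3\cdot 3 + 5\cdot 11 = 64$) rather than your $c_V = c_C = 8$; its variable gadget is a six-point, six-segment configuration with two minimum covers of size~$3$, and its clause gadget is assembled from two explicit ``OR-gadgets'' plus six auxiliary transfer segments, giving the $11$/$12$ dichotomy.

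The gap in your proposal is that it remains a plan: you assert the existence of gadgets with the required rigidity without constructing them, and you explicitly flag the variable-gadget design as ``the main obstacle.'' That obstacle is essentially the entire content of the lemma --- the excess-counting argument you sketch for part~(2) is routine once the gadgets exist with the stated lower bounds. One point worth noting is that the paper does \emph{not} need your strong rigidity property (that every size-$c_V$ cover of a variable gadget is one of two canonical patterns). It only needs the weaker fact that a minimum-size cover cannot contain both the \texttt{true}-extending and the \texttt{false}-extending segments simultaneously; since the literal points sit only on those two segments, this already guarantees that a non-overpaid variable gadget covers literal points of at most one polarity, which is all the soundness argument uses. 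This weaker property is far easier to engineer and is what makes the paper's tiny $c_V = 3$ gadget feasible.
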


We prove Lemma~\ref{apxconstruction} in
subsequent sections. Section~\ref{construction_description}
describes the constructed instance $\setCoverInstance$.
Property~\ref{item:apxconstruction_correctness} is verified in Lemma~\ref{construction_correctness},
\ref{item:apxconstruction_completness} in Lemma~\ref{construction_completness},
and finally \ref{lemma:apxconstruction:enumerate:extension} trivially
follows from Lemma~\ref{lemma:exntension_robust}.
But before we proceed to the proof, let us show how
Theorem~\ref{segment_cover_apx_hard} follows from combining Lemma~\ref{apxconstruction}
and Theorem~\ref{hastadtheorem}.

\begin{proof}[Proof of Theorem \ref{segment_cover_apx_hard}]
We set
$$\gamma\coloneqq \frac{1-\alpha}{64},$$
where $\alpha<1$ is the constant provided by Theorem~\ref{hastadtheorem}. Towards contradiction, suppose that there is a polynomial-time algorithm that given an instance $(\points,\sets)$ of $\SegmentSetCover$ with axis-parallel segments, outputs $\sol\subseteq \sets$ such that $\sol^{+\frac{1}{2}}$ covers $\points$ and $|\sol|\leq (1+\gamma)\opt$, where $\opt$ is the minimum cardinality of a subset of $\sets$ that covers $\points$.

Consider the following polynomial-time algorithm for the \OurSAT problem: given an (E3,E5)-formula $\varphi$ with $n$ variables and $m$ clauses,
\begin{itemize}[nosep]
 \item apply the algorithm of Lemma~\ref{apxconstruction} to compute an instance $\setCoverInstance$ of $\SegmentSetCover$;
 \item apply the assumed approximation algorithm for $\SegmentSetCover$ with $\frac{1}{2}$-extension, yielding a suitable subfamily $\sol\subseteq \sets$; and
 \item output that $\varphi$ is satisfiable if $|\sol|\leq (1+\gamma)\cdot \frac{64}{3}n$, and otherwise output that $\varphi$ is less than $\alpha$-satisfiable.
\end{itemize}
We claim that this algorithm correctly distinguishes satisfiable (E3,E5)-formulas from those that are less than $\alpha$-satisfiable. As the algorithm works in polynomial time, by Theorem~\ref{hastadtheorem} this would entail that $\mathsf{P}=\mathsf{NP}$.

First, suppose $\varphi$ is satisfiable. By Lemma~\ref{apxconstruction}, property~\ref{item:apxconstruction_correctness}, instance $(\points,\sets)$ has a solution of cardinality at most $\frac{64}{3}n$. Hence, the computed set $\sol$ will have cardinality at most $(1+\gamma)\cdot\frac{64}{3}n$, and the algorithm will correctly conclude that $\varphi$ is satisfiable.

Second, suppose $\varphi$ is less than $\alpha$-satisfiable. Letting $\opt$ be the minimum size of a subset of $\sets$ that covers $\points$, from Lemma~\ref{apxconstruction}, property~\ref{item:apxconstruction_completness}, we conclude that
$$\opt>\frac{64}{3}\cdot n +\frac{(1-\alpha)}{5}\cdot m=\frac{64}{3}\cdot n+\frac{(1-\alpha)}{3}\cdot n=\frac{65-\alpha}{3}\cdot n=(1+\gamma)\cdot \frac{64}{3}n.$$
By assumption, the set $\sol\subseteq \sets$ computed by the algorithm is such that $\sol^{+\frac{1}{2}}$ covers $\points$. By Lemma~\ref{apxconstruction}, property~\ref{lemma:apxconstruction:enumerate:extension}, in fact we even have that $\sol$ covers $\points$, and hence it must hold that $|\sol|\geq \opt$. We conclude that $|\sol|>(1+\gamma)\cdot \frac{64}{3}n$ and the algorithm will correctly conclude that $\varphi$ is less than $\alpha$-satisfiable.
\end{proof}

\subsection{Construction}
\label{construction_description}
We proceed to the proof of Lemma \ref{apxconstruction}.
Let then $\varphi$ be the input instance of \OurSAT, let $x_1,\ldots,x_n$ be the variables of $\varphi$, and let $C_1,\ldots,C_m$ be the clauses of $\varphi$. Recall that since $\varphi$ is an (E3,E5)-formula, we have $m=\frac{5}{3}n$.

The construction will be composed of two types of gadgets:
{\em{variable-gadgets}} and {\em{clause-gadgets}}.
Clause-gadgets will be constructed using two {\em{OR-gadgets}}
connected together.

\subsubsection{Variable-gadgets}

A variable-gadget is responsible for choosing the value of a variable of $\varphi$. It allows two minimum solutions of size 3 each.
These two choices correspond to the two possible boolean values of the variable
corresponding to the gadget.

\paragraph*{Points.} Define points $a,b,c,d,e,f,g,h$ as follows, where $L \coloneqq 100n$:

\newcommand{\pointsVarNoArg}{\mathsf{pointsVariable} }
\newcommand{\pointsVar}[1]{\mathsf{pointsVariable}_{#1} }
\newcommand{\chooseVar}[2]{\mathsf{chooseVariable}^{#1}_{#2} }
\newcommand{\segmentsVar}[1]{\mathsf{segmentsVariable}_{#1} }

\definecolor{x_true_colour}{RGB}{40, 40, 255}
\definecolor{x_false_colour}{RGB}{255, 40, 40}

{\tikzset{point/.style={
    circle, draw=black, fill, fill=black, minimum size=4pt,inner sep=0pt, outer sep=0pt,
    prefix after command= {\pgfextra{\tikzset{every
    label/.style={label distance=0.05cm,text=black}}}}
    }
}

{\tikzset{point_not_cover/.style={
    circle, draw=black, fill, fill=white, minimum size=4pt,inner sep=0pt, outer sep=0pt,
    prefix after command= {\pgfextra{\tikzset{every
    label/.style={label distance=0.05cm,text=black}}}}
    }
}

\begin{figure}[h]
\centering
\begin{tikzpicture}
\tikzmath{
\stepx=1.5;
\stepy=1;
\y1=0;
\y2=\y1+\stepy;
\y3=\y2+\stepy;
\x1=0;
\x2=\x1+\stepx;
\x3=\x2+\stepx;
\xend=\x3+2*\stepx;
}

\draw[x_false_colour,very thick] (\x1,\y1) -- (\x3,\y1);
\draw[x_false_colour,very thick] (\x1,\y2) -- (\x2,\y2);
\draw[x_false_colour,very thick] (\x2,\y3) -- (\xend,\y3);
\draw[x_true_colour,very thick] (\x1,\y1) -- (\x1,\y2);
\draw[x_true_colour,very thick] (\x2,\y1) -- (\x2,\y3);
\draw[x_true_colour,very thick] (\x3,\y1) -- (\xend,\y1);

\node[point,label={below:$a_i$}] at (\x1,\y1) {};
\node[point,label={below:$b_i$}] at (\x2,\y1) {};
\node[point,label={below:$c_i$}] at (\x3,\y1) {};
\node[point,label={left:$d_i$}] at (\x1,\y2) {};
\node[point,label={above left:$e_i$}] at (\x2,\y2) {};
\node[point,label={above left:$f_i$}] at (\x2,\y3) {};
\node[point_not_cover,label={right:$g_i$}] at (\xend,\y1) {};
\node[point_not_cover,label={right:$h_i$}] at (\xend,\y3) {};

\end{tikzpicture}
\caption{Variable-gadget. We denote the set of points marked with black circles as $\pointsVar{i}$,
and they need to be covered (are part of the set $\points$).
Note that some of the points are not marked as black dots
and exists only to name segments for further reference.
We denote the set of \textcolor{x_false_colour}{red} segments as $\chooseVar{\false}{i}$
and the set of \textcolor{x_true_colour}{blue} segments as $\chooseVar{\true}{i}$.}
\label{fig:apx_choose_variable}
\end{figure}
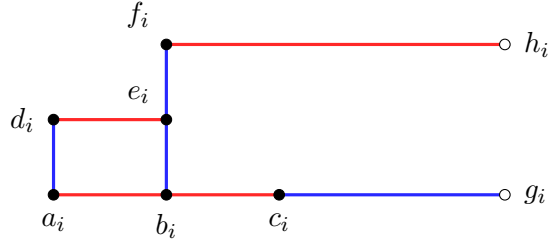

\begin{center}
\begin{tabular}{ l l l l}
	$a \coloneqq (-3L, 0)$ &
	$b \coloneqq (-2L, 0)$ &
	$c \coloneqq (-L, 0)$ &
	$d \coloneqq (-3L, 1)$ \\
	$e \coloneqq (-2L, 1)$ &
	$f \coloneqq (-2L, 2)$ &
	$g \coloneqq (L, 0)$ &
	$h \coloneqq (L, 2)$
\end{tabular}
\end{center}

Let us define:
$$\pointsVarNoArg \coloneqq  \{a, b, c, d, e, f\}$$
and, for every $i\in \{1,\ldots,n\}$,
$$\pointsVar{i} \coloneqq \pointsVarNoArg + (0, 4i).$$
(That is, $\pointsVar{i}$ consists of points of $\pointsVarNoArg$ translated by vector $(0,4i)$.) We denote $a_i \coloneqq a + (0,4i)$ etc.

\paragraph*{Segments.}
Let us define segments:

\newcommand{\xTrueSegmentDef}[1]{(c_{#1}, g_{#1})}
\newcommand{\xFalseSegmentDef}[1]{(f_{#1}, h_{#1})}
\newcommand{\xTrueSegment}[1]{\mathsf{xTrueSegment}_{#1}}
\newcommand{\xFalseSegment}[1]{\mathsf{xFalseSegment}_{#1}}
\newcommand{\orTrueSegment}[2]{(t_{#1, #2}, v_{#1, #2})}

$$\chooseVar{\true}{i} \coloneqq \{ (a_i, d_i), (b_i, f_i), (c_i, g_i)\},$$
$$\chooseVar{\false}{i} \coloneqq \{(a_i, c_i), (d_i, e_i), (f_i, h_i)\},$$
$$\segmentsVar{i} \coloneqq \chooseVar{\true}{i} \cup \chooseVar{\false}{i}.$$
We also name two of these segments for future reference:
$$\xTrueSegment{i} \coloneqq \xTrueSegmentDef{i}\qquad\textrm{and}\qquad \xFalseSegment{i} \coloneqq \xFalseSegmentDef{i}.$$

The next three simple lemmas verify the properties of the variable-gadget.

\begin{lemma}
\label{choose_variables_solution}
For every $i\in \{1,\ldots,n\}$, points in $\pointsVar{i}$
can be covered using 3 segments from $\segmentsVar{i}$.
\end{lemma}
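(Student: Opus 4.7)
The plan is to exhibit two explicit coverings of $\pointsVar{i}$ by three segments each, one using $\chooseVar{\true}{i}$ and one using $\chooseVar{\false}{i}$. Since $|\chooseVar{\true}{i}|=|\chooseVar{\false}{i}|=3$, establishing that either of these triples covers $\pointsVar{i}$ immediately yields the claim. The verification is purely a matter of reading off coordinates; the interesting content lies in confirming that the three segments of a chosen triple together sweep across exactly the six points of $\pointsVar{i}$, with the ``excess'' only touching the auxiliary points $g_i, h_i$ which do not belong to $\pointsVar{i}$.

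For $\chooseVar{\true}{i}$, I would verify each of its three segments in turn using the coordinates (translated by $(0,4i)$). The segment $(a_i,d_i)$ is vertical at $x=-3L$ and spans $y$-values $4i$ to $4i+1$, so it covers $a_i$ and $d_i$. The segment $(b_i,f_i)$ is vertical at $x=-2L$ and spans $y$-values $4i$ to $4i+2$, so it covers $b_i$, $e_i$ (recall $e_i=(-2L,4i+1)$ lies on this segment), and $f_i$. The segment $(c_i,g_i)$ is horizontal at $y=4i$ and covers $c_i$ (as well as $g_i$, which is outside $\pointsVar{i}$). Together the three segments cover $\{a_i,b_i,c_i,d_i,e_i,f_i\}=\pointsVar{i}$.

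Symmetrically, for $\chooseVar{\false}{i}$ I would check that $(a_i,c_i)$ is horizontal at $y=4i$ and covers $a_i,b_i,c_i$; that $(d_i,e_i)$ is horizontal at $y=4i+1$ and covers $d_i,e_i$; and that $(f_i,h_i)$ is horizontal at $y=4i+2$ and covers $f_i$ (together with $h_i\notin\pointsVar{i}$). Again all six points of $\pointsVar{i}$ are covered by three segments.

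There is no real obstacle here: the construction of the variable-gadget was designed precisely so that these two natural triples both constitute valid covers, corresponding to the two boolean assignments of the variable $x_i$. The only thing to be careful about is bookkeeping the translation by $(0,4i)$ and checking that $e_i$, which is not an endpoint of any segment in $\chooseVar{\true}{i}$, nevertheless lies on the interior of $(b_i,f_i)$; this is immediate from the definitions of $b_i,e_i,f_i$ sharing the $x$-coordinate $-2L$ and having consecutive $y$-coordinates $4i,4i+1,4i+2$.
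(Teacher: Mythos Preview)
Your proposal is correct and takes exactly the same approach as the paper: the paper's proof is the single sentence ``We can use either set $\chooseVar{\true}{i}$ or $\chooseVar{\false}{i}$,'' and you have simply spelled out the coordinate verification that justifies this. Your check that $e_i$ lies on the segment $(b_i,f_i)$ is the only nontrivial point, and it is handled correctly.
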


\begin{proof}
We can use either set $\chooseVar{\true}{i}$ or $\chooseVar{\false}{i}$.
\end{proof}

\begin{lemma}
\label{choose_variables_no_less}
For every $i\in \{1,\ldots,n\}$, points in $\pointsVar{i}$
cannot be covered with fewer than 3 segments from $\segmentsVar{i}$.
\end{lemma}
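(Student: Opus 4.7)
My plan is to prove the lemma by exhibiting three points of $\pointsVar{i}$ that cannot be pairwise covered by any axis-parallel segment, forcing any cover to use at least three segments.

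The key observation is that every segment in $\segmentsVar{i}$ is axis-parallel (each is either a horizontal or vertical segment), so any such segment is contained in a horizontal or a vertical line. Hence, if two points are contained in a common segment from $\segmentsVar{i}$, then they must share either their $x$-coordinate or their $y$-coordinate. I therefore look for three points of $\pointsVar{i}$ whose $x$-coordinates are pairwise distinct \emph{and} whose $y$-coordinates are pairwise distinct. Inspecting the definitions, the triple
\[
c_i=(-L,\,4i),\qquad d_i=(-3L,\,4i+1),\qquad f_i=(-2L,\,4i+2)
\]
has this property: the three $x$-coordinates $-L,-3L,-2L$ are pairwise distinct, and the three $y$-coordinates $4i,4i+1,4i+2$ are pairwise distinct.

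With this in hand, the proof is immediate. No single axis-parallel segment can contain two of $c_i,d_i,f_i$, so every segment from $\segmentsVar{i}$ covers at most one point of $\{c_i,d_i,f_i\}\subseteq \pointsVar{i}$. Since all three of these points must be covered, any subfamily of $\segmentsVar{i}$ that covers $\pointsVar{i}$ must contain at least three segments.

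I do not expect any genuine obstacle here; the whole argument is a coordinate check on the six-point gadget, and the ``hard'' part is merely picking the right witness triple. (One could alternatively do a short case analysis showing that the only segments covering $3$ points of $\pointsVar{i}$ are $(b_i,f_i)$ and $(a_i,c_i)$, whose union misses $d_i$, but the incompatible-triple approach is cleaner and does not require enumerating segment–point incidences.)
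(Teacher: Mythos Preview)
Your proof is correct and is essentially the same as the paper's: the paper also picks the triple $\{c_i,d_i,f_i\}$ and observes that no segment of $\segmentsVar{i}$ covers more than one of these points. You additionally spell out the coordinate reason (pairwise distinct $x$- and $y$-coordinates, so no axis-parallel segment can cover two of them), which the paper leaves implicit.
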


\begin{proof}
No segment of $\segmentsVar{i}$ covers more than one point from
$\{d_i, f_i, c_i\}$, so $\pointsVar{i}$ cannot be covered with fewer than 3 segments.
\end{proof}

\begin{lemma}
\label{choose_variables_both}
For every set $A \subseteq \segmentsVar{i}$ such that $\xTrueSegment{i}, \xFalseSegment{i} \in A$ and $A$ covers $\pointsVar{i}$,
it holds that $|A| \ge 4$.
\end{lemma}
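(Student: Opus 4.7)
\medskip

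\noindent\textbf{Proof plan for Lemma~\ref{choose_variables_both}.}
The idea is simply to argue that the two prescribed segments cover only two of the six points of $\pointsVar{i}$, and then show that no single remaining segment of $\segmentsVar{i}$ can cover the other four. Concretely, I would first observe that $\xTrueSegment{i}=(c_i,g_i)$ lies on the horizontal line $y=4i$ with $x$-range $[-L,L]$, so among $\pointsVar{i}$ it covers only $c_i$ (since $a_i,b_i$ have $x$-coordinates $-3L,-2L$ and $d_i,e_i,f_i$ lie on different horizontal lines, while $g_i\notin\pointsVar{i}$). Symmetrically, $\xFalseSegment{i}=(f_i,h_i)$ lies on $y=4i+2$ with $x$-range $[-2L,L]$ and covers only $f_i$ among $\pointsVar{i}$. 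Hence the four points $\{a_i,b_i,d_i,e_i\}$ must be covered by segments in $A\setminus\{\xTrueSegment{i},\xFalseSegment{i}\}$.

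Next I would examine the four other segments of $\segmentsVar{i}$ one by one and record the subset of $\{a_i,b_i,d_i,e_i\}$ each of them covers:
\begin{itemize}
\item $(a_i,d_i)$ is vertical at $x=-3L$, so it covers $\{a_i,d_i\}$;
\item $(b_i,f_i)$ is vertical at $x=-2L$, so from the remaining four it covers $\{b_i,e_i\}$;
\item $(a_i,c_i)$ is horizontal at $y=4i$, so from the remaining four it covers $\{a_i,b_i\}$;
\item $(d_i,e_i)$ is horizontal at $y=4i+1$, so it covers $\{d_i,e_i\}$.
\end{itemize}
Since none of these four candidate segments covers the whole set $\{a_i,b_i,d_i,e_i\}$, covering these four points requires at least two segments besides $\xTrueSegment{i}$ and $\xFalseSegment{i}$. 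Therefore $|A|\ge 4$, as desired.

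There is no real obstacle here; the statement is purely combinatorial and reduces to a finite case analysis based on the explicit coordinates defining the variable-gadget. The only thing to be careful about is to confirm that $g_i$ and $h_i$ are not part of $\pointsVar{i}$ (they are drawn as hollow circles in Figure~\ref{fig:apx_choose_variable}), so that the two long ``output'' segments $\xTrueSegment{i}$ and $\xFalseSegment{i}$ indeed contribute only the single points $c_i$ and $f_i$ to the coverage of $\pointsVar{i}$.
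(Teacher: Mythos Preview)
Your proof is correct and follows the same overall structure as the paper's: observe that $\xTrueSegment{i}$ and $\xFalseSegment{i}$ together cover only $c_i$ and $f_i$, and then argue that at least two further segments are needed for the remaining points. The only difference is in the finishing step: instead of your exhaustive check of all four remaining segments, the paper simply notes that no single segment of $\segmentsVar{i}$ contains both $a_i$ and $e_i$, which immediately forces two additional segments.
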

\begin{proof}
No segment from $\segmentsVar{i}$ covers more than one point from
$\{a_i, e_i\}$,
so
$\pointsVar{i}\setminus \{c_i, f_i\}$
cannot be covered with fewer than 2 segments.
\end{proof}

\subsubsection{OR-gadget}

An OR-gadget connects input and output segments (see Figure \ref{fig:apx_or_gadget})
in a way that is supposed to simulate  binary disjunction.
Input segments are the only segments that cover points outside of the gadget,
as their left endpoints lie outside of it.
Point $v_{i,j}$ is the only one that can be covered
by segments that do not belong to the gadget.

The OR-gadget has the property that every set of segments
that covers all the points in the gadget uses at least 3 segments from it.
Moreover, the output segment belongs to the solution of size 3
only if at least one of the input segments belongs to the solution.
Therefore, optimum solutions restricted to the OR-gadget behave
like the binary disjunction of the input segments.

\paragraph*{Points.}
We define

\newcommand{\chooseOr}[3]{\mathsf{chooseOr}^{#1}_{#2,#3}}
\newcommand{\orMoveVariable}[2]{\mathsf{orMoveVariable}_{#1,#2}}
\newcommand{\pointsOr}[2]{\mathsf{pointsOr}_{#1,#2}}
\newcommand{\segmentsOr}[2]{\mathsf{segmentsOr}_{#1,#2}}
\newcommand{\vc}{\xi}

\definecolor{environment}{RGB}{100,100,100}
\definecolor{move_variable1}{RGB}{137, 218, 109}
\definecolor{move_variable2}{RGB}{255, 207, 60}
\definecolor{choose_true1}{RGB}{56, 209, 241}
\definecolor{choose_true2}{RGB}{35, 140, 210}
\definecolor{choose_false}{RGB}{255, 9, 9}

{\tikzset{point/.style={
    circle, draw=black, fill, fill=black, minimum size=4pt,inner sep=0pt, outer sep=0pt,
    prefix after command= {\pgfextra{\tikzset{every
    label/.style={label distance=0.05cm,text=black}}}}
    }
}

\begin{figure}[h]
\centering
\def\svgwidth{0.5\columnwidth}
\begin{tikzpicture}
\tikzmath{
\stepx=1.5;
\stepy=1.5;
\xbeg=0;
\x1=\xbeg+3*\stepx;
\x2=\x1+\stepx;
\x3=\x2+\stepx;
\xend=\x3+\stepx;
\y1=0;
\y2=\y1+\stepy;
\y3=\y2+\stepy;
\y4=\y3+\stepy;
\y5=\y4+\stepy;
}

\draw[environment,ultra thick] (\xbeg,\y1) -- (\x1,\y1) node[black,pos=0.15, above] {$input_x$};
\draw[environment,ultra thick] (\xbeg,\y5) -- (\x1,\y5) node[black,pos=0.15, above] {$input_y$};
\draw[move_variable1, ultra thick] (\x1,\y1) -- (\x1,\y3);
\draw[move_variable2, ultra thick] (\x1,\y5) -- (\x1,\y3);
\draw[choose_true1, ultra thick] (\x1,\y2) -- (\x3,\y2);
\draw[choose_true1, ultra thick] (\x1,\y4) -- (\x3,\y4);
\draw[choose_true2, ultra thick] (\x3,\y3) -- (\xend,\y3) node [black,pos=0.5, below] {$output$};
\draw[choose_false, ultra thick] (\x2,\y2) -- (\x2,\y4);
\draw[choose_false, ultra thick] (\x3,\y2) -- (\x3,\y4);

\node[point,label={above left:$l_{i,j}$}] at (\x1,\y1) {};
\node[point,label={left:$m_{i,j}$}] at (\x1,\y2) {};
\node[point,label={left:$n_{i,j}$}] at (\x1,\y3) {};
\node[point,label={left:$o_{i,j}$}] at (\x1,\y4) {};
\node[point,label={above left:$p_{i,j}$}] at (\x1,\y5) {};

\node[point,label={below:$q_{i,j}$}] at (\x2,\y2) {};
\node[point,label={above:$r_{i,j}$}] at (\x2,\y4) {};
\node[point,label={below:$s_{i,j}$}] at (\x3,\y2) {};
\node[point,label={left:$t_{i,j}$}] at (\x3,\y3) {};
\node[point,label={above:$u_{i,j}$}] at (\x3,\y4) {};
\node[point,label={above:$v_{i,j}$}] at (\xend,\y3) {};

\end{tikzpicture}
\caption{
	\textbf{OR-gadget.} Segments from $\chooseOr{\false}{i}{j}$ are \textcolor{choose_false}{red},
	segments from $\chooseOr{\true}{i}{j}$ are blue
	(both \textcolor{choose_true1}{light blue} and \textcolor{choose_true2}{dark blue}),
	segments from $\orMoveVariable{i}{j}$ are \textcolor{move_variable1}{green} and \textcolor{move_variable2}{yellow}.
	\textcolor{choose_true2}{Dark blue} segment is the $output$ segment.
	\textcolor{environment}{Grey segments} $input_x$ and $input_y$ are input segments that
	are not part of $\segmentsOr{i}{j}$.
}
\label{fig:apx_or_gadget}
\end{figure}
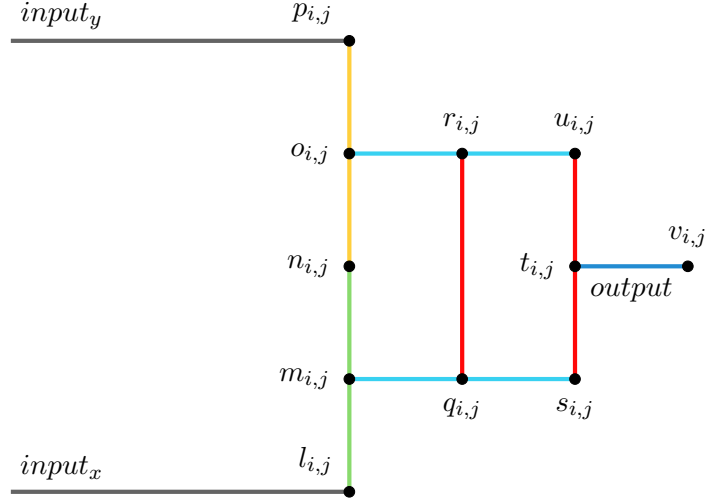

\newcommand{\clauseXFactor}{20}

\begin{center}
	\begin{tabular}{ l l l l}
		$l \coloneqq (0, 0)$, &
		$m \coloneqq (0, 1)$, &
		$n \coloneqq (0, 2)$, &
		$o \coloneqq (0, 3)$, \\
		$p \coloneqq (0, 4)$, &
		$q \coloneqq (1, 1)$, &
		$r \coloneqq (1, 3)$, &
		$s \coloneqq (2, 1)$, \\
		$t \coloneqq (2, 2)$, &
		$u \coloneqq (2, 3)$, &
		$v \coloneqq (3, 2)$, &
	\end{tabular}
\end{center}
as well as
$$\vc_{i, j} \coloneqq (\clauseXFactor i + 3 + 3j, 4(n+1) + 2j).$$
Similarly as before, for integers $i,j$ we
define
$\{ l_{i, j}, m_{i, j}, \ldots, v_{i, j} \}$
as $\{l, m, \ldots, v\}$ shifted by~$\vc_{i, j}$;
that is, ${l_{i,j} = l + \vc_{i,j}}$ etc.

Note that $v_{i, 0} = l_{i, 1}$, see Figure~\ref{fig:apx_clause}.
Next, we let
$$\pointsOr{i}{j} \coloneqq
 \{l_{i, j}, m_{i, j}, n_{i, j}, o_{i, j},
 p_{i, j}, q_{i, j}, r_{i, j}, s_{i, j}, t_{i, j}, u_{i, j} \}.
 $$
Note that $\pointsOr{i}{j}$ does {\em{not}} include the point $v_{i,j}$.

\paragraph*{Segments.}
The segment set in an OR-gadget is defined in several parts:
\begin{eqnarray*}
\chooseOr{\false}{i}{j} &\coloneqq& \{ (q_{i, j}, r_{i, j}), (s_{i, j}, u_{i, j})\}, \\
\chooseOr{\true}{i}{j} &\coloneqq& \{ (m_{i, j}, s_{i, j}), (o_{i, j}, u_{i, j}), (t_{i, j}, v_{i, j}) \}, \\
\orMoveVariable{i}{j} &\coloneqq& \{ (l_{i, j}, n_{i, j}), (n_{i, j}, p_{i, j})\}.
\end{eqnarray*}
Finally, all segments of an OR-gadget are defined as:
$$\segmentsOr{i}{j} \coloneqq
  \chooseOr{\false}{i}{j} \cup \chooseOr{\true}{i}{j} \cup \orMoveVariable{i}{j}.
$$

Again, the following lemmas verify the properties of the OR-gadgets.

\begin{lemma}
\label{cover_or_true}
For all $i\in \{1,\ldots,m\}$, $j \in \{0, 1\}$, and
 $x \in \{l_{i, j}, p_{i, j}\}$, all the points in
${\pointsOr_{i, j} \setminus \{ x\} \cup \{v_{i, j}\}}$
can be covered
with 4 segments from $\segmentsOr{i}{j}$.
\end{lemma}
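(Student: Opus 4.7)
The plan is to explicitly exhibit the required 4-element subset of $\segmentsOr{i}{j}$ for each of the two choices of $x$, and verify by inspection that it covers all the required points. The intuition is that the three segments of $\chooseOr{\true}{i}{j}$ are three horizontal ``rows'' at local $y$-coordinates $1$, $3$, and $2$, which together cover all of $\pointsOr{i}{j}\cup\{v_{i,j}\}$ except the three points $l_{i,j},n_{i,j},p_{i,j}$ on the leftmost vertical line of the gadget. The one remaining point is then dealt with by picking the appropriate half of $\orMoveVariable{i}{j}$: the segment $(n_{i,j},p_{i,j})$ when $x=l_{i,j}$, and the segment $(l_{i,j},n_{i,j})$ when $x=p_{i,j}$.

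Concretely, for the case $x=l_{i,j}$ I would take
$$\sol \;\coloneqq\; \chooseOr{\true}{i}{j}\,\cup\,\{(n_{i,j},p_{i,j})\},$$
which has exactly four elements. Reading off the coordinates from Section~\ref{construction_description}, the segments $(m_{i,j},s_{i,j})$, $(o_{i,j},u_{i,j})$, and $(t_{i,j},v_{i,j})$ of $\chooseOr{\true}{i}{j}$ cover the point sets $\{m_{i,j},q_{i,j},s_{i,j}\}$, $\{o_{i,j},r_{i,j},u_{i,j}\}$, and $\{t_{i,j},v_{i,j}\}$, respectively, while the vertical segment $(n_{i,j},p_{i,j})$ covers the remaining points $n_{i,j}$ and $p_{i,j}$. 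The union of these points is exactly $(\pointsOr{i}{j}\setminus\{l_{i,j}\})\cup\{v_{i,j}\}$, as required.

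For the other case $x=p_{i,j}$ I would instead take $\sol \coloneqq \chooseOr{\true}{i}{j}\cup\{(l_{i,j},n_{i,j})\}$; the symmetric verification shows that the resulting cover includes every point of $\pointsOr{i}{j}\cup\{v_{i,j}\}$ except $p_{i,j}$. There is essentially no combinatorial obstacle here: the lemma is a direct geometric check baked into the coordinates of the OR-gadget, and its purpose is simply to certify that whichever of the two ``input wires'' $l_{i,j},p_{i,j}$ is already covered from outside, the gadget can be completed with four internal segments in a way that additionally covers the output point $v_{i,j}$. This will later be invoked when assembling feasible covers for clause-gadgets from a satisfying assignment of $\varphi$.
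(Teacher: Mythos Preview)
Your proposal is correct and follows exactly the paper's approach: take all three segments of $\chooseOr{\true}{i}{j}$ together with the one segment of $\orMoveVariable{i}{j}$ that does not cover $x$. You simply spell out the coordinate verification that the paper leaves implicit.
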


\begin{proof}
We can do this using one segment from
$\orMoveVariable{i}{j}$, the one that does not cover~$x$,
and all segments from $\chooseOr{\true}{i}{j}$.
\end{proof}

\begin{lemma}
\label{cover_or_false}
For all $i\in \{1,\ldots,m\}$ and $j \in \{0, 1\}$, points in
$\pointsOr{i}{j}$ can be covered
with 4 segments from $\segmentsOr{i}{j}$.
\end{lemma}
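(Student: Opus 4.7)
The plan is to explicitly exhibit a four-segment cover of $\pointsOr{i}{j}$ drawn entirely from $\segmentsOr{i}{j}$, which will prove the lemma at once. I will take the union of the two move segments from $\orMoveVariable{i}{j}=\{(l_{i,j},n_{i,j}),(n_{i,j},p_{i,j})\}$ together with the two false segments from $\chooseOr{\false}{i}{j}=\{(q_{i,j},r_{i,j}),(s_{i,j},u_{i,j})\}$. Conceptually, this corresponds to the assignment in which both inputs of the OR-gadget are false: the left column is handled internally by the move segments, the right two columns are handled by the vertical ``false'' segments, and no horizontal connector to $v_{i,j}$ is needed because $v_{i,j}\notin\pointsOr{i}{j}$.

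To verify correctness, I will group the ten points of $\pointsOr{i}{j}$ by the vertical line each lies on, relying only on the coordinates fixed in Section~\ref{construction_description}. Reading these off, $l_{i,j},m_{i,j},n_{i,j},o_{i,j},p_{i,j}$ are collinear on the leftmost vertical line (the $y$-coordinates are $0,1,2,3,4$ relative to $\vc_{i,j}$), and the segments $(l_{i,j},n_{i,j})$ and $(n_{i,j},p_{i,j})$ respectively cover $\{l_{i,j},m_{i,j},n_{i,j}\}$ and $\{n_{i,j},o_{i,j},p_{i,j}\}$, hence their union covers this column. The points $q_{i,j},r_{i,j}$ both sit on the next vertical line and are covered by $(q_{i,j},r_{i,j})$. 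Finally, $s_{i,j},t_{i,j},u_{i,j}$ are collinear on the third vertical line with $t_{i,j}$ lying strictly between $s_{i,j}$ and $u_{i,j}$, so $(s_{i,j},u_{i,j})$ covers all three. Altogether this covers $\pointsOr{i}{j}$ using exactly $2+1+1=4$ segments of $\segmentsOr{i}{j}$, as required. I do not anticipate a real obstacle here; the only subtlety worth flagging is the careful exclusion of $v_{i,j}$ from $\pointsOr{i}{j}$, which is precisely what makes the output segment $(t_{i,j},v_{i,j})$ unnecessary and allows the count to stay at $4$.
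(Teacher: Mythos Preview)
Your proposal is correct and matches the paper's own proof exactly: the paper also covers $\pointsOr{i}{j}$ using the four segments $\orMoveVariable{i}{j}\cup\chooseOr{\false}{i}{j}$. Your writeup simply spells out in detail the coordinate verification that the paper leaves implicit.
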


\begin{proof}
We can do this using segments from $\orMoveVariable{i}{j} \cup \chooseOr{\false}{i}{j}$.
\end{proof}

\subsubsection{Clause-gadget}
A clause-gadget is responsible for determining whether the boolean values encoded in variable gadgets
satisfy the corresponding clause in the input formula $\varphi$.
It has a minimum solution of size $11$
if and only if the clause is satisfied, i.e. at least one
of the respective variables is assigned the correct value.
Otherwise, its minimum solution has size $12$.
In this way, by analyzing the size of the minimum
solution to the entire constructed instance, we will be able to tell
how many clauses it is possible to satisfy
in an optimum solution to $\varphi$.

\newcommand{\pointsClause}[1]{\mathsf{pointsClause}_{#1}}
\newcommand{\negate}{\mathsf{neg}}
\newcommand{\idx}{\mathsf{idx}}

\definecolor{environment}{RGB}{150,150,150}

{\tikzset{point/.style={
    circle, draw=black, fill, fill=black, minimum size=4pt,inner sep=0pt, outer sep=0pt,
    prefix after command= {\pgfextra{\tikzset{every
    label/.style={label distance=0.05cm,text=black,font=\footnotesize}}}}
    }
}

\begin{figure}[h]
\centering
\begin{tikzpicture}[scale=0.8]
\tikzmath{
\stepx=1.7;
\stepy=1.5;
\boxsize=1.2*\stepx;
\xbeg=0;
\x1=\xbeg+2*\stepx;
\x2=\x1+\stepx;
\x3=\x2+\stepx;
\x4=\x3+\stepx;
\x5=\x4+\boxsize;
\x6=\x5+\stepx;
\x7=\x6+\boxsize;
\x8=\x7+\stepx;
\y1=0;
\y2=\y1+\stepy;
\y3=\y2+\stepy;
\y4=\y3+2*\stepy;
\y5=\y4+0.5*\stepy;
\y6=\y5+0.5*\stepy;
\y7=\y6+0.5*\stepy;
}

\draw[thick] (\x1,\y3) -- (\x1,\y4);
\draw[thick] (\x2,\y1) -- (\x2,\y6);
\draw[thick] (\x3,\y2) -- (\x3,\y7);
\draw[thick] (\x1,\y4) -- (\x4,\y4);
\draw[thick] (\x2,\y6) -- (\x4,\y6);
\draw[thick] (\x3,\y7) -- (\x6,\y7);
\draw[thick] (\x5,\y5) -- (\x6,\y5);
\draw[thick] (\x7,\y6) -- (\x8,\y6);
\draw [fill=green!50, draw=green!50] (\x4,\y6) rectangle (\x5,\y4) node[pos=0.5] {\scriptsize OR-gadget};
\draw [fill=green!50, draw=green!50] (\x6,\y5) rectangle (\x7,\y7) node[pos=0.5] {\scriptsize OR-gadget};

\draw[environment,thick] (\xbeg, \y1) -- (\x8, \y1) node[black,pos=0.7, below] {$\xTrueSegment{b}$};
\draw[environment,thick] (\xbeg, \y2) -- (\x8, \y2) node[black,pos=0.7, below] {$\xFalseSegment{c}$};
\draw[environment,thick] (\xbeg, \y3) -- (\x8, \y3) node[black,pos=0.7, above] {$\xTrueSegment{a}$};

\node[point,label={above left:$x_{i,0}$}] at (\x1,\y3) {};
\node[point,label={above left:$y_{i,0}$}] at (\x2,\y1) {};
\node[point,label={above left:$z_{i,0}$}] at (\x3,\y2) {};
\node[point,label={above left:$x_{i,1}$}] at (\x1,\y4) {};
\node[point,label={above left:$y_{i,1}$}] at (\x2,\y6) {};
\node[point,label={above left:$z_{i,1}$}] at (\x3,\y7) {};
\node[point,label={above left:$l_{i,0}$}] at (\x4,\y4) {};
\node[point,label={above left:$p_{i,0}$}] at (\x4,\y6) {};
\node[point,label={above right:$t_{i,0}$}] at (\x5,\y5) {};
\node[point,label={below:$v_{i,0}=l_{i,1}$}] at (\x6,\y5) {};
\node[point,label={above:$p_{i,1}$}] at (\x6,\y7) {};
\node[point,label={above right:$t_{i,1}$}] at (\x7,\y6) {};
\node[point,label={above right:$v_{i,1}$}] at (\x8,\y6) {};
\end{tikzpicture}

\caption{Clause-gadget for a clause $a \lor b \lor \neg c$.
Every green rectangle is an OR-gadget.
$y$-coordinates of $x_{i, 0}$, $y_{i, 0}$ and $z_{i,0}$
depend on the variables in the $i$-th clause.
Grey segments corresponds to the values of variables
satisfying the $i$-th clause.
}
\label{fig:apx_clause}
\end{figure}
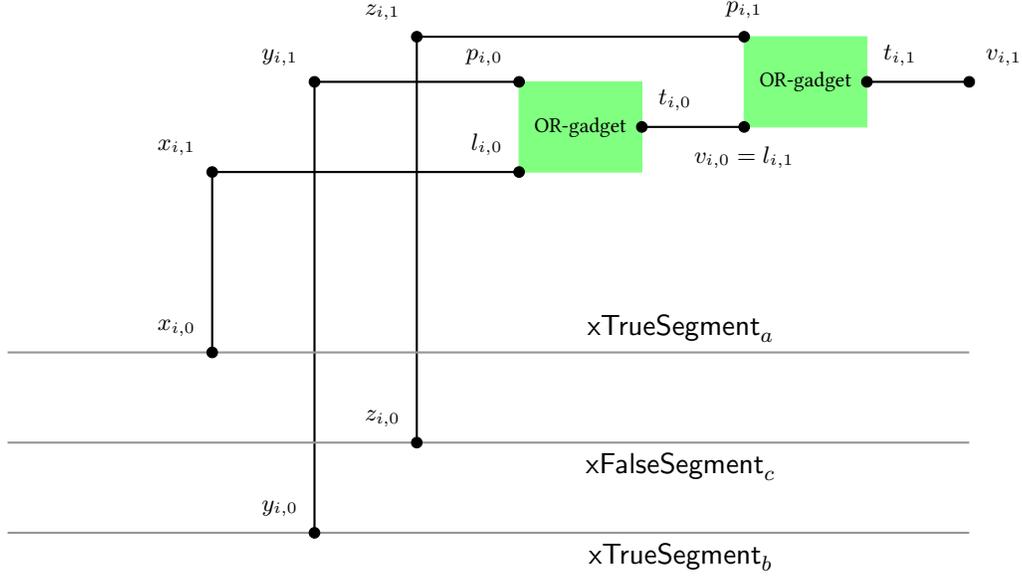

\paragraph*{Points.}
First, we define auxiliary functions for literals.
For a literal $w$, let $\idx(w)\in \{1,\ldots,n\}$ be the index of the variable in $w$,
and $\negate(w)$ be the Boolean value (0 or 1) indicating whether the variable in $w$ is negated
or not.
\begin{eqnarray*}
\idx(w) & \coloneqq &  i \text{ when } w = x_i \\
\negate(w) & \coloneqq &
\begin{cases}
 0 & \text{if } w = x_i \\
 1 & \text{if } w = \neg x_i
\end{cases}
\end{eqnarray*}

Consider a clause $C_i = a \lor b \lor c$, where $a,b,c$ are literals. Then, we define points in the gadget corresponding to $C_i$ as follows:

\begin{center}
\begin{tabular}{ l l }
	$x_{i, 0} \coloneqq (\clauseXFactor i, 4\cdot \idx(a) + 2\cdot \negate(c)),$ &
	$x_{i, 1} \coloneqq (\clauseXFactor i, 4(n+1)),$ \\
	$y_{i, 0} \coloneqq (\clauseXFactor i+1, 4\cdot \idx(b) + 2\cdot \negate(b)),$ &
	$y_{i, 1} \coloneqq (\clauseXFactor i+1, 4(n+1) + 4),$ \\
	$z_{i, 0} \coloneqq (\clauseXFactor i+2, 4\cdot \idx(c) + 2\cdot \negate(c)),$ &
	$z_{i, 1} \coloneqq (\clauseXFactor i+2, 4(n+1) + 6).$
\end{tabular}
\end{center}

\newcommand{\segmentsClause}{\mathsf{segmentsClause}}
\newcommand{\moveVariablePoints}[1]{\mathsf{moveVariablePoints}_{#1}}
\newcommand{\moveVariableSegments}[1]{\mathsf{moveVariableSegments}_{#1}}

We are now ready to define the set of points in a clause-gadget:
 $$\moveVariablePoints{i} \coloneqq
 \{x_{i, j} \colon j \in \{0, 1\}\} \cup
 \{y_{i, j} \colon j \in \{0, 1\}\} \cup
 \{z_{i, j} \colon j \in \{0, 1\}\},
 $$
 $$\pointsClause{i} \coloneqq
 \moveVariablePoints{i} \cup \pointsOr{i}{0}
 \cup \pointsOr{i}{1} \cup \{v_{i, 1} \}.
 $$
Note that the following two points are equal: $v_{i,0} = l_{i,1}$.
This translates to the fact that the output of the first OR-gadget
is an input to the second OR-gadget.
Intuitively, this creates a disjunction of 3 boolean~values.

\paragraph*{Segments.}
We also define segments for the clause-gadget as below:
\begin{eqnarray*}
\moveVariableSegments{i} & \coloneqq & \{
(x_{i, 0}, x_{i, 1}),
(y_{i, 0}, y_{i, 1}),
(z_{i, 0}, z_{i, 1}),
(x_{i, 1}, l_{i, 0}),
(y_{i, 1}, p_{i, 0}),
(z_{i, 1}, p_{i, 1})
\} \\
\segmentsClause_i & \coloneqq & \moveVariableSegments{i} \cup \segmentsOr{i}{0} \cup \segmentsOr{i}{1}.
\end{eqnarray*}

\newcommand{\segmentsClauseSolTrue}[1]{\mathsf{solClause}^{\true,#1}}
\newcommand{\segmentsClauseSolFalse}{\mathsf{solClause}^{\false}}

The clause-gadgets consist of two OR-gadgets.
Ideally, we would place the $i$-th clause-gadget close to the
$\xTrueSegment{j_1}$ or $\xFalseSegment{j_1}$ segments
corresponding to the literals that occur in the $i$-th clause.
It would be inconvenient to position them there,
because between these segments there may be additional
$\xTrueSegment{j_2}$ or $\xFalseSegment{j_2}$
segments corresponding to the other literals.

Instead, we use simple auxiliary gadgets to
\textit{transfer} the value whether a segment
is in a solution.
Each transfer gadget consists of two segments $(x_{i, 0}, x_{i, 1}), (x_{i, 1}, a)$ (and similarly for $y$ and $z$).
These are the only segments that can cover $x_{i,1}$.
We place $x_{i,0}$ on a segment that we want to transfer (i.e.
segment responsible for choosing the variable value satisfying the
corresponding literal).
If in some solution $x_{i,0}$ is already covered by this segment, then
we can cover $x_{i,1}$ by $(x_{i,1}, a)$, thus also covering $a$.
If $x_{i,0}$ is not covered by this segment,
then the only way to cover $x_{i,0}$ is to use segment $(x_{i, 0}, x_{i, 1})$.
Intuitively,
in any optimal solution the two segments \textit{transfer} the state of whether $x_{i,0}$
is covered onto whether $a$ is covered.
Therefore, the number of segments in the optimal solution is increased by one,
and we get a point $a$ that was effectively placed
on some segment $s$, but it can be placed anywhere in the plane instead,
consequently simplifying the construction.

We now formally verify properties of the construction.

\begin{lemma}
\label{cover_clauses_solution_true}
For all $i\in \{1,\ldots,m\}$ and $a \in \{ x_{i, 0}, y_{i, 0}, z_{i, 0}\}$,
there is a set $\segmentsClauseSolTrue{a}_i \subseteq \segmentsClause_i$
with $|\segmentsClauseSolTrue{a}_i| = 11$
that covers all points in $\pointsClause{i} \setminus \{a\}$.
\end{lemma}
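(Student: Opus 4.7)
The plan is to prove the statement by a direct construction in three cases, one for each choice of $a \in \{x_{i,0}, y_{i,0}, z_{i,0}\}$. In each case I will exhibit $\segmentsClauseSolTrue{a}_i$ as a disjoint union of three segments from $\moveVariableSegments{i}$, four segments inside the first OR-gadget $\segmentsOr{i}{0}$, and four segments inside the second OR-gadget $\segmentsOr{i}{1}$, giving $3+4+4=11$ segments in total. The selection of the three transfer segments will be tailored so that it exactly complements an application of Lemma~\ref{cover_or_true} or Lemma~\ref{cover_or_false} in each OR-gadget.

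For the case $a = x_{i,0}$, I would pick $(x_{i,1}, l_{i,0})$, $(y_{i,0}, y_{i,1})$, and $(z_{i,0}, z_{i,1})$. The first one covers $x_{i,1}$ (we do not need to cover $x_{i,0}$) and, as a bonus, covers $l_{i,0}$, so I invoke Lemma~\ref{cover_or_true} on $\segmentsOr{i}{0}$ with the excluded point $x=l_{i,0}$; this gives four segments covering $\pointsOr{i}{0}\setminus\{l_{i,0}\}\cup\{v_{i,0}\}$. Since $v_{i,0}=l_{i,1}$, this also covers the leftmost point of the second OR-gadget, so I apply Lemma~\ref{cover_or_true} on $\segmentsOr{i}{1}$ with $x=l_{i,1}$ to cover $\pointsOr{i}{1}\setminus\{l_{i,1}\}\cup\{v_{i,1}\}$ with four more segments. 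The symmetric case $a=y_{i,0}$ is handled analogously by picking $(x_{i,0}, x_{i,1})$, $(y_{i,1}, p_{i,0})$, $(z_{i,0}, z_{i,1})$, using Lemma~\ref{cover_or_true} on the first OR-gadget with $x=p_{i,0}$, and then on the second with $x=l_{i,1}$.

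The case $a = z_{i,0}$ is slightly different because the transfer now happens at the top OR-gadget. I would take $(x_{i,0}, x_{i,1})$, $(y_{i,0}, y_{i,1})$, and $(z_{i,1}, p_{i,1})$. Now no transfer segment feeds a point into $\pointsOr{i}{0}$, so I apply Lemma~\ref{cover_or_false} to cover the entirety of $\pointsOr{i}{0}$ with four segments; this does not give $v_{i,0}=l_{i,1}$ for free, but Lemma~\ref{cover_or_true} on $\segmentsOr{i}{1}$ with $x=p_{i,1}$ will output four segments covering $\pointsOr{i}{1}\setminus\{p_{i,1}\}\cup\{v_{i,1}\}$, which does include $l_{i,1}$ since $l_{i,1}\neq p_{i,1}$. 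So $v_{i,1}$ is covered as required.

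The only nontrivial verification in each case is that the union of the point sets covered by the three components equals $\pointsClause{i}\setminus\{a\}$; this amounts to checking that $\moveVariablePoints{i}\setminus\{a\}$ is covered by the three transfer segments, that $\pointsOr{i}{0}\cup\pointsOr{i}{1}$ is fully covered by the two OR-gadget solutions (using $v_{i,0}=l_{i,1}$ to handle the hand-off between them), and that $v_{i,1}$ is covered by the Lemma~\ref{cover_or_true} application on the second OR-gadget. Since each of these checks is a simple set membership computation, the main obstacle is purely bookkeeping: making sure the $x$ parameter passed to Lemma~\ref{cover_or_true} matches the point not covered by any transfer segment, so that no point in the union of the two OR-gadgets is forgotten.
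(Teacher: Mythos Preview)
Your proposal is correct and follows essentially the same approach as the paper: the same three transfer segments and the same invocations of Lemma~\ref{cover_or_true} (twice, with $x=l_{i,0}$ and $x=l_{i,1}$, or with $x=p_{i,0}$ and $x=l_{i,1}$) for $a\in\{x_{i,0},y_{i,0}\}$, and Lemma~\ref{cover_or_false} followed by Lemma~\ref{cover_or_true} with $x=p_{i,1}$ for $a=z_{i,0}$. The only difference is that you spell out the $y_{i,0}$ case explicitly, whereas the paper dismisses it as analogous to $x_{i,0}$.
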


\begin{proof}
For $a = x_{i, 0}$ (analogous proof for $y_{i, 0}$):
First use Lemma~\ref{cover_or_true} twice with excluded $x = l_{i, 0}$ and
$x = l_{i, 1} = v_{i, 0}$,
resulting with 8 segments in $\chooseOr{\true}{i}{0} \cup \chooseOr{\true}{i}{1}$
which cover all the required points apart from
$x_{i, 1}, y_{i, 0}, y_{i, 1}, z_{i, 0}, z_{i, 1}, l_{i, 0}$.
We cover those using additional 3 segments:
$(x_{i, 1}, l_{i, 0}), (y_{i, 0}, y_{i, 1}),
(z_{i, 0}, z_{i, 1})$.

For $a = z_{0, i}$:
Using Lemma~\ref{cover_or_false} and Lemma~\ref{cover_or_true} with
$x = p_{i, 1}$,
we obtain 8 segments in $\chooseOr{\false}{i}{0} \cup \chooseOr{\true}{i}{1}$
which cover all required points apart from
$x_{i, 0}, x_{i, 1}, y_{i, 0}, y_{i, 1}, z_{i, 1}, p_{i, 1}$.
We cover those using additional 3 segments:
$\{ (x_{i, 0}, x_{i, 1}), (y_{i, 0}, y_{i, 1}),
(z_{i, 1}, p_{i, 1}) \}$.
\end{proof}

\begin{lemma}
\label{cover_clauses_solution_false}
For every $i\in \{1,\ldots,m\}$, there is
a set $\segmentsClauseSolFalse_i \subseteq \segmentsClause_i$
of size $12$
that covers all points in $\pointsClause{i}$.
\end{lemma}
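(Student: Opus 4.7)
The plan is to build $\segmentsClauseSolFalse_i$ by reusing the $11$-segment family $\segmentsClauseSolTrue{x_{i,0}}_i$ provided by Lemma~\ref{cover_clauses_solution_true} and augmenting it with a single segment that covers the one missing point $x_{i,0}$. The natural choice is $(x_{i,0},x_{i,1})\in\moveVariableSegments{i}\subseteq\segmentsClause_i$, which covers $x_{i,0}$ directly; the point $x_{i,1}$ it also happens to cover is already handled inside $\segmentsClauseSolTrue{x_{i,0}}_i$ by the transfer segment $(x_{i,1},l_{i,0})$ appearing in the proof of Lemma~\ref{cover_clauses_solution_true}. I would therefore set
\[
\segmentsClauseSolFalse_i \;\coloneqq\; \segmentsClauseSolTrue{x_{i,0}}_i\,\cup\,\{(x_{i,0},x_{i,1})\},
\]
observe that $(x_{i,0},x_{i,1})$ does not belong to $\segmentsClauseSolTrue{x_{i,0}}_i$ (one can read off from the recipe in the previous lemma that the only segments used there are from $\chooseOr{\true}{i}{0}\cup\chooseOr{\true}{i}{1}\cup\orMoveVariable{i}{0}\cup\orMoveVariable{i}{1}$ together with $(x_{i,1},l_{i,0}),(y_{i,0},y_{i,1}),(z_{i,0},z_{i,1})$), and conclude that $|\segmentsClauseSolFalse_i|=12$ and that $\segmentsClauseSolFalse_i$ covers all of $\pointsClause{i}$.

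If a self-contained proof is preferred, I would construct the $12$ segments from scratch. Apply Lemma~\ref{cover_or_true} to each OR-gadget separately, once with $x=l_{i,0}$ and once with $x=l_{i,1}$, obtaining $4+4=8$ segments that jointly cover $\pointsOr{i}{0}\cup\pointsOr{i}{1}\cup\{v_{i,1}\}$ apart from the two points $l_{i,0}$ and $l_{i,1}$. The identification $l_{i,1}=v_{i,0}$ guarantees that $l_{i,1}$ is in fact already covered by the output segment $(t_{i,0},v_{i,0})$ used in the first application of Lemma~\ref{cover_or_true}. To take care of $l_{i,0}$ and of the six move-variable points, I would add the four transfer segments $(x_{i,0},x_{i,1})$, $(x_{i,1},l_{i,0})$, $(y_{i,0},y_{i,1})$, $(z_{i,0},z_{i,1})$, bringing the total to exactly $12$ and covering all remaining points of $\pointsClause{i}$.

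The main obstacle is entirely bookkeeping: verifying that every element of $\pointsClause{i}=\moveVariablePoints{i}\cup\pointsOr{i}{0}\cup\pointsOr{i}{1}\cup\{v_{i,1}\}$ is covered by the chosen collection and that the segment added to obtain the $12$-th element is genuinely new. There is no new combinatorial ingredient — the lemma is a routine consequence of the OR-gadget covers already supplied by Lemmas~\ref{cover_or_false} and~\ref{cover_or_true}, which is why the statement is so short compared to Lemma~\ref{cover_clauses_solution_true}.
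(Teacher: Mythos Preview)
Your proposal is correct, and both the ``reuse the previous lemma'' variant and the self-contained variant produce a valid $12$-segment cover.

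The paper, however, takes a slightly different and more symmetric route: it applies Lemma~\ref{cover_or_false} (not Lemma~\ref{cover_or_true}) to each of the two OR-gadgets, obtaining $4+4=8$ segments from $\orMoveVariable{i}{j}\cup\chooseOr{\false}{i}{j}$ that already cover all of $\pointsOr{i}{0}\cup\pointsOr{i}{1}$, including both $l_{i,0}$ and $l_{i,1}$. The remaining uncovered points are then exactly $\moveVariablePoints{i}\cup\{v_{i,1}\}$, and the paper finishes with the four segments $(x_{i,0},x_{i,1}),(y_{i,0},y_{i,1}),(z_{i,0},z_{i,1}),(t_{i,1},v_{i,1})$. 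Your construction instead uses the $\chooseOr{\true}{}{}$ covers inside the OR-gadgets and compensates for the missing $l_{i,0}$ with the extra transfer segment $(x_{i,1},l_{i,0})$. Both work; the paper's choice is a touch cleaner thematically (the ``\false'' OR-gadget covers are used for the ``\false'' clause solution, and no transfer segment is spent on $l_{i,0}$), whereas your version has the advantage of being a one-line corollary of Lemma~\ref{cover_clauses_solution_true}. The difference is purely a matter of bookkeeping.
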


\begin{proof}
Using Lemma \ref{cover_or_false} twice we
cover $\pointsOr{i}{0}$ and  $\pointsOr{i}{1}$
with 8 segments.
To cover the remaining points we additionally use:
$\{ (x_{i, 0}, x_{i, 1}), (y_{i, 0}, y_{i, 1}),
(z_{i, 0}, z_{i, 1}), (t_{i, 1}, v_{i, 1}) \}.$
\end{proof}

\begin{lemma}
\label{cover_clauses_segments_no_less}
For every $i\in \{1,\ldots,m\}$,
\begin{enumerate}[label={(\arabic*)},nosep]
	\item points in $\pointsClause{i}$ cannot be covered
	using any subset of segments
	from $\segmentsClause_i$ of size smaller than $12$; and
	\item points in $\pointsClause{i} \setminus \{ x_{i, 0}, y_{i, 0}, z_{i, 0}\}$
	cannot be covered using any subset of segments
	from $\segmentsClause_i$ of size smaller than $11$.
\end{enumerate}
\end{lemma}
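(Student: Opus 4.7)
The plan is to prove both parts by analyzing any cover $\sol \subseteq \segmentsClause_i$ through a four-part decomposition: the \emph{transfer} segments used (those among $(x_{i,0}, x_{i,1})$, $(y_{i,0}, y_{i,1})$, $(z_{i,0}, z_{i,1})$), the \emph{linking} segments used (those among $(x_{i,1}, l_{i,0})$, $(y_{i,1}, p_{i,0})$, $(z_{i,1}, p_{i,1})$), $\sol \cap \segmentsOr{i}{0}$, and $\sol \cap \segmentsOr{i}{1}$. I will separately lower-bound each part and sum.

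The first step is to identify the mandatory segments. The point $v_{i,1}$ is covered within $\segmentsClause_i$ only by $(t_{i,1}, v_{i,1})$, so this segment must always lie in $\sol$. For part (1), since $x_{i,0}, y_{i,0}, z_{i,0}$ are each covered only by their respective transfer segment, all three transfer segments must lie in $\sol$, contributing exactly $3$ to the count. For part (2), a weaker constraint applies: the three distinct points $x_{i,1}, y_{i,1}, z_{i,1}$ must each be covered, and any single segment from the transfer or linking set covers at most one of them, so at least three such segments lie in $\sol$.

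The key combinatorial step will be to establish that no three-element subset of $\segmentsOr{i}{j}$ covers more than $8$ of the $10$ points of $\pointsOr{i}{j}$, and likewise no three-element subset of $\segmentsOr{i}{1} \setminus \{(t_{i,1}, v_{i,1})\}$ covers more than $8$ of the nine points $\pointsOr{i}{1} \setminus \{t_{i,1}\}$. This reduces to a manageable enumeration, since $l_{i,j}$ is covered within $\segmentsOr{i}{j}$ only by $(l_{i,j}, n_{i,j})$ and $p_{i,j}$ only by $(n_{i,j}, p_{i,j})$; forcing both of these leaves five residual points whose combinatorics can be handled directly. From this claim I deduce that $|\sol \cap \segmentsOr{i}{0}| \ge 4$ in general, with $\ge 5$ whenever $(t_{i,0}, v_{i,0}) \in \sol$ but neither of the two linking segments $(x_{i,1}, l_{i,0}), (y_{i,1}, p_{i,0})$ lies in $\sol$. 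Analogously, $|\sol \cap \segmentsOr{i}{1}| \ge 4$, with $\ge 5$ if neither $(z_{i,1}, p_{i,1})$ nor $(t_{i,0}, v_{i,0})$ lies in $\sol$; here $(t_{i,0}, v_{i,0}) \in \segmentsOr{i}{0}$ assists the second OR-gadget because $l_{i,1} = v_{i,0}$.

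The final step is a case analysis over three binary parameters: whether each of $(t_{i,0}, v_{i,0})$, $(z_{i,1}, p_{i,1})$, and at least one of $(x_{i,1}, l_{i,0}), (y_{i,1}, p_{i,0})$ lies in $\sol$. Summing the four lower bounds in each of the resulting cases yields $|\sol| \ge 12$ for part (1) and $|\sol| \ge 11$ for part (2); the bottleneck for part (1) comes from the case when $(t_{i,0}, v_{i,0}) \in \sol$ and no linking segment into $\pointsOr{i}{0}$ is used, giving $3 + 0 + 5 + 4 = 12$. The main obstacle will be the combinatorial claim about covering the OR-gadget with three segments; although it ultimately reduces to a finite enumeration pruned considerably by the uniqueness of the covers of $l_{i,j}$ and $p_{i,j}$, writing it cleanly requires care to treat uniformly all variants (covering $\pointsOr{i}{j}$ versus $\pointsOr{i}{j} \cup \{v_{i,j}\}$, with or without the segment $(t_{i,j}, v_{i,j})$ already committed) that appear in the bounds above.
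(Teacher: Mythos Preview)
Your approach will work, but it is considerably more elaborate than the paper's, and there is one small gap in the stated combinatorial claim that you should patch.

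\textbf{The gap.} You assert that ``no three-element subset of $\segmentsOr{i}{j}$ covers more than $8$ of the $10$ points of $\pointsOr{i}{j}$'' and then say this implies $|\sol\cap\segmentsOr{i}{0}|\ge 4$ \emph{in general}. But if both linking segments $(x_{i,1},l_{i,0})$ and $(y_{i,1},p_{i,0})$ lie in $\sol$, then $\sol\cap\segmentsOr{i}{0}$ only needs to cover the eight points $\{m,n,o,q,r,s,t,u\}_{i,0}$, and ``at most $8$ of $10$'' does not rule out hitting exactly those eight with three segments. You therefore need the slightly stronger (and equally easy) enumeration: no three segments of $\segmentsOr{i}{0}$ cover $\{m,n,o,q,r,s,t,u\}_{i,0}$. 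Once you add this, your case analysis closes correctly for both parts.

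\textbf{Comparison with the paper.} The paper avoids the case analysis entirely. For part (1) it simply exhibits twelve points
\[
\{x_{i,0},y_{i,0},z_{i,0},l_{i,0},p_{i,0},q_{i,0},u_{i,0},v_{i,0}=l_{i,1},p_{i,1},q_{i,1},u_{i,1},v_{i,1}\}
\]
no two of which are covered by a common segment of $\segmentsClause_i$, so twelve segments are forced. For part (2) it partitions the target points into $X=\{x_{i,1},y_{i,1},z_{i,1}\}$, $Y=\pointsOr{i}{0}\setminus\{l_{i,0},p_{i,0}\}$, and $Z=\pointsOr{i}{1}\setminus\{l_{i,1},p_{i,1}\}$, checks that no segment of $\segmentsClause_i$ meets two of these sets, and shows $X$, $Y$, $Z$ need at least $3$, $4$, $4$ segments respectively. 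Your enumeration for the OR-gadget is essentially the $Y$ and $Z$ step; the paper just organizes it so that no branching on which linking or output segments lie in $\sol$ is needed. What the paper buys is brevity and a one-line argument for part (1); what your approach buys is nothing extra here, though it does make explicit how the ``output'' segment $(t_{i,0},v_{i,0})$ couples the two OR-gadgets, which may be useful intuition.
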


\begin{proof}[Proof of (1).]
No segment in $\segmentsClause_i$ covers more than one point from
$$\{ x_{i, 0}, y_{i, 0}, z_{i, 0}, l_{i, 0}, p_{i, 0}, q_{i, 0},
u_{i, 0}, v_{i, 0} = l_{i, 1}, p_{i, 1}, q_{i, 1}, u_{i, 1}, v_{i, 1} \}.$$
Therefore, we need to use at least $12$ segments.
\end{proof}

\begin{proof}[Proof of (2).]
We can define disjoint sets $X, Y, Z$ such that
$$X \cup Y \cup Z \subseteq \pointsClause{i} \setminus \{x_{i, 0}, y_{i, 0}, z_{i, 0}\}$$
and there are no segments in $\segmentsClause_i$ covering points from different sets.
And we prove a lower bound for each of these sets.

First, let
\begin{equation*}X \coloneqq \{x_{i, 1}, y_{i, 1}, z_{i, 1}\}.\end{equation*}
No two points in $X$ can be covered with one segment
of $\segmentsClause_i$, so it must be covered with 3 different segments.
Next, we define the other sets:
\begin{eqnarray*}
Y & \coloneqq & \pointsOr{i}{0} \setminus \{l_{i, 0}, p_{i, 0}\},\\
Z & \coloneqq & \pointsOr{i}{1} \setminus \{l_{i, 1}, p_{i, 1}\}.
\end{eqnarray*}
For both $Y$ and $Z$ we can check all of the subsets of 3 segments
of $\segmentsClause_i$
to conclude that none of them cover the considered points,
so both $Y$ and $Z$ have to be covered with
disjoint sets of 4 segments~each.

Therefore, $\pointsClause{i} \setminus \{x_{i, 0}, y_{i, 0}, z_{i, 0}\}$
must be covered with at least $3 + 4 + 4 = 11$ segments from $\segmentsClause_i$.
\end{proof}

\subsubsection{Summary}

{

{\tikzset{node/.style={
    prefix after command= {\pgfextra{\tikzset{every
    label/.style={font=\footnotesize}}}}
    }
}
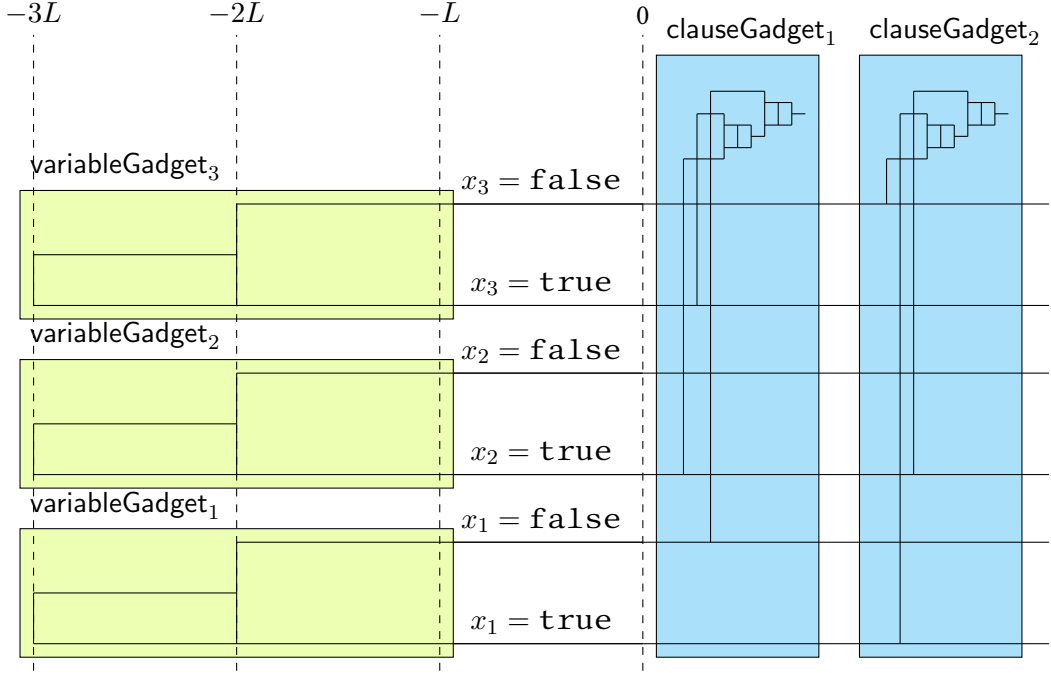
\begin{figure}
\centering
\begin{tikzpicture}[scale=0.9]
\tikzmath{
\x0=0;
\y0=0;
\yalabel=1.5;
\yblabel=2.5;
\yclabel=4.0;
\ydlabel=5.0;
\yelabel=6.5;
\yflabel=7.5;
\yglabel=9.0;
\mepsx0=-0.2;
\yhlabel=1.5;
\xalabel=9;
\xblabel=3;
\xclabel=6;
\yilabel=0.75;
\mepsy0=-0.2;
\epsxclabel=6.2;
\epsyhlabel=1.7;
\yajlabel=1.5;
\yaalabel=2.5;
\yablabel=1.5;
\yaclabel=2.5;
\yadlabel=4.0;
\xdlabel=9;
\xelabel=3;
\xflabel=6;
\yaelabel=3.25;
\mepsyaalabel=2.3;
\epsxflabel=6.2;
\epsyadlabel=4.2;
\yaflabel=1.5;
\yaglabel=2.5;
\yahlabel=4.0;
\yailabel=5.0;
\ybjlabel=1.5;
\ybalabel=2.5;
\ybblabel=4.0;
\ybclabel=5.0;
\ybdlabel=6.5;
\xglabel=9;
\xhlabel=3;
\xilabel=6;
\ybelabel=5.75;
\mepsyailabel=4.8;
\epsxilabel=6.2;
\epsybdlabel=6.7;
\xajlabel=9.4;
\xaalabel=9.4;
\xablabel=11.4;
\ybflabel=1.5;
\ybglabel=2.5;
\ybhlabel=4.0;
\ybilabel=5.0;
\ycjlabel=6.5;
\ycalabel=8.5;
\ycblabel=1.5;
\ycclabel=2.5;
\ycdlabel=4.0;
\ycelabel=5.0;
\ycflabel=6.5;
\xaclabel=9.6;
\ycglabel=7.166666666666667;
\xadlabel=9.8;
\ychlabel=7.833333333333333;
\xaelabel=10.0;
\ycilabel=8.166666666666666;
\xaflabel=10.200000000000001;
\xaglabel=10.8;
\mepsxajlabel=9.200000000000001;
\epsxablabel=11.6;
\epsycalabel=8.7;
\ydjlabel=1.5;
\ydalabel=2.5;
\ydblabel=1.5;
\ydclabel=2.5;
\yddlabel=4.0;
\ydelabel=5.0;
\ydflabel=1.5;
\ydglabel=7.833333333333334;
\ydhlabel=7.333333333333334;
\ydilabel=7.5;
\yejlabel=7.666666666666667;
\xahlabel=10.4;
\xailabel=10.6;
\xbjlabel=10.799999999999999;
\yealabel=7.499999999999999;
\yeblabel=8.166666666666666;
\yeclabel=7.666666666666666;
\yedlabel=7.833333333333332;
\yeelabel=7.999999999999999;
\xbalabel=11.0;
\xbblabel=11.2;
\xbclabel=11.399999999999999;
\xbdlabel=9.4;
\xbelabel=11.4;
\xbflabel=12.4;
\xbglabel=9.4;
\xbhlabel=11.4;
\xbilabel=12.4;
\xcjlabel=14.4;
\yeflabel=1.5;
\yeglabel=2.5;
\yehlabel=4.0;
\yeilabel=5.0;
\yfjlabel=6.5;
\yfalabel=8.5;
\yfblabel=1.5;
\yfclabel=2.5;
\yfdlabel=4.0;
\yfelabel=5.0;
\yfflabel=6.5;
\xcalabel=12.6;
\yfglabel=7.166666666666667;
\xcblabel=12.8;
\yfhlabel=7.833333333333333;
\xcclabel=13.0;
\yfilabel=8.166666666666666;
\xcdlabel=13.200000000000001;
\xcelabel=13.8;
\mepsxbflabel=12.200000000000001;
\epsxcjlabel=14.6;
\epsyfalabel=8.7;
\ygjlabel=1.5;
\ygalabel=2.5;
\ygblabel=4.0;
\ygclabel=5.0;
\ygdlabel=6.5;
\ygelabel=1.5;
\ygflabel=2.5;
\ygglabel=7.833333333333334;
\yghlabel=7.333333333333334;
\ygilabel=7.5;
\yhjlabel=7.666666666666667;
\xcflabel=13.4;
\xcglabel=13.6;
\xchlabel=13.799999999999999;
\yhalabel=7.499999999999999;
\yhblabel=8.166666666666666;
\yhclabel=7.666666666666666;
\yhdlabel=7.833333333333332;
\yhelabel=7.999999999999999;
\xcilabel=14.0;
\xdjlabel=14.2;
\xdalabel=14.399999999999999;
}

\filldraw [fill=cyan!30, draw=black] (\mepsxajlabel,\mepsy0) rectangle (\epsxablabel, \epsycalabel);
\draw (\xaclabel,\ydalabel) -- (\xaclabel,\ycglabel);
\draw (\xaclabel,\ycglabel) -- (\xaflabel,\ycglabel);
\draw (\xadlabel,\ydelabel) -- (\xadlabel,\ychlabel);
\draw (\xadlabel,\ychlabel) -- (\xaflabel,\ychlabel);
\draw (\xaelabel,\ydflabel) -- (\xaelabel,\ycilabel);
\draw (\xaelabel,\ycilabel) -- (\xaglabel,\ycilabel);
\node[above right] at (\mepsxajlabel,\epsycalabel) {$\mathsf{clauseGadget}_1$};
\draw (\xaflabel,\ycglabel) -- (\xaflabel,\ydglabel);
\draw (\xaflabel,\ydhlabel) -- (\xailabel,\ydhlabel);
\draw (\xaflabel,\yejlabel) -- (\xailabel,\yejlabel);
\draw (\xahlabel,\ydhlabel) -- (\xahlabel,\yejlabel);
\draw (\xailabel,\ydhlabel) -- (\xailabel,\yejlabel);
\draw (\xailabel,\ydilabel) -- (\xbjlabel,\ydilabel);
\draw (\xaglabel,\yealabel) -- (\xaglabel,\yeblabel);
\draw (\xaglabel,\yeclabel) -- (\xbblabel,\yeclabel);
\draw (\xaglabel,\yeelabel) -- (\xbblabel,\yeelabel);
\draw (\xbalabel,\yeclabel) -- (\xbalabel,\yeelabel);
\draw (\xbblabel,\yeclabel) -- (\xbblabel,\yeelabel);
\draw (\xbblabel,\yedlabel) -- (\xbclabel,\yedlabel);

\filldraw [fill=cyan!30, draw=black] (\mepsxbflabel,\mepsy0) rectangle (\epsxcjlabel, \epsyfalabel);
\draw (\xcalabel,\ygdlabel) -- (\xcalabel,\yfglabel);
\draw (\xcalabel,\yfglabel) -- (\xcdlabel,\yfglabel);
\draw (\xcblabel,\y0) -- (\xcblabel,\yfhlabel);
\draw (\xcblabel,\yfhlabel) -- (\xcdlabel,\yfhlabel);
\draw (\xcclabel,\ygflabel) -- (\xcclabel,\yfilabel);
\draw (\xcclabel,\yfilabel) -- (\xcelabel,\yfilabel);
\node[above right] at (\mepsxbflabel,\epsyfalabel) {$\mathsf{clauseGadget}_2$};
\draw (\xcdlabel,\yfglabel) -- (\xcdlabel,\ygglabel);
\draw (\xcdlabel,\yghlabel) -- (\xcglabel,\yghlabel);
\draw (\xcdlabel,\yhjlabel) -- (\xcglabel,\yhjlabel);
\draw (\xcflabel,\yghlabel) -- (\xcflabel,\yhjlabel);
\draw (\xcglabel,\yghlabel) -- (\xcglabel,\yhjlabel);
\draw (\xcglabel,\ygilabel) -- (\xchlabel,\ygilabel);
\draw (\xcelabel,\yhalabel) -- (\xcelabel,\yhblabel);
\draw (\xcelabel,\yhclabel) -- (\xdjlabel,\yhclabel);
\draw (\xcelabel,\yhelabel) -- (\xdjlabel,\yhelabel);
\draw (\xcilabel,\yhclabel) -- (\xcilabel,\yhelabel);
\draw (\xdjlabel,\yhclabel) -- (\xdjlabel,\yhelabel);
\draw (\xdjlabel,\yhdlabel) -- (\xdalabel,\yhdlabel);
\draw (\xclabel,\y0) -- (\xalabel,\y0) node[pos=0.5, above] {$x_1 = \true$};
\draw (\xclabel,\yhlabel) -- (\xalabel,\yhlabel) node[pos=0.5, above] {$x_1 = \false$};
\draw (\xalabel,\y0) -- (15,\y0);
\draw (\xalabel,\yhlabel) -- (15,\yhlabel);
\filldraw [fill=lime!30, draw=black] (\mepsx0,\mepsy0) rectangle (\epsxclabel, \epsyhlabel);
\draw (\x0,\y0) -- (\xalabel,\y0);
\draw (\x0,\y0) -- (\x0,\yilabel);
\draw (\xblabel,\y0) -- (\xblabel,\yhlabel);
\draw (\xblabel,\yhlabel) -- (\xalabel,\yhlabel);
\draw (\x0,\yilabel) -- (\xblabel,\yilabel);
\node[above right] at (\mepsx0,\epsyhlabel) {$\mathsf{variableGadget}_1$};
\draw (\xflabel,\yaalabel) -- (\xdlabel,\yaalabel) node[pos=0.5, above] {$x_2 = \true$};
\draw (\xflabel,\yadlabel) -- (\xdlabel,\yadlabel) node[pos=0.5, above] {$x_2 = \false$};
\draw (\xdlabel,\yaalabel) -- (15,\yaalabel);
\draw (\xdlabel,\yadlabel) -- (15,\yadlabel);
\filldraw [fill=lime!30, draw=black] (\mepsx0,\mepsyaalabel) rectangle (\epsxflabel, \epsyadlabel);
\draw (\x0,\yaalabel) -- (\xdlabel,\yaalabel);
\draw (\x0,\yaalabel) -- (\x0,\yaelabel);
\draw (\xelabel,\yaalabel) -- (\xelabel,\yadlabel);
\draw (\xelabel,\yadlabel) -- (\xdlabel,\yadlabel);
\draw (\x0,\yaelabel) -- (\xelabel,\yaelabel);
\node[above right] at (\mepsx0,\epsyadlabel) {$\mathsf{variableGadget}_2$};
\draw (\xilabel,\yailabel) -- (\xglabel,\yailabel) node[pos=0.5, above] {$x_3 = \true$};
\draw (\xilabel,\ybdlabel) -- (\xglabel,\ybdlabel) node[pos=0.5, above] {$x_3 = \false$};
\draw (\xglabel,\yailabel) -- (15,\yailabel);
\draw (\xglabel,\ybdlabel) -- (15,\ybdlabel);
\filldraw [fill=lime!30, draw=black] (\mepsx0,\mepsyailabel) rectangle (\epsxilabel, \epsybdlabel);
\draw (\x0,\yailabel) -- (\xglabel,\yailabel);
\draw (\x0,\yailabel) -- (\x0,\ybelabel);
\draw (\xhlabel,\yailabel) -- (\xhlabel,\ybdlabel);
\draw (\xhlabel,\ybdlabel) -- (\xglabel,\ybdlabel);
\draw (\x0,\ybelabel) -- (\xhlabel,\ybelabel);
\node[above right] at (\mepsx0,\epsybdlabel) {$\mathsf{variableGadget}_3$};

	\draw[dashed] (0, -0.4) -- (0, \yglabel) node[above] {$-3L$};
	\draw[dashed] (3, -0.4) -- (3, \yglabel) node[above] {$-2L$};
	\draw[dashed] (6, -0.4) -- (6, \yglabel) node[above] {$-L$};
	\draw[dashed] (9, -0.4) -- (9, \yglabel) node[above] {0};

\end{tikzpicture}
\caption{Layout of the construction depicting gadgets and their interaction.}
\label{fig:segment_apx_whole}
\end{figure}


Finally we define the set of points and segments for the constructed instance:
$$\points \coloneqq \bigcup_{i=1}^n \pointsVar{i} \cup \bigcup_{i=1}^{m}\pointsClause{i},$$
$$\sets \coloneqq \bigcup_{i=1}^{n} \segmentsVar{i} \cup \bigcup_{i=1}^{m}\segmentsClause_i.$$
The next observation immediately implies property~\ref{lemma:apxconstruction:enumerate:extension}.
\begin{lemma}[Robustness to $\frac{1}{2}$-extension]
\label{lemma:exntension_robust}
For every segment $s \in \sets$,
$s$ and $s^{+\frac{1}{2}}$ cover the same points from~$\points$.
\end{lemma}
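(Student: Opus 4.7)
The plan is a case analysis over the segment families in $\sets$. First, I would unpack the homothety definition: for any segment $s$ of length $\ell$, the set $s^{+\frac{1}{2}}$ is exactly the \emph{open} segment sharing the supporting line of $s$ and extending $s$ on each side by strictly less than $\ell/4$ (the boundary scale $\epsilon=\tfrac{1}{2}$ is excluded from the union, so neither new endpoint is attained). Since $s\subseteq s^{+\frac{1}{2}}$ via $\epsilon=0$, the lemma is equivalent to the statement that for every $s\in\sets$ and every point $p\in\points$ on the supporting line of $s$ but outside $s$, the distance from $p$ to the nearer endpoint of $s$ is at least $\ell/4$.

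I would then dispatch the easy families. The OR-gadget segments ($\chooseOr{\true}{i}{j}\cup\chooseOr{\false}{i}{j}\cup\orMoveVariable{i}{j}$), the clause-gadget move segments ($\moveVariableSegments{i}$), and the short variable-gadget segments of lengths $1$ or $2$ all have length at most $4$. Their supporting lines intersect $\points$ at positions dictated by the explicit integer-grid gadget definitions, and a direct inspection---relying on the vertical spacing of $4$ between consecutive variable-gadgets, the confinement of OR-gadget points to $y\geq 4(n+1)$, and the horizontal spacing of $\clauseXFactor$ between consecutive clause-gadgets---shows that the nearest point of $\points$ beyond any endpoint of $s$ is at Euclidean distance at least $1$ along the supporting line, comfortably exceeding $\ell/4$. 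The medium-length variable segment $(d_i,e_i)$ of length $L$ is trivial because its supporting line $y=4i+1$ contains no point of $\points$ besides $d_i$ and $e_i$.

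The nontrivial cases are the long horizontal variable-gadget segments $(a_i,c_i)$ and $\xTrueSegment{i}=(c_i,g_i)$ of length $2L$, and $\xFalseSegment{i}=(f_i,h_i)$ of length $3L$, whose extensions reach by strictly less than $L/2$, $L/2$, and $3L/4$ respectively on each side. The supporting lines $y=4i$ and $y=4i+2$ can contain clause-gadget points $x_{i',0},y_{i',0},z_{i',0}$ (precisely when the corresponding literal involves $x_i$); these points have $x$-coordinates in $[\clauseXFactor,\clauseXFactor m+2]\subseteq[20,\tfrac{100n}{3}+2]$. The critical calculation is $\tfrac{100n}{3}+2<L=100n$, which forces every such clause-gadget point to lie strictly inside the original $\xTrueSegment{i}$ or $\xFalseSegment{i}$; similarly the right extension of $(a_i,c_i)$ reaches only into the interval $(-L,-L/2)$ on $y=4i$, which contains no point of $\points$. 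On the opposite side of each long segment, the nearest point of $\points$ on the supporting line is at distance at least $L$ from the corresponding endpoint (for example, $b_i=(-2L,4i)$ lies at distance $L$ from $c_i$), exceeding the extension reach of strictly less than $\ell/4$. The main obstacle is precisely this asymmetric coordinate check against the clause-gadget positions; once it is confirmed, every remaining case follows from elementary integer arithmetic on the gadget coordinates.
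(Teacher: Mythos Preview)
Your overall strategy is the same as the paper's: a case analysis showing that for each segment $s$, no point of $\points$ on the supporting line of $s$ lies within distance strictly less than $|s|/4$ of an endpoint of $s$. The long-segment analysis (for $(a_i,c_i)$, $\xTrueSegment{i}$, $\xFalseSegment{i}$) is correct and is indeed the heart of the verification.

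There is, however, one concrete error. You group all of $\moveVariableSegments{i}$ among the ``length at most $4$'' segments, but the three \emph{vertical} move segments $(x_{i,0},x_{i,1})$, $(y_{i,0},y_{i,1})$, $(z_{i,0},z_{i,1})$ run from height $4\cdot\idx(\cdot)+2\cdot\negate(\cdot)$ up to height at least $4(n+1)$, so their length can be as large as $4n$. For these the bound ``nearest external point at distance $\geq 1 \geq \ell/4$'' fails outright. The correct argument for these three segments is different and in fact simpler: their supporting vertical lines have $x$-coordinate $20i$, $20i+1$, or $20i+2$, and one checks directly from the coordinate definitions that no point of $\points$ other than the two endpoints lies on any of these lines (variable-gadget points have $x\in\{-3L,-2L,-L\}$, OR-gadget points have $x\in[20i'+3,20i'+9]$, and the remaining clause-gadget points with matching $x$-coordinate are precisely the endpoints themselves). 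Hence extension is irrelevant for them. Once you replace the erroneous length bound with this observation, the rest of your case analysis goes through.
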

\begin{proof}
 A straightforward verification of all types of segments used in the construction, we leave the details to the reader. (Recall here that the segment $s^{+\frac{1}{2}}$ is considered without endpoints, this detail is necessary in a few checks.)
\end{proof}

\subsection{Correctness}

We are left with arguing the correctness of the construction, that is, properties \ref{item:apxconstruction_correctness} and \ref{item:apxconstruction_completness}.
Property \ref{item:apxconstruction_correctness} is proved in the following lemma.

\begin{lemma}
	\label{construction_correctness}
	Suppose there is a boolean assignment $\eta \colon \{ x_1, x_2, \ldots, x_n\} \rightarrow \{\true, \false\}$ that satisfies $m-k$ clauses in $\varphi$. Then
	the instance $\setCoverInstance$ of $\SegmentSetCover$
	admits a solution of cardinality $\frac{64}{3}n + k$.
\end{lemma}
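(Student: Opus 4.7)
The plan is to construct the desired solution $\sol$ by taking a union of choices made locally in each variable-gadget and each clause-gadget, using the assignment $\eta$ as a guide. Concretely, for every variable $x_i$ we add to $\sol$ the set $\chooseVar{\true}{i}$ if $\eta(x_i)=\true$ and $\chooseVar{\false}{i}$ otherwise; by Lemma~\ref{choose_variables_solution}, these $3n$ segments cover all points in $\bigcup_{i=1}^n \pointsVar{i}$. Then for every clause $C_i$ we proceed as follows: if $\eta$ satisfies $C_i$, pick any literal $w \in \{a,b,c\}$ of $C_i$ that is satisfied by $\eta$, and add to $\sol$ the set $\segmentsClauseSolTrue{p}_i$ of size $11$ given by Lemma~\ref{cover_clauses_solution_true}, where $p$ is the point $x_{i,0}$, $y_{i,0}$, or $z_{i,0}$ corresponding to $w$; otherwise (i.e.\ $\eta$ does not satisfy $C_i$) add the $12$-element set $\segmentsClauseSolFalse_i$ from Lemma~\ref{cover_clauses_solution_false}.

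For the cardinality count, with $m-k$ satisfied clauses and $k$ unsatisfied ones we get
\[
|\sol| \le 3n + 11(m-k) + 12k = 3n + 11m + k = 3n + 11\cdot\tfrac{5}{3}n + k = \tfrac{64}{3}n + k,
\]
using $m=\tfrac{5}{3}n$. This gives exactly the target bound.

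It remains to verify that $\sol$ covers $\points$. By the first paragraph, $\bigcup_{i=1}^n \pointsVar{i}$ is fully covered, and for each clause $C_i$ that is not satisfied by $\eta$, the set $\segmentsClauseSolFalse_i$ covers all of $\pointsClause{i}$. The only subtlety, and the main obstacle in this proof, is the case of a satisfied clause $C_i$: here $\segmentsClauseSolTrue{p}_i$ covers $\pointsClause{i}\setminus\{p\}$, and we must argue that the one remaining point $p$ is already covered by segments added in the variable-gadget step. For this, one inspects the coordinates: if $w$ is the chosen satisfied literal and $j = \idx(w)$, then $p$ has $y$-coordinate $4j + 2\cdot \negate(w)$ and $x$-coordinate in $\{\clauseXFactor i, \clauseXFactor i+1, \clauseXFactor i+2\}$. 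When $w$ is a positive literal satisfied by $\eta$, then $\eta(x_j)=\true$, so $\xTrueSegment{j}=(c_j,g_j) \in \chooseVar{\true}{j}\subseteq \sol$; this horizontal segment lives on the line $y=4j$ between $x=-L$ and $x=L$, and since $\clauseXFactor i + 2 \le 20m+2 < 100n = L$, it does contain $p$. The analogous check applies when $w$ is a negative literal satisfied by $\eta$, using $\xFalseSegment{j}\in \chooseVar{\false}{j}\subseteq \sol$. This finishes the verification that $\sol$ covers $\points$, completing the proof.
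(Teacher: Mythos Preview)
Your proof is correct and follows essentially the same approach as the paper: choose $\chooseVar{\eta(x_i)}{i}$ in each variable-gadget, and $\segmentsClauseSolTrue{p}_i$ or $\segmentsClauseSolFalse_i$ in each clause-gadget according to whether $\eta$ satisfies $C_i$, then count. You actually go further than the paper by spelling out explicitly why the omitted point $p$ is covered by the appropriate $\xTrueSegment{j}$ or $\xFalseSegment{j}$ via a coordinate check; the paper simply declares this verification ``straightforward''.
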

\begin{proof}
We cover every variable-gadget with solution described in
Lemma~\ref{choose_variables_solution}, where
in the $i$-th gadget we choose the set of segments corresponding to the
value of $\eta(x_i)$.

For every clause that is satisfied, say $C_i$, let $\ell_i$ be the literal satisfies $C_i$, and let the point corresponding to $\ell_i$ in $\pointsClause{i}$ be $a$ (this is a point of the form $x_{i,0}$, $y_{i,0}$, or $z_{i,0}$).
Points in $\pointsClause{i}$
can be covered with set $\segmentsClauseSolTrue{a}_i$ described in
Lemma~\ref{cover_clauses_solution_true}.
For every clause that is not satisfied, say $C_j$,
points in $\pointsClause{j}$ can be covered
with the set $\segmentsClauseSolFalse_j$ described in
Lemma~\ref{cover_clauses_solution_false}.

\newcommand{\Qq}{\mathcal{Q}}

Formally, we define
sets $\Rr_i$ and $\Qq_i$ responsible for covering variable and clause gadgets as follows:
\begin{align*}
	\Rr_i & \coloneqq \begin{cases}
		\chooseVar{\true}{i} & \text{if}\ \eta(x_i) = \true, \\
		\chooseVar{\false}{i} & \text{if}\ \eta(x_i) = \false. \\
		\end{cases} \\
	\Qq_i & \coloneqq \begin{cases}
		\segmentsClauseSolTrue{a}_i & \text{if}\ C_i \text{ satisfied in }\eta\text{ by the literal corresponding to } a, \\
		\segmentsClauseSolFalse_i & \text{if}\ C_i \text{ not satisfied in }\eta.
		\end{cases}
\end{align*}
Then, we define
\begin{align*}
\sol & \coloneqq \bigcup_{i=1}^{n} \Rr_i \cup \bigcup_{i=1}^m \Qq_i.
\end{align*}
It is then straightforward to verify using respective lemmas provided in Section~\ref{construction_description} that $\sol$ constructed as above covers all the points from $\points$.
All of these sets $\Rr_i$ and $\Qq_i$ are disjoint, so the size of the obtained solution is:
$$|\sol| = \sum_{i=1}^{n} |\Rr_i| + \sum_{i=1}^m |\Qq_i| = 3n + 11(m-k) + 12k = \frac{64}{3}n + k.$$
In the last equality, we used the fact that $m=\frac{5}{3}n$.
\end{proof}

\begin{lemma}	\label{construction_completness}
	Suppose there is a solution $\sol$ of the instance $\setCoverInstance$
	of $\SegmentSetCover$ of cardinality $\frac{64}{3}n+k$.
	Then there is a boolean assignment $\eta\colon \{x_1,\ldots,x_n\}\to \{\true,\false\}$
	that satisfies at least $m-5k$ clauses.
\end{lemma}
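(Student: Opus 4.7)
The plan is to decompose $\sol$ by gadget and track how much its cardinality exceeds the minima of $3$ per variable gadget and $11$ per clause gadget. Set $\Rr_i \coloneqq \sol \cap \segmentsVar{i}$ and $\Qq_j \coloneqq \sol \cap \segmentsClause_j$; since variable and clause gadgets occupy disjoint regions of the plane, every segment of $\sol$ belongs to exactly one of these subfamilies. A short geometric check confirms that $\Rr_i$ alone must cover $\pointsVar{i}$, while $\Qq_j$ alone must cover $\pointsClause{j}$ with the possible exception of the three ``zero points'' $x_{j,0},y_{j,0},z_{j,0}$, which lie precisely on the ``satisfying'' segments $\xTrueSegment{\idx(w)}$ or $\xFalseSegment{\idx(w)}$ corresponding to the literals $w$ of $C_j$. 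I will call $C_j$ \emph{helped} when some zero point of $C_j$ is covered from outside $\Qq_j$, equivalently when the satisfying segment of some literal of $C_j$ lies in $\sol$.

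Applying Lemmas \ref{choose_variables_no_less} and \ref{cover_clauses_segments_no_less}, I get $|\Rr_i|\ge 3$, $|\Qq_j|\ge 11$, and $|\Qq_j|\ge 12$ for unhelped clauses. Setting $e_i\coloneqq|\Rr_i|-3$ and $f_j\coloneqq|\Qq_j|-11$ and using $m=\tfrac{5}{3}n$, the assumed bound $|\sol|=\tfrac{64}{3}n+k$ rearranges to $\sum_i e_i+\sum_j f_j=k$. Next I will introduce the \emph{doubly-used} variables $D\coloneqq\{i:\xTrueSegment{i}\in\sol \text{ and }\xFalseSegment{i}\in\sol\}$; Lemma \ref{choose_variables_both} then yields $|\Rr_i|\ge 4$ for every $i\in D$, so $|D|\le\sum e_i$. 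Letting $U$ be the number of unhelped clauses, also $U\le\sum f_j$, and combining the two inequalities gives $|D|+U\le k$.

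Now I will take the natural assignment $\eta(x_i)\coloneqq\true$ if $\xTrueSegment{i}\in\sol$ and $\eta(x_i)\coloneqq\false$ otherwise. The key claim, to be verified by a short case split, is: if $C_j$ is helped and contains no literal on a variable of $D$, then $\eta$ satisfies $C_j$. Indeed, fix a helping literal $w$ of $C_j$ with $i=\idx(w)\notin D$; if $w=x_i$ then $\xTrueSegment{i}\in\sol$ forces $\eta(x_i)=\true$, and if $w=\neg x_i$ then $\xFalseSegment{i}\in\sol$ together with $i\notin D$ yields $\xTrueSegment{i}\notin\sol$, hence $\eta(x_i)=\false$; in either case $w$, and thus $C_j$, is satisfied. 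Since $\varphi$ is an (E3,E5)-formula, each variable appears in exactly $5$ clauses, so at most $5|D|$ clauses contain any literal on a $D$-variable. Consequently the count of $\eta$-satisfied clauses is at least
\[
(m-U)-5|D|\;\ge\;m-U-5(k-U)\;=\;m-5k+4U\;\ge\;m-5k.
\]

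The delicate step that I expect to require the most care is the asymmetry in the case analysis at doubly-used variables: the rule ``$\eta(x_i)=\true$ iff $\xTrueSegment{i}\in\sol$'' silently discards any help provided by $\xFalseSegment{i}$ when both segments of gadget $i$ lie in $\sol$, and this is precisely the loss absorbed by the factor $5$ in the $5|D|$ bound (each $D$-variable touches up to $5$ clauses of $\varphi$). Once this observation is in place, everything else is bookkeeping with the budget identity $\sum e_i+\sum f_j=k$ and the derived inequality $|D|+U\le k$.
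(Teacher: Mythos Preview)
Your proof is correct and follows essentially the same approach as the paper's. Both arguments decompose $\sol$ by gadget, use Lemmas~\ref{choose_variables_no_less}, \ref{choose_variables_both}, and \ref{cover_clauses_segments_no_less} to bound the per-gadget excess over the minima $3$ and $11$, and then charge each unsatisfied clause either to an ``expensive'' clause gadget or to one of at most five clauses touching an ``expensive'' variable gadget. The paper phrases this via the notions of \emph{overpaid} variables ($|\Rr_i|\ge 4$) and \emph{overpaid} clauses ($|\Qq_j|\ge 12$), while you use the slightly sharper sets $D$ (variables with both long segments in $\sol$) and $U$ (unhelped clauses); since $D$ is contained in the overpaid variables and unhelped clauses are overpaid, the two accountings are equivalent for the final bound $m-5k$.
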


\begin{proof}
Call a variable $x_i$, $i\in \{1,\ldots,n\}$, {\em{overpaid}} if $|\segmentsVar{i} \cap \sol| \ge 4$. Similarly, call a clause $C_i$, $i\in \{1,\ldots,m\}$, {\em{overpaid}} if $|\segmentsClause_{i} \cap \sol| \ge 12$. Let $p$ and $q$ be the  number of overpaid variables and the number of overpaid clauses, respectively. By Lemma~\ref{choose_variables_no_less} we have $|\segmentsVar{i} \cap \sol| \ge 3$ for every $i\in \{1,\ldots,n\}$, and by Lemma~\ref{cover_clauses_segments_no_less} we have $|\segmentsClause_i \cap \sol| \ge 11$ for all $i\in \{1,\ldots,m\}$. Hence, we have
$$\frac{64}{3}n+k=|\sol|=\geq 3n+11m+p+q=\frac{64}{3}n+p+q,$$
implying that
$$k\geq p+q.$$

We define the following assignment $\eta$. For $i\in \{1,\ldots,n\}$, if $x_i$ is not overpaid then
\begin{align*}
\eta(x_i) \coloneqq & \begin{cases}
	\true & \text{if}\ \xTrueSegment{i} \in \sol, \\
	\false & \text{otherwise}
	\end{cases}
\end{align*}
Otherwise, if $x_i$ is overpaid, then we simply set
\begin{align*}
\eta(x_i) \coloneqq & \true.
\end{align*}

Let us now bound the number of clauses satisfied by $\eta$. Consider a clause $C_i$, $i\in \{1,\ldots,m\}$. Note that if $C_i$ is not satisfied in $\eta$ and none of the variables involved in $C_i$ is overpaid, then none of the points $x_{i,0},y_{i,0},z_{i,0}$ is covered by segments belonging to variable gadget. By Lemma~\ref{cover_clauses_segments_no_less}, this implies that at least one of the following assertions must hold for every $C_i$ that is not satisfied in $\eta$:
\begin{itemize}[nosep]
 \item $C_i$ is overpaid, or
 \item one of the variables involved in $C_i$ is overpaid.
\end{itemize}
Since every variable appears in exactly $5$ clauses, it follows that the total number of clauses not satisfied in $\eta$ is at most $5p+q$. Put differently, the assignment $\eta$ satisfies at least $m-5p-q$ clauses. As $5p+q\leq 5k$, we conclude that $\eta$ satisfies at least $m-5k$ clauses, as promised.
\end{proof}

Now Lemma~\ref{apxconstruction} follows immediately from
Lemmas~\ref{lemma:exntension_robust}, \ref{construction_correctness}, and
\ref{construction_completness}.

 \paragraph*{Acknowledgements.} The authors would like to thank Krzysztof Maziarz for help with proofreading the manuscript.

\bibliography{bibl}

\end{document}